\newcolumntype{L}[1]{>{\raggedright\arraybackslash}m{#1}}
\newcolumntype{C}[1]{>{\centering\arraybackslash}m{#1}}
\newcolumntype{R}[1]{>{\raggedleft\arraybackslash}m{#1}}
\newcommand{\cG}{\mathcal{G}}
\newcommand{\cO}{\mathcal{O}}
\newcommand{\cU}{\mathcal{U}}
\newcommand{\cV}{\mathcal{V}}
\newcommand{\cL}{\mathcal{L}}
\newcommand{\cN}{\mathcal{N}}
\newcommand{\E}{\mathbb{E}}
\newcommand{\R}{\mathbb{R}}
\newcommand{\Var}{\mathrm{Var}}
\newcommand{\MSE}{\mathrm{MSE}}
\newcommand{\Bias}{\mathrm{Bias}}
\renewcommand{\P}{\mathbb{P}}
\renewcommand{\d}{\mathrm{d}}
\newcommand{\1}{\mathbbm{1}}
\newcommand{\bS}{\bm{S}}
\newcommand{\avgmj}{\frac{1}{m}\sum_{j=1}^{m}}
\newcommand{\avgni}{\frac{1}{n}\sum_{i=1}^{n}}
\newcommand{\Li}{L_{i}}
\newcommand{\Lmi}{L_{m,i}}
\newcommand{\hatH}{\widehat{H}}
\newcommand{\hatHij}{\hatH_{ij}}
\newcommand{\hatR}{\widehat{R}}
\newcommand{\ftilde}{\widetilde{f}}
\newcommand{\geps}{g_\epsilon}
\newcommand{\gepsm}{g_{\epsilon_m}}
\newcommand{\condist}{\stackrel{d}{\rightarrow}}
\newcommand{\conprob}{\stackrel{p}{\rightarrow}}
\newcommand{\conlone}{\stackrel{\cL^1}{\rightarrow}}
\newtheorem{theorem}{Theorem}[section]
\newtheorem{proposition}[theorem]{Proposition}
\newtheorem{lemma}[theorem]{Lemma}
\newtheorem{corollary}[theorem]{Corollary}
\newtheorem{definition}[theorem]{Definition}
\newtheorem{assumption}{Assumption}[section]
\numberwithin{equation}{section}
\title{Sample Recycling for Nested Simulation\\
	with Application in Portfolio Risk Measurement}
\author[1]{Kun Zhang\footnote{kunzhang@ruc.edu.cn}}
\author[2]{Ben M. Feng\footnote{ben.feng@uwaterloo.ca}}
\author[3]{Guangwu Liu\footnote{msgw.liu@cityu.edu.hk}}
\author[3]{Shiyu Wang\footnote{shiyuwang7-c@my.cityu.edu.hk}}
\affil[1]{Institute of Statistics and Big Data\protect\\Renmin University of China\protect\\Beijing, China}
\affil[2]{Department of Statistics and Actuarial Science\protect\\University of Waterloo\protect\\Waterloo, ON, Canada}
\affil[3]{Department of Management Sciences\protect\\City University of Hong Kong\protect\\Tat Chee Avenue, Kowloon, Hong Kong, China}
\begin{document}
\normalsize

\maketitle

\begin{abstract}
Nested simulation is a natural approach to tackle nested estimation problems in operations research and financial engineering.
The outer-level simulation generates outer scenarios and the inner-level simulations are ran in each outer scenario to estimate the corresponding conditional expectation.
The resulting sample of conditional expectations is then used to estimate different risk measures of interest.
Despite its flexibility, nested simulation is notorious for its heavy computational burden.
We introduce a novel simulation procedure that reuses inner simulation outputs to improve the efficiency and accuracy in solving nested estimation problems. We analyze the convergence rates of the bias, variance, and MSE of the resulting estimator. In addition, central limit theorems and variance estimators are presented, which lead to asymptotically valid confidence intervals for the nested risk measure of interest.
We conduct numerical studies on two financial risk measurement problems.
Our numerical studies show consistent results with the asymptotic analysis and show that the proposed approach outperforms the standard nested simulation and a state-of-art regression approach for nested estimation problems.
\\
\emph{Key words}: nested simulation, risk management, likelihood ratio method, central limit theorem, confidence interval
\end{abstract}

\section{Introduction}\label{sec:intro}

\textit{Nested estimation} is the problem of estimating a functional of a conditional expectation.
In this study, we propose and analyze an efficient simulation method for a class of nested estimation problems.
Specifically, the quantity to be estimated is
\begin{equation}\label{eq:rho}
\rho = \rho(\E\left[H(X,Y)|X\right]) = \E\left[g(\E\left[H(X,Y)|X\right])\right],
\end{equation}
where $X$ and $Y$ are both random vectors of fixed dimensions, $H(\cdot,\cdot)$ is a multi-variate mapping, and $g(\cdot)$ is a real-value function.
In a nested simulation, we call $X$ the outer scenario, $Y$ the inner-level random variable, $H(\cdot,\cdot)$ the inner simulation model, and $g(\cdot)$ the risk function.
Nested estimation~\citep{hong2017kernel} has important applications in operations research, such as risk measurement~\citep{lee1998monte,gordy2010nested} and input uncertainty quantification~\citep{cheng1997sensitivity,barton2012,zhu2020risk}.

\textit{Nested simulation}~\citep{gordy2010nested,broadie2011efficient}, which is also known as two-level and stochastic-on-stochastic simulation, is a natural solution for the above nested estimation problems:
Consider measuring some risk measures of a portfolio of financial instruments whose values are affected by different risk factors such as equity returns, interest rates, mortality rates, etc.
In this case, $X$ represents the evolution of the underlying risk factors up to a future time (i.e., the \textit{risk horizon}), say in one month, when risk measurement is required.
The outer-level simulation generates $n$ realizations of $X$, which are called the \textit{scenarios}.
Given a scenario $X$, $Y|X$ denotes the risk factors' evolution between the risk horizon and the portfolio's maturity, say in one year, $H(X,Y)$ denotes the (discounted) loss of the portfolio at maturity, and $\E[H(X,Y)|X]$ denotes the portfolio's mark-to-market loss at the risk horizon.
For each scenario $X$, an inner simulation is performed where $m'$ sample paths of $Y|X$ are generated.
The discounted losses $H(X,Y)$ can then be calculated, whose sample average can be used to estimate the loss of scenario $X$, i.e., $\E[H(X,Y)|X]$.
As $X$ is stochastic, so is $\E[H(X,Y)|X]$.

Depending on the risk function $g(\cdot)$, the nested estimation problem~\eqref{eq:rho} can be used to estimate popular risk measures like the exceedance probability, conditional value-at-risk (CVaR), and squared tracking error of $\E[H(X,Y)|X]$.
For example, for an indicator function $g(x)=\1\{x\geq x_0\}$ and a quadratic function $g(x)=(x-x_0)^2$ for some threshold $x_0$, $\rho(\E\left[H(X,Y)|X\right])$ is the exceedance probability beyond $x_0$ and the squared tracking error, respectively.
For a hockey-stick function $g(x)= x_0 + \frac{1}{1-\alpha}\max\{x-x_0, 0\}$ where $x_0$ is the $\alpha$-Value-at-Risk (VaR) of $\E\left[H(X,Y)|X\right]$, then $\rho(\E\left[H(X,Y)|X\right])$ is the $\alpha$-CVaR.
Interested readers can refer to~\cite{broadie2015risk} and~\cite{hong2017kernel} on nested estimation for these risk measures.

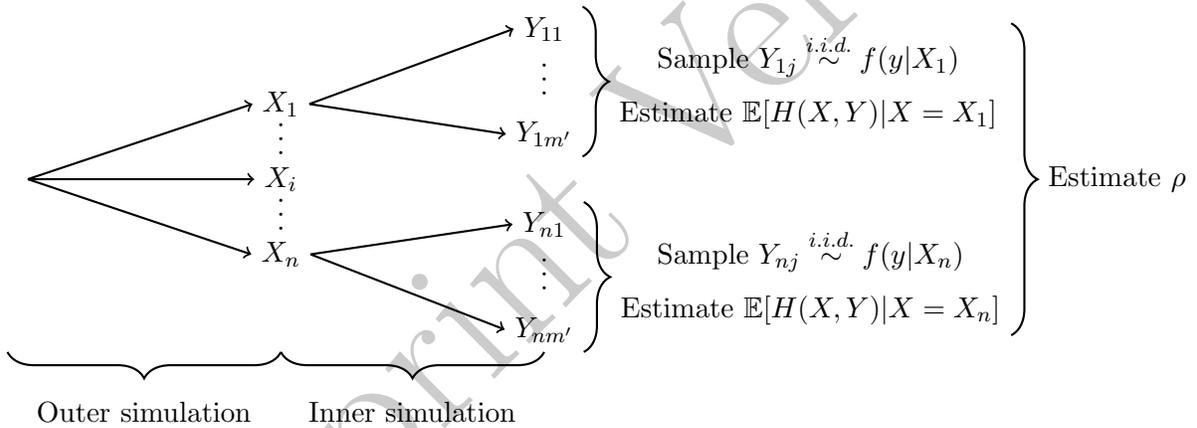
\begin{figure}[h!]
	\newcommand{\frontheightsind}{1/1, 0/i, -1/n}
	\newcommand{\backheightsind}{2/{11}, 0.6/{1m'}, -0.6/{n1}, -2/{nm'}}
	\def\xa{0}
	\def\xb{3.5}
	\def\xc{7}
	\begin{center}
		\begin{tikzpicture}
			\draw (\xa, 0) node(s0) {};
			\foreach \y/\ind in \frontheightsind{
				\draw (\xb,\y) node(st\ind) {$X_{\ind}$};
				\draw[thick,->] (s0.east) -- (st\ind.west);
			}
			
			\foreach \y/\ind in \backheightsind{
				\draw (\xc,\y) node(sp\ind) {$Y_{\ind}$};
			}
			
			\foreach \indori/\inddes in {st1/sti, sti/stn, sp11/sp1m', spn1/spnm'}{
				\path (\indori) -- (\inddes) node [font=\large, midway, sloped] {$\dots$};	
			}
			
			\foreach \indori/\inddes in {st1/sp11, st1/sp1m', stn/spn1, stn/spnm'}{
				\draw[thick,->] (\indori.east) -- (\inddes.west);
			}
			
			\draw [thick, decorate,decoration={brace,amplitude=10pt}](sp11.north -| sp1m'.east) -- (sp1m'.south east) node[black,right, midway,xshift=0.35cm, align=center] (inner1) {Sample $Y_{1j}\stackrel{i.i.d.}{\sim} f(y|X_1)$\\Estimate $\E[H(X,Y)|X=X_1]$};
			
			\draw [thick, decorate,decoration={brace,amplitude=10pt}](spn1.north -| spnm'.east) -- (spnm'.south -| spnm'.east) node[black,right, midway,xshift=0.35cm, align=center] (innerm) {Sample $Y_{nj}\stackrel{i.i.d.}{\sim} f(y|X_n)$\\Estimate $\E[H(X,Y)|X=X_n]$};
			
			\draw [thick, decorate,decoration={brace,amplitude=10pt}](inner1.north -| innerm.east) -- (innerm.south east) node[black,right, midway,xshift=0.35cm, align=center] {Estimate $\rho$};
			
			\draw [thick, decorate,decoration={brace,amplitude=10pt,mirror}](spnm'.south -| s0.west) -- (spnm'.south -| stn.center) node[black,midway,yshift=-0.8cm] {Outer simulation};
			\draw [thick, decorate,decoration={brace,amplitude=10pt,mirror}](spnm'.south -| stn.center) -- (spnm'.south) node[black,midway,yshift=-0.8cm] {Inner simulation};
		\end{tikzpicture}
	\end{center}
	\caption{Schematic illustration of standard two-level nested simulation. The outer stage generates $n$ scenarios $X_1,\ldots,X_n$. Conditional on $X_i$, $m'$ inner replications $Y_{i1},\ldots,Y_{im'}$ are generated.}
	\label{fig:NestedSim}
\end{figure}

Figure~\ref{fig:NestedSim} is a schematic illustration of a standard nested simulation procedure.
The standard nested simulation procedure estimates $\E[H|X]$ for each scenario $X$ by considering the inner replications of that scenario only.
This exclusivity leads to the nested structure, which then requires $\Gamma = m'n$ inner replications in total, e.g., $Y_{ij}$ and $H(X_i,Y_{ij})$ for $i=1,\ldots,n$ and $j=1,\ldots,m'$; $\Gamma$ is called the simulation budget.
In theory, the risk estimator in a nested simulation procedure converges to the true risk measure as the numbers of outer and inner simulations grow.
However, depending on the complexity of the risk factor models and the derivative payoffs, every inner replication can be quite time-consuming to compute.
So, in practice, the simulation budget $\Gamma$ can be an excessive computational burden and unbearably large computations may be required to achieve satisfactory accuracy.


Alleviating the computational burden, by different means and in different ways, has attracted much research attention in the simulation literature.
Firstly, some studies focus on intelligent allocations of a fixed simulation budget $\Gamma$ so that the resulting risk measure $\rho$ is accurately estimated.
\cite{lee1998monte},~\cite{lee2003computing}, and~\cite{gordy2010nested} analyze the nested simulation estimator and demonstrate that, under some assumptions, the asymptotic mean squared error (MSE) of the standard nested risk estimator diminishes at an optimal rate of $\Gamma^{-2/3}$;~\cite{gordy2010nested}  shows that this optimal convergence rate is achieved when $m'=\cO(\Gamma^{1/3})$ and $n=\cO(\Gamma^{2/3})$ as $\Gamma\rightarrow \infty$.
~\cite{broadie2011efficient} proposes a sequential allocation scheme where different outer scenarios have different number of inner replications when estimating the probabilities of large portfolio losses.
The MSE of the resulting risk estimator is shown to have a rate of convergence of $\Gamma^{-4/5+\varepsilon}$ for any $\varepsilon>0$.
~\cite{liu2010efficient} and~\cite{lan2010confidence} use ranking-and-selection techniques to adaptively allocate the simulation budget to estimate CVaR and its confidence interval, respectively.

A second line of research aims to reduce the standard nested simulation's computational burden by estimating $\E[H|X]$ via regression or metamodeling techniques.
For example, least-square Monte Carlo (LSMC) \citep{longstaff2001valuing, tsitsiklis2001regression} is a quintessential parametric approach for pricing American options, where a regression model is used to approximate the conditional expectation $\E[H|X]$.
See also~\cite{carriere1996valuation} for a general discussion of nonparametric regression techniques in Monte Carlo simulation.
\cite{broadie2015risk} applies this LSMC approach in nested estimation of financial risk and shows that the MSE of the resulting risk estimator converges at the order of $\Gamma^{-1+\delta}$ for any $\delta>0$.
Despite fast convergence rate, the MSE generally converges to a nonzero asymptotic squared bias that depends on the selection of basis functions.
~\cite{liu2010stochastic} considers a metamodeling approach that estimates $\E[H|X]$ by a stochastic kriging model~\citep{ankenman2010stochastic}.
Besides selecting appropriate basis functions and covariance functions, the implementation of stochastic kriging is not trivial and may be prone to numerical instability~\citep{staum2009better}.
\cite{hong2017kernel} proposes a kernel smoothing approach, which estimates $\E[H|X]$ by the well-known Nadaraya-Watson kernel estimator~\citep{nadaraya1964estimating,watson1964smooth}.
The MSE of the resulting risk estimator achieves a convergence rate of $\Gamma^{-\min\{1,4/(d+2)\}}$, where $d$ is the problem dimension.
These approaches use simulation outputs from different scenarios, sometimes from a pilot experiment, to calibrate the regression model or metamodel that approximates or predicts $\E[H|X]$ for different scenarios.
While the pooling of simulation outputs improves simulation efficiency, these approaches suffer from modeling errors that depend on selection of basis functions, covariance functions, or kernel bandwidth.
As a result, these approaches lead to biased estimators; sometimes this bias vanishes asymptotically, sometimes the bias persists.

In this article we study a novel simulation procedure, called the green nested simulation (GNS) procedure, that pools inner simulation outputs from different outer scenarios but avoids the difficulties in the regression- and metamodeling-based techniques.
The contributions of our study include:
\begin{enumerate}
	\item We propose an efficient simulation procedure that is non-nested in nature and recycles the same set of inner simulation outputs via the likelihood ratio method to estimate $\E[H|X]$ in different scenarios.
	The proposed procedure does not require any model selection or calibration.
	
	
	\item We establish that the asymptotic bias, variance, and MSE of the risk estimator all converge to zero at rate $\cO(\Gamma^{-1})$.
	This convergence rate is faster than that of nested stimulation with optimal allocation and that of the kernel-based approach.
	Most importantly, $\cO(\Gamma^{-1})$ is the same fast convergence rate as a non-nested Monte Carlo simulation.
	
	\item We establish central limit theorem (CLT) and valid variance estimates for the nested simulation estimators for different forms of $\rho$.
	These results enable users to construct valid confidence intervals for nested simulation estimators without running macro replications.
	The analysis is non-trivial considering that all conditional expectations are estimated using the same set of inner simulation outputs thus are all correlated.
	
	
	
\end{enumerate}

In essence, the GNS procedure recycles the same set of simulation outputs, via the likelihood ratio method~\citep{beckman1987monte,l1990unified}, to estimate the conditional expectation $\E[H|X]$ for different scenarios $X$.
The GNS procedure is inspired by green simulation~\citep{feng2017green} and likelihood ratio metamodeling~\citep{dong2018unbiased}, which improve simulation efficiency by reusing simulation outputs.
Stochastic mesh for American option pricing~\citep{broadie2000pricing,broadie2004stochastic,avramidis1999efficiency,avramidis2004convergence} is also an application of the likelihood ratio method.
The GNS procedure and the stochastic mesh are mathematically similar but the two approaches tackle different problems, serve different purposes, and are applied in different contexts.
The former aims to solve nested estimation problems (risk measurement) while the latter solves a dynamic programming problem (American option pricing).

The rest of this paper is organized as follows.
The problem statement and general mathematical framework are given in Section~\ref{sec:problem}.
Sections~\ref{sec:AnalysisLoss} and~\ref{sec:AnalysisRisk} present the main asymptotic analyses: Section~\ref{sec:AnalysisLoss} analyzes the convergence of the green loss estimator to the conditional expectation random variable and Section~\ref{sec:AnalysisRisk} analyzes the asymptotic bias, variance, MSE, as well as the CLT and valid confidence interval of the portfolio risk estimator.
Numerical experiments are summarized in Section~\ref{sec:Experiment}, followed by conclusions in Section~\ref{sec:Conclusions}.
Technical proofs and auxiliary discussions are provided in the appendices.

\section{A Sample Recycling Approach}\label{sec:problem}

\subsection{Standard Nested Simulation}\label{subsec:SNS}

Standard nested simulation (SNS), as illustrated in Figure~\ref{fig:NestedSim}, is a common approach for estimating the quantity in Equation~\eqref{eq:rho}.
\begin{enumerate}
	\item (Outer simulation) Simulate $n$ independent and identically distributed (i.i.d.) outer scenarios, denoted by $X_1,\ldots,X_n$.
	
	\item (Inner simulation) For each scenario $X_i$, $i=1,\ldots,n$, simulate $m'$ i.i.d. inner replications, e.g., $Y_{i1},\ldots,Y_{im'} \stackrel{i.i.d.}{\sim}f(y|X=x_i)$ then estimate $L(X_i)$ by $		L^{SNS}_{m'}(X_i) = \frac{1}{m'}\sum_{j=1}^{m'} H(X_i,Y_{ij})$.
	
	\item (Risk estimation) Estimate the risk measure $\rho$ in~\eqref{eq:rho} by $\rho^{SNS}_{m'n} = \avgni g(L^{SNS}_{m'}(X_i))$.
\end{enumerate}
In general, the risk estimation step treats $L^{SNS}_{m'}(X_1),\ldots,L^{SNS}_{m'}(X_n)$ as i.i.d. samples of $L(X)$ to estimate different risk measures.
In this study, we focus on risk measures of the form~\eqref{eq:rho} with different risk functions $g:\R \mapsto \R$.

For illustrative purpose, we present a financial risk measurement example.
Let $S_t$ be a vector of risk factors, which may be the values of equities, bonds, interest rates, exchange rates, etc., at any time $t\geq 0$.
Consider a portfolio of financial instruments, which may include stocks, bonds, and derivatives whose values are affected by the risk factors.
Let $t=0$ be the current time when the initial risk factor values $S_0$ are known and $T>0$ be the maximum maturity of all the instruments in the portfolio.
The portfolio manager is interested in estimating some risk measures of the portfolio's profit and loss at a fixed future time $\tau\in(0,T)$.
Specifically, let $V_\tau$ be the portfolio value at time $\tau$ so the time~$\tau$ portfolio loss is given by $L_\tau = V_0-V_\tau$, which is a random variable at time~$0$.
Nested simulation can be used to estimate risk measures of $L_\tau$: The risk factors \textit{up to $\tau$} are denoted by $X=\{S_t:t\in[0,\tau]\}$, which are the outer-level scenarios.
The risk factors \textit{exceeding $\tau$} are denoted by $Y = \{S_t: t\in (\tau,T]\}$, which are the inner-level sample paths.
The inner simulation model $H(X,Y)$ is the discounted portfolio payoff for the simulated path $(X,Y)$ and the risk function $g(\cdot)$ depends on the risk measure of interest.

As alluded in the introduction, important risk measures such as exceedance probability, Conditional Value-at-Risk (CVaR)\footnote{Also known as the expected shortfall (ES) and conditional tail expectation (CTE).}, and squared tracking error, can all be written as~\eqref{eq:rho} with different risk functions like the indicator function $g(x)=\1\{x\geq x_0\}$, the hockey-stick function $g(x)=(x-x_0)^+ = \max\{x-x_0, 0\}$, and the quadratic function $g(x)=(x-x_0)^2$.
These three risk functions can also be used to approximate more general risk functions, such as those with a finite number of non-differentiable or discontinuous points~\citep[see discussions in][]{hong2017kernel}.

Standard nested simulation is computationally burdensome due to its nested nature, which requires a simulation budget of $\Gamma=m'n$ inner replications.
Moreover, this nested structure leads to a wasteful use of the simulation budget because each estimator $L^{SNS}_{m'}(X_i)$ only uses the $m'$ inner stimulation outputs associated with scenario $X_i$ and ignores the $m'(n-1)$ inner simulation outputs from the other scenarios.

In the next section, we propose an efficient simulation procedure that circumvents the nested structure between the outer and inner simulation by recycling all inner simulation outputs in estimating $L(X_i)$ for every scenario $X_i$.
This recycling saves computations and improves efficiency.

\subsection{Sample Recycling via Likelihood Ratios}

Let $\mathcal{X}\subseteq \R^d$ be the scenario space and $X\in \mathcal{X}$ be a given scenario.
For example, $\mathcal{X}$ may be the support of the random scenario $X$.
Also, let $f(y|X)$ be the conditional density of the inner random variable $Y$ given the scenario $X$.
In other words, the distribution of the inner random variable $Y$ is characterized by the outer scenario $X$.
This is a mild limitation of our method, as majority of risk measurement problems and many nested estimation problems in operations research satisfy this condition.

Suppose there exists a \textit{sampling density} $\ftilde(y)$.
We assume that one can generate samples from $\ftilde(y)$ and can calculate values for both $\ftilde(y)$ and $f(y|x)$.
Moreover, the sampling density $\ftilde$ satisfies the condition that $H(x,y) f(y|x) = 0$ whenever $\ftilde(y)=0$.
Then $L(X)=\E[H(X,Y)|X]$ can be written as
\begin{equation}\label{eq:LtauLR}
L(X) = \E[H(X,Y)|X] =\E_{\ftilde}\left[H(X,Y)\frac{f(Y|X)}{\ftilde(Y)}\right] = \E_{\ftilde}\left[\hatH(X,Y)\right],
\end{equation}
where shorthand notation $\hatH(x,y):=H(x,y)\frac{f(y|x)}{\ftilde(y)}$ denotes the likelihood-ratio-weighted simulation output and the subscript in the expectations indicates that $Y\sim \ftilde$.
The identity~\eqref{eq:LtauLR} is mathematically identical to importance sampling, but we do not select the sampling density for variance reduction.
We assume that the sampling density $\ftilde$ is given and we only use the likelihood ratio as a way to recycle simulation outputs for different outer scenarios.
As we see in the numerical experiments, in practical applications usually there is a natural choice of sampling distribution $\ftilde$.


In light of~\eqref{eq:LtauLR}, we propose the following green nested simulation (GNS) procedure:
\begin{enumerate}
	\item (Outer simulation) Simulate $n$ independent and identically distributed (i.i.d.) outer scenarios, denoted by $X_1,\ldots,X_n$.
	
	\item (Inner simulation) Simulate $m$ i.i.d. inner replications, e.g., $Y_{1},\ldots,Y_{m} \stackrel{i.i.d.}{\sim}\ftilde(y)$ then estimate $L(X_i)$ by
	\begin{equation}\label{eq:LmXi}
		L_m(X_i) = \avgmj H(X_i,Y_j) \frac{f(Y_j|X_i) }{\ftilde(Y_j)} = \avgmj \hatH(X_i,Y_j), \quad i=1,\ldots,n.
 	\end{equation}
	
	\item (Risk estimation) Estimate the risk measure $\rho$ in~\eqref{eq:rho} by
	\begin{equation}\label{eq:rhomn}
	\rho_{mn} = \avgni g(L_m(X_i)).
	\end{equation}
\end{enumerate}
Figure~\ref{fig:GNS} depicts the GNS procedure, which does not have the nested structure as in Figure~\ref{fig:NestedSim}.
In the GNS procedure, the outer scenarios $X_1,\ldots,X_n$ and the inner replications $Y_1,\ldots,Y_m$ are simulated separately and independently.
The same inner replications are recycled to estimate all conditional expectations $L(X_1),\ldots,L(X_n)$.

\begin{figure}[h!]
	\newcommand{\frontheightsind}{1.5/1, 0/i, -1.5/n}
	\newcommand{\backheightsind}{2/{1}, 0/{j}, -2/{m}}
	\def\xa{0}
	\def\xb{3.5}
	\def\xc{7}
	\begin{center}
		\begin{tikzpicture}[every node/.style={scale=0.9}]
			\draw (\xa, 0) node(s0) {};
			\foreach \y/\ind in \frontheightsind{
				\draw (\xb,\y) node(st\ind) {$X_{\ind}$};
				\draw[thick,->] (s0.east) -- (st\ind.west);
			}
			
			\foreach \y/\ind in \backheightsind{
				\draw (\xc,\y) node(sp\ind) {$Y_{\ind}$};
				\foreach \y/\find in \frontheightsind{
					\draw (st\find.east) -- (sp\ind.west);
				}
			}
			
			\foreach \indori/\inddes in {st1/sti, sti/stn, sp1/spj, spj/spm}{
				\path (\indori) -- (\inddes) node [font=\large, midway, sloped] {$\dots$};	
			}
			
			\draw [thick, decorate,decoration={brace,amplitude=10pt}](sp1.north -| spm.east) -- (spm.south east) node[black,right, midway,xshift=0.35cm,align=center] (sample) {Sample $Y_j\stackrel{i.i.d.}{\sim} \ftilde(y)$\\\\Estimate $\E[H(X,Y)|X=X_1]$ by $L_m(X_1)$\\\vdots\\Estimate $\E[H(X,Y)|X=X_n]$ by $L_m(X_n)$};
			
			\draw [thick, decorate,decoration={brace,amplitude=10pt}](sample.north east) -- (sample.south east) node[black,right, midway,xshift=0.35cm, align=center] {Estimate $\rho$};

			\draw [thick, decorate,decoration={brace,amplitude=10pt,mirror}](spm.south -| s0.west) -- (spm.south -| stn.center) node[black,midway,yshift=-0.8cm] {Outer simulation};
			\draw [thick, decorate,decoration={brace,amplitude=10pt,mirror}](spm.south -| stn.center) -- (spm.south) node[black,midway,yshift=-0.8cm,align=center] {Sample recycling via\\likelihood ratio};
		\end{tikzpicture}
	\end{center}
	\caption{Schematic illustration of the GNS procedure. The inner simulation replications $Y_j\sim\ftilde$ are recycled for all outer scenarios by weighting the corresponding simulation outputs by appropriate likelihood ratios.}
	\label{fig:GNS}
\end{figure}
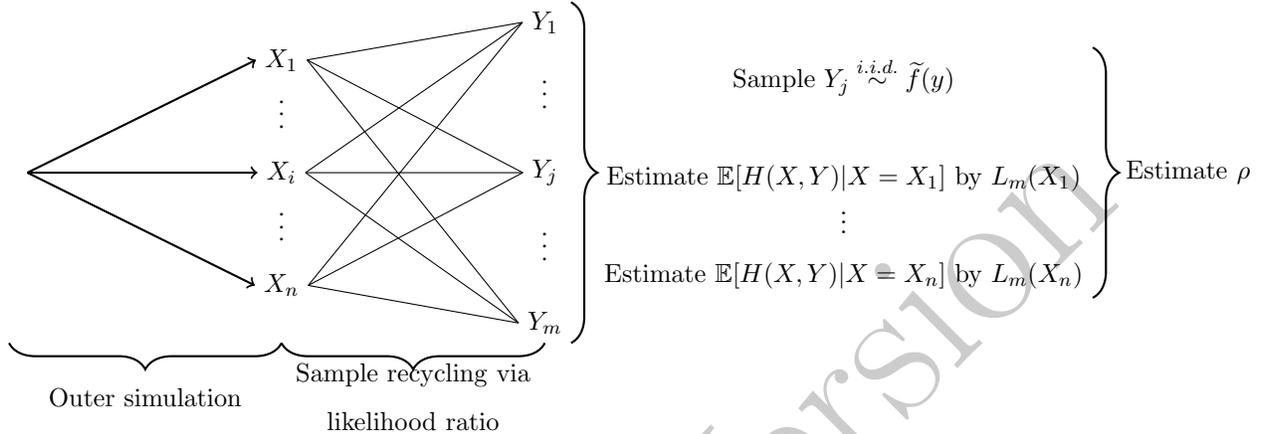

One advantage of the GNS procedure over the standard nested simulation is the computational saving because of sample recycling.
Specifically, when $m=m'$, the GNS procedure and the standard nested simulation use the same number of inner simulation outputs, likelihood-ratio-weighted or not, to estimate each $L(X_i)$.
Then, the computational saving is significant:
\begin{itemize}
	\item The GNS procedure generates $n$ times less inner samples compared to the standard nested simulation.
	In particular, the former simulates $\{Y_{j}\sim \ftilde(y), j=1,\ldots,m\}$ while the latter simulates $\{Y_{ij}\sim f(y|X_i), j=1,\ldots,m', i=1,\ldots,n\}$.
	
	\item In many applications, the inner simulation model can be decomposed into two components, one depends on the scenario $X$ and the other depends on the inner replication $Y$, i.e., $H(X,Y)=H'(h_1(X),h_2(Y))$ for some functions $H'$, $h_1$, and $h_2$. For example, for Asian option payoffs, $h_1(X)$ and $h_2(Y)$ may be the averages of $X$ and $Y$, respectively.
	In these cases, the standard nested simulation requires $m'n$ calculations of the second component $h_2(Y_{ij})$ while the GNS procedure only requires $m$ such computations.
	This is an $n$-fold saving on the second component of the inner simulation model.
	
	\item In some applications, e.g., non-path-dependent payoffs, the inner simulation model depends sole on the inner replication, i.e., $H(X,Y)=H(Y)$.
	Then, the GNS procedure only calculates $m$ inner simulation outputs, i.e., $\{H(Y_{j}), j=1,\ldots,m\}$, then recycle and reuse them in estimating $L(X_i)$ for all $n$ outer scenarios.
	The standard nested simulation, in contracts, calculates $m'$ inner simulation outputs, i.e., $\{H(Y_{ij}), j=1,\ldots,m'\}$, for each of the $n$ outer scenarios.
	This is an $n$-fold saving on the entire simulation output computation.	
	
	\item Moreover, if the user chooses to increase the number of outer scenarios after an experiment, the GNS procedure can continue reusing the same set of inner simulation outputs while more inner replications are required for standard nested simulation.
	
	\item Admittedly, the GNS procedure requires likelihood ratio calculations to reuse the inner simulation outputs, but in most applications computational efforts of the likelihood ratio ${f(Y|X)/\ftilde(X)}$ is small or even negligible compared to the inner simulation model $H(X,Y)$.
	For example, as we see in Section~\ref{sec:Experiment}, in risk management applications where the underlying asset model is Markovian, the likelihood ratio calculation can be simplified.
	Thus this additional cost is worth paying for the savings in generating new inner replications and calculating additional simulation outputs.
\end{itemize}

A second advantage of the GNS procedure is its high accuracy.
When $m=m'n$ so the GNS procedure matches the same simulation budget as standard nested simulation, each $L(X_i)$ is estimated by $m=m'n$ inner simulation outputs in the former versus $m'$ in the latter.
Despite the likelihood ratio weight, since the GNS procedure estimates each $L(X_i)$ with $n$ times more inner simulation outputs than standard nested simulation so the former is expected to be much more accurate than the latter.
Also, as indicated in Equation~\eqref{eq:LtauLR}, the likelihood ratio estimator~\eqref{eq:LmXi} is unbiased.
Compared to the LSMC~\citep{longstaff2001valuing} and to the kernel smoothing approach~\citep{hong2017kernel} for nested simulation, the GNS procedure does not have model error because it does not require the user to select any basis function, kernel function, or kernel bandwidth.

A third advantage of the GNS procedure is the strong convergence of the estimator $L_m(X)$ to $L(X)$ and the fast convergence of the risk estimator $\rho_{mn}$ to $\rho$ as $\min\{m,n\}\to \infty$, which are shown by the asymptotic analyses in Sections~\ref{sec:AnalysisLoss} and~\ref{sec:AnalysisRisk}.


\section{Asymptotic Analysis for Conditional Expectation Estimator $L_m( X)$ }\label{sec:AnalysisLoss}

For notational convenience, in subsequent discussions where no confusion will arise we write simply $L$, $L_m$, and $\hatH$ in places for $L(X)$, $L_m(X)$, and $\hatH(X,Y)$ respectively.
For simulated samples we will use shorthand notations $\Li$, $\Lmi$, and $\hatHij$ for $L(X_i)$, $L_m(X_i)$, and $\hatH(X_i,Y_j)$, respectively.
For example, we may write $\rho_{mn} = \avgni g(\Lmi) = \avgni g(\avgmj \hatHij)$.

\begin{assumption}\label{assm:basic}
	\begin{enumerate}[label=(\Roman*)]
		\item\label{assm:abscon} The support for conditional density $f( y|X)$ is the same for any scenario $X\in\mathcal{X}$.
		Moreover, the sampling density satisfies $H(X, y) f( y|X) = 0$ whenever $\ftilde(y)=0$ for all $X\in\mathcal{X}$.
		\item\label{assm:independence} The inner sample $Y\sim \ftilde(y)$ is independent of the outer scenario $X$. The simulated $\{X_i, i=1,\ldots,n\}$ and $\{Y_{j}, j=1,\ldots,m\}$ are i.i.d. samples of $X$ and $Y$, respectively.
	\end{enumerate}
\end{assumption}

The absolute continuity Assumption~\ref{assm:basic}~\ref{assm:abscon} ensures that the likelihood ratio in~\eqref{eq:LmXi} is well-defined; this is a standard assumption for analyzing importance sampling and the likelihood ratio method.
It can be satisfied if the common support of $f(y|X)$ is contained in the support of $\ftilde(y)$.
The independence Assumption~\ref{assm:basic}~\ref{assm:independence} enables us to use Independence Lemma~\citep[Lemma 2.3.4 in][]{shreve2004stochastic} and properties of U-Statistics~\citep[Section 5 in][]{serfling2009approximation} in our analysis.

For any fixed scenario $X=x$, $L_m(x)$ is an unbiased estimator for $L(x)$ according to Equation~\eqref{eq:LtauLR}.
Our risk measurement problem is more complicated because the scenario $X$ is stochastic.
Nonetheless, we can analyze useful properties of the random variable $\hatH(X,Y)$ and $L_m(X)$.

We first state a useful result for later discussions.
\begin{lemma}\label{lem:finitemoment}
	If Assumption~\ref{assm:basic}~\ref{assm:abscon} holds and $\E\left[|\hatH|^p\right]<\infty$ for some positive integer $p$, then $\E[|L|^p] < \infty$.
\end{lemma}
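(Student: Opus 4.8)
The plan is to recognize $L(X)$ as a conditional expectation of the likelihood-ratio-weighted output $\hatH(X,Y)$ and then invoke the conditional form of Jensen's inequality. The key identity is~\eqref{eq:LtauLR}: under Assumption~\ref{assm:basic}~\ref{assm:abscon}, for every fixed scenario $x$ we have $L(x)=\E_{\ftilde}[\hatH(x,Y)]$, the expectation being taken over $Y\sim\ftilde$. Reading this with $X$ random and $Y\sim\ftilde$ independent of $X$ (Assumption~\ref{assm:basic}~\ref{assm:independence}), the map $x\mapsto L(x)$ is precisely the conditional expectation $L(X)=\E[\hatH(X,Y)\mid X]$. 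Setting up this representation cleanly is the one place that needs care, since it is what lets a moment bound on $\hatH$ transfer to one on $L$; the absolute-continuity condition in Assumption~\ref{assm:basic}~\ref{assm:abscon} is exactly what guarantees the likelihood ratio is well defined and that the identity holds, under the convention that the integrand vanishes on $\{\ftilde=0\}$.

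With this representation in hand the argument is short. Since $p$ is a positive integer, $\phi(u)=|u|^p$ is convex on $\R$, and the hypothesis $\E[|\hatH|^p]<\infty$ in particular forces $\E[|\hatH|]<\infty$ (because $L^p\subseteq L^1$ on a probability space), so $\hatH(X,Y)$ is integrable and the conditional expectation defining $L(X)$ exists. Applying conditional Jensen to $\phi$ then yields, almost surely,
\[|L(X)|^p=\bigl|\E[\hatH(X,Y)\mid X]\bigr|^p\leq \E\bigl[|\hatH(X,Y)|^p\bigm| X\bigr].\]
Taking the outer expectation over $X$ and using the tower property gives
\[\E[|L|^p]\leq \E\bigl[\E[|\hatH|^p\mid X]\bigr]=\E[|\hatH|^p]<\infty,\]
which is exactly the asserted conclusion.

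I anticipate no substantive obstacle beyond the bookkeeping above: the only conceptual move is identifying $L$ as a conditional expectation of $\hatH$ so that convexity of $|\cdot|^p$ can be exploited, after which the integrability hypothesis is precisely the input that conditional Jensen's inequality requires. The remaining verifications — that $|\cdot|^p$ is convex for integer $p\ge 1$ and that the conditional expectation is well defined — are routine.
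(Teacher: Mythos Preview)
Your proposal is correct and follows essentially the same route as the paper: identify $L(X)=\E[\hatH(X,Y)\mid X]$, apply the conditional Jensen inequality to the convex function $|\cdot|^p$, and conclude via the tower property. The paper's proof is a terser version of exactly this argument.
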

\begin{proof}
	For any positive integer $p$, $|x|^p$ is a convex function. Then, by the Jensen's inequality
	\begin{equation*}\label{eq:boundedLp}
		\E[|L|^p] = \E[(\E[\hatH| X])^p] \leq \E[\E[|\hatH|^p| X]] = \E[|\hatH|^p] < \infty.
	\end{equation*}
\end{proof}

Lemma~\ref{lem:finitemoment} means that $\E\left[|\hatH|^p\right]<\infty$ directly implies $\E[|L|^p]<\infty$ so the latter does not need to be explicitly stated provided the former holds.
This simplifies the statements of our propositions and theorems, e.g., Proposition~\ref{prop:Lm_as}, whose proof is in Appendix~\ref{app:AnalysisLoss}.

\begin{proposition}\label{prop:Lm_as}
	If Assumption~\ref{assm:basic}~\ref{assm:abscon} holds, then $L_m(x)$ is an unbiased estimator for $L(x)$ for any fixed scenario $x$, i.e., $\E\left[L_m(x)\right] = L(x).$

	In addition, if Assumption~\ref{assm:basic}~\ref{assm:independence} also holds and $\E\left[|\hatH|\right]<\infty$, then $L_m( X)$ is a strongly consistent estimator for $L( X)$, i.e.,
	\begin{equation*}
	L_m( X)\stackrel{a.s.}{\rightarrow} L( X) \mbox{ as } m\rightarrow \infty.
	\end{equation*}
\end{proposition}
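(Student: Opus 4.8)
The plan is to treat the two assertions separately: unbiasedness follows directly from the likelihood-ratio identity~\eqref{eq:LtauLR}, whereas strong consistency rests on the strong law of large numbers (SLLN) applied scenario-by-scenario and then lifted to the joint probability space by a conditioning argument. For unbiasedness I would fix the scenario $x$ and note that, since $Y_1,\ldots,Y_m\stackrel{i.i.d.}{\sim}\ftilde$, the summands $\hatH(x,Y_j)$ in~\eqref{eq:LmXi} are i.i.d.\ copies of $\hatH(x,Y)$ with $Y\sim\ftilde$. Taking expectations term by term and invoking~\eqref{eq:LtauLR} gives
\begin{equation*}
\E[L_m(x)] = \avgmj \E_{\ftilde}[\hatH(x,Y_j)] = \E_{\ftilde}[\hatH(x,Y)] = L(x),
\end{equation*}
so $L_m(x)$ is unbiased for every fixed $x$; this requires no integrability beyond what~\eqref{eq:LtauLR} already presumes.

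For strong consistency the first step is to convert the global integrability $\E[|\hatH|]<\infty$ into a scenariowise one. By Assumption~\ref{assm:basic}~\ref{assm:independence} the inner sample is independent of $X$, so Tonelli's theorem gives $\E[|\hatH|]=\E\big[\,\E_{\ftilde}[|\hatH(X,Y)|\mid X]\,\big]$, whence the conditional expectation $\E_{\ftilde}[|\hatH(X,Y)|\mid X]$ is finite almost surely, i.e.\ $\E_{\ftilde}[|\hatH(x,Y)|]<\infty$ for $\P_X$-almost every $x$. For each such $x$ the sequence $\{\hatH(x,Y_j)\}_{j\ge1}$ is i.i.d.\ with finite mean $L(x)$, and the SLLN yields $L_m(x)\stackrel{a.s.}{\rightarrow}L(x)$, the almost-sure statement being taken over the randomness of the $Y_j$'s only.

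The last step lifts this to almost-sure convergence of the random variables $L_m(X)\to L(X)$ on the joint probability space. Writing $A=\{L_m(X)\to L(X)\}$ and conditioning on $X$, the independence assumption ensures that, given $X=x$, the $Y_j$ retain their i.i.d.\ $\ftilde$ law, so the previous step yields $\P(A\mid X=x)=1$ for $\P_X$-almost every $x$; integrating over $x$ then gives $\P(A)=\E[\P(A\mid X)]=1$. I expect this final transfer to be the main obstacle: one must keep track of the fact that the SLLN null set varies with $x$, and it is precisely Assumption~\ref{assm:basic}~\ref{assm:independence} (formalized through the Independence Lemma) that preserves the conditional i.i.d.\ structure and makes the conditional-to-unconditional integration legitimate. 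The remaining arguments are routine.
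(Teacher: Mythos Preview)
Your proposal is correct and follows essentially the same approach as the paper: unbiasedness via the likelihood-ratio identity~\eqref{eq:LtauLR}, then the SLLN applied for fixed $x$ and lifted to the joint space by conditioning on $X$ via the Independence Lemma. If anything, you are slightly more careful than the paper in noting that the scenariowise integrability $\E_{\ftilde}[|\hatH(x,Y)|]<\infty$ is only guaranteed for $\P_X$-almost every $x$, which is all that is needed for the final integration step.
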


The first part of Proposition~\ref{prop:Lm_as} is the well-known unbiasedness of the importance sampling estimator.
The second part shows the almost sure convergence of $L_m(X)$ to $L(X)$ in light of the stochastic of $X$.
This almost sure convergence is useful for establishing asymptotic properties of the GNS risk estimator $\rho_{mn}$.

To facilitate further analysis in Section~\ref{sec:AnalysisRisk}, we establish two more useful lemmas below.
Even though we attribute Lemmas~\ref{lem:rv_moment2p} and~\ref{lem:avg_moment2p} to~\cite{avramidis2004convergence}, our lemmas are extensions of theirs to accommodate the general analysis in this article.
For completeness, we provide their detailed proofs in Appendix~\ref{app:AnalysisLoss}.

\begin{lemma}[Lemma 1 in~\cite{avramidis2004convergence}]\label{lem:rv_moment2p}
	Suppose $R$ is a random variable with $\E[R^{2p}]<\infty$ for some positive integer $p$. Then, for any arbitrary $\sigma$-field $\cG$,
	\begin{equation*}
	\E\left[(R-\E\left[R|\cG\right])^{2p}\right] \leq 4^p\E\left[R^{2p}\right].
	\end{equation*}
\end{lemma}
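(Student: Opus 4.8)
The plan is to set $R' := \E[R\mid\cG]$ and bound $\E[(R-R')^{2p}]$ by two successive applications of convexity. First I would decompose the $2p$-th power of the difference using the sharp form of the $c_r$-inequality. Since $t\mapsto t^{2p}$ is convex on $[0,\infty)$, we have $\left(\tfrac{|a|+|b|}{2}\right)^{2p}\leq \tfrac{1}{2}\left(|a|^{2p}+|b|^{2p}\right)$, and combining this with the triangle inequality $|a-b|\leq|a|+|b|$ gives the pointwise bound
\begin{equation*}
(R-R')^{2p} = |R-R'|^{2p} \leq 2^{2p-1}\left(|R|^{2p}+|R'|^{2p}\right).
\end{equation*}
Taking expectations reduces the problem to controlling $\E[|R'|^{2p}]=\E\big[|\E[R\mid\cG]|^{2p}\big]$.

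Second, I would dispose of this term by conditional Jensen's inequality. Because $x\mapsto|x|^{2p}$ is convex, $|\E[R\mid\cG]|^{2p}\leq \E[|R|^{2p}\mid\cG]$ almost surely, and taking expectations together with the tower property yields $\E[|R'|^{2p}]\leq \E[|R|^{2p}]$. Since $2p$ is even we have $|R|^{2p}=R^{2p}$, so $\E[|R|^{2p}]=\E[R^{2p}]<\infty$ by hypothesis; in particular $R'\in\cL^{2p}$ and every quantity above is finite. Substituting back gives
\begin{equation*}
\E\left[(R-R')^{2p}\right] \leq 2^{2p-1}\left(\E[R^{2p}]+\E[R^{2p}]\right) = 2^{2p}\,\E[R^{2p}] = 4^p\,\E[R^{2p}],
\end{equation*}
which is exactly the claimed inequality.

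There is no substantive obstacle here; the only point requiring care is the constant bookkeeping. To land on $4^p$ rather than a larger factor, one must use the \emph{sharp} $c_r$-inequality constant $2^{2p-1}$ (rather than, say, the cruder $2^{2p}$), so that the factor of two absorbed when summing the two identical expectations $\E[R^{2p}]$ produces precisely $2^{2p}=4^p$. Everything else is routine: the triangle inequality and both invocations of convexity (the elementary power inequality and conditional Jensen) are standard, and the finiteness of all intermediate expectations follows immediately from the assumption $\E[R^{2p}]<\infty$ together with the fact that conditional expectation is an $\cL^{2p}$-contraction.
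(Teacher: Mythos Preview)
Your proof is correct and arrives at the same constant $4^p=2^{2p}$ as the paper, but by a somewhat different and more economical route. Both arguments begin with the triangle inequality $|R-R'|\leq|R|+|R'|$ and both finish by invoking conditional Jensen to bound $\E[|R'|^{2p}]\leq\E[R^{2p}]$. The difference is in how $(|R|+|R'|)^{2p}$ is handled: the paper expands this via the binomial theorem and then applies H\"older's inequality to each cross term $\E[|R|^{2p-k}|R'|^k]\leq(\E[R^{2p}])^{(2p-k)/2p}(\E[|R'|^{2p}])^{k/2p}$, summing the binomial coefficients to recover $2^{2p}$. You instead apply the sharp $c_r$-inequality $(|a|+|b|)^{2p}\leq 2^{2p-1}(|a|^{2p}+|b|^{2p})$ in one stroke, which is just convexity of $t\mapsto t^{2p}$ and avoids H\"older entirely. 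Your route is cleaner; the paper's route makes the role of each cross moment explicit but is otherwise no more informative. Both are fully valid.
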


\begin{lemma}[Lemma 2 in~\cite{avramidis2004convergence}]\label{lem:avg_moment2p}
	Consider identically distributed random variables $\{R_j\}_{j=1}^{m}$ such that $\E[R_1^{2p}]<\infty$ for some positive integer $p$.
	In addition, conditional on an arbitrary $\sigma$-field $\cG$, $\{R_j\}_{j=1}^{m}$ are mutually independent and $\E\left[R_j|\cG\right]=0$ for all $1\leq j\leq m$.
	Then,
	\begin{equation*}\label{eq:Rmoment2p}
	\E\left[\left(\avgmj R_j\right)^{2p}\right]= \frac{c_1}{m^{p}} \E\left[R_1^{2p}\right] +\cO\left(\frac{1}{m^{p+1}}\right)=\cO(m^{-p}),  \mbox{ as } m\rightarrow\infty,
	\end{equation*}
	where $c_1=\binom{2p}{2}\binom{2p-2}{2}\cdots\binom{2}{2}/{p!}$.	
	In particular, for $p=1$,
	\begin{equation}\label{eq:Rmoment2piid}
	\E\left[\left(\frac{1}{m}\sum_{j=1}^m R_j\right)^2\right]=\frac{\E \left[R_1^2\right]}{m}.
	\end{equation}
\end{lemma}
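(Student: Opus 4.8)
The plan is to expand the $2p$-th power of the sum into a multinomial and exploit the conditional structure to discard the overwhelming majority of terms. Writing
$\left(\avgmj R_j\right)^{2p}=m^{-2p}\sum_{j_1,\ldots,j_{2p}=1}^{m}R_{j_1}R_{j_2}\cdots R_{j_{2p}}$,
I would take expectations by first conditioning on $\cG$. Since the $R_j$ are conditionally independent given $\cG$ with $\E[R_j\mid\cG]=0$, each conditional expectation $\E[R_{j_1}\cdots R_{j_{2p}}\mid\cG]$ factorizes across the distinct indices in the tuple, and any index occurring exactly once contributes a factor $\E[R_j\mid\cG]=0$. Hence a tuple survives only if every index it contains appears at least twice; because there are only $2p$ slots, a surviving tuple involves at most $p$ distinct indices.

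Next I would organize the surviving tuples by their number of distinct indices $k\le p$, i.e.\ by the occupancy pattern of the $2p$ slots. For a fixed pattern with $k$ distinct indices, the number of ordered tuples realizing it is $\binom{m}{k}$ times a pattern-dependent constant, which is $\Theta(m^{k})$ as $m\to\infty$; after the $m^{-2p}$ prefactor, that class contributes a term of order $m^{k-2p}$. The order is maximized at $k=p$, giving the leading rate $m^{-p}$, while every pattern with $k\le p-1$ is absorbed into a remainder of order $\cO(m^{-(p+1)})$. Finiteness of each mixed moment appearing in the expansion, and the uniform bound of each by $\E[R_1^{2p}]$, follows from the generalized Hölder inequality together with the hypothesis $\E[R_1^{2p}]<\infty$ and identical distribution (choosing exponents $2p/c_\ell$ for an index of multiplicity $c_\ell$, with $\sum_\ell c_\ell=2p$).

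It remains to pin down the leading coefficient. With $k=p$ distinct indices filling $2p$ slots, each index must occur exactly twice, so the only contributing configuration is the ``all pairs'' pattern, with surviving moment $\E[R_{a_1}^2\cdots R_{a_p}^2]\le\E[R_1^{2p}]$ by Hölder. The number of ordered tuples of this type is $\binom{m}{p}\cdot\frac{(2p)!}{2^{p}}\sim\frac{(2p)!}{2^{p}\,p!}\,m^{p}$, and since $\frac{(2p)!}{2^{p}p!}=\binom{2p}{2}\binom{2p-2}{2}\cdots\binom{2}{2}/p!=c_1$, collecting these terms and dividing by $m^{2p}$ produces the leading term $\frac{c_1}{m^{p}}\E[R_1^{2p}]$ up to an $\cO(m^{-(p+1)})$ error, and hence the overall $\cO(m^{-p})$ bound. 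The case $p=1$ is immediate and exact: only the diagonal $j_1=j_2$ survives, giving $\frac{1}{m^2}\cdot m\,\E[R_1^2]=\E[R_1^2]/m$.

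I expect the main obstacle to lie in the general-$p$ bookkeeping: systematically enumerating the occupancy patterns, verifying that each contributes the claimed power of $m$, and confirming that the aggregate of all sub-leading classes is genuinely $\cO(m^{-(p+1)})$ rather than merely $o(m^{-p})$. Controlling the associated mixed moments uniformly through Hölder, so that the constants hidden in these $\cO(\cdot)$ terms are independent of $m$, is the technical crux of the argument.
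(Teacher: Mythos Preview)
Your proposal is correct and mirrors the paper's own proof almost step for step: multinomial expansion, conditioning on $\cG$ to kill any term containing a singleton index, bounding the surviving mixed moments by $\E[R_1^{2p}]$ via generalized H\"older, and then counting configurations with $k\le p$ distinct indices as $\binom{m}{k}=\cO(m^k)$ so that $k=p$ (the all-pairs pattern) gives the leading $m^{-p}$ contribution with coefficient $(2p)!/(2^p p!)=c_1$. The paper organizes the count with a small table rather than prose, but the logic is identical.
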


\begin{theorem}\label{thm:Lm_moment2p}
	If Assumption~\ref{assm:basic} holds and $\E\left[\hatH^{2p}\right]<\infty$ for some positive integer $p$, then
	\begin{equation*}
	\E\left[\left(L_m-L\right)^{2p}\right]= \cO\left(m^{-p}\right) \mbox{ as } m\rightarrow\infty.
	\end{equation*}
\end{theorem}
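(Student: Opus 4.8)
The plan is to express the error $L_m-L$ as a normalized sum of centered, conditionally independent terms and then invoke Lemma~\ref{lem:avg_moment2p} essentially verbatim. Concretely, I would set $R_j := \hatHij - L = \hatH(X,Y_j)-L(X)$, so that $L_m - L = \avgmj R_j$ and hence $\E[(L_m-L)^{2p}] = \E[(\avgmj R_j)^{2p}]$ matches the left-hand side of Lemma~\ref{lem:avg_moment2p}, with the conditioning $\sigma$-field taken to be $\cG=\sigma(X)$, the $\sigma$-field generated by the outer scenario. The entire proof then reduces to verifying the four hypotheses of that lemma for this choice of $R_j$ and $\cG$.

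The substance of the argument is in checking those hypotheses. First, identical distribution: since the $Y_j$ are i.i.d.\ copies of $Y\sim\ftilde$ and independent of $X$ by Assumption~\ref{assm:basic}~\ref{assm:independence}, each pair $(X,Y_j)$ shares the joint law of $(X,Y)$, so the $R_j$ are identically distributed. Second, conditional mutual independence: given $X$, each $R_j$ is a deterministic function of $Y_j$ alone, and the $Y_j$ are i.i.d.\ and independent of $X$, hence conditionally mutually independent given $\sigma(X)$. Third, conditional centering: by the Independence Lemma together with the importance-sampling identity~\eqref{eq:LtauLR}, $\E_{\ftilde}[\hatH(X,Y_j)\mid X] = L(X)=L$, whence $\E[R_j\mid X]=0$. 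Fourth, the finite moment $\E[R_1^{2p}]<\infty$: applying Lemma~\ref{lem:rv_moment2p} with $R=\hatH(X,Y_1)$ and $\cG=\sigma(X)$, and using $\E[\hatH(X,Y_1)\mid X]=L$, gives $\E[R_1^{2p}] = \E[(\hatH-L)^{2p}] \leq 4^p\,\E[\hatH^{2p}]<\infty$ by the standing assumption $\E[\hatH^{2p}]<\infty$.

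With all four conditions in place, Lemma~\ref{lem:avg_moment2p} yields directly $\E[(L_m-L)^{2p}] = c_1\,m^{-p}\,\E[R_1^{2p}] + \cO(m^{-p-1}) = \cO(m^{-p})$, which is the claim. I expect the only genuine subtlety to be the bookkeeping around conditioning: one must be careful that the centering and the mutual independence are established \emph{conditional} on $\sigma(X)$ (which is precisely where the Independence Lemma and identity~\eqref{eq:LtauLR} do the real work), while the unconditional integrability $\E[R_1^{2p}]<\infty$ needed to apply Lemma~\ref{lem:avg_moment2p} is supplied by the moment-transfer inequality of Lemma~\ref{lem:rv_moment2p}. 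Beyond this careful separation of conditional versus unconditional statements, every step is a direct citation of an already-established result.
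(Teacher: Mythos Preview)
Your proposal is correct and follows essentially the same approach as the paper's own proof: set $R_j=\hatH(X,Y_j)-L(X)$ with $\cG=\sigma(X)$, verify identical distribution, conditional independence, conditional centering via~\eqref{eq:LtauLR}, and finite $2p$-moment via Lemma~\ref{lem:rv_moment2p}, then invoke Lemma~\ref{lem:avg_moment2p}. The paper's write-up is terser (and in fact writes $H$ where it means $\hatH$), but the logical skeleton is identical to yours.
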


Theorem~\ref{thm:Lm_moment2p} demonstrates the $\cL^{2p}$ convergence of $L_m$ to $L$ at rate $\cO(m^{-p})$.
This is also an important result to establish asymptotic properties, such as bias, variance, MSE, and CLT, of the GNS risk estimator $\rho_{mn}$.

\section{Asymptotic Analysis for Risk Estimator $\rho_{mn}$}\label{sec:AnalysisRisk}

While sample recycling in the GNS procedure leads to computational savings and higher accuracy, as discussed in Section~\ref{sec:problem}, it also introduces dependency among the estimators $L_{m,i}$, $i=1,\ldots,n$.
Despite this intricate dependency, we analyze the asymptotic properties for the GNS estimators in~\eqref{eq:LmXi} and~\eqref{eq:rhomn}.

The asymptotic analysis for $\rho_{mn} = \avgni g(\Lmi)$ is different for different risk functions $g$.
For linear functions, i.e., $g(x)=ax+b$ for some constants $a$ and $b$, $\rho=\E\left[g(\E\left[H|X\right])\right] = a\E\left[H\right]+b$ can be estimated without nested simulation.
To make our study meaningful, we analyze three classes of nonlinear risk functions:
\begin{enumerate}
	\item \textbf{Smooth function}: $g:\R\mapsto\R$ is twice differentiable with a bounded second derivative, i.e., both $g'(x)$ and $g''(x)$ exist for all $x\in\R$ and there exists a nonnegative constant $C_g \in \R^+$ such that $|g''(x)|\leq C_g<\infty$.
	Analysis for this class of risk functions mainly based on the Taylor approximation
	\begin{equation}\label{eq:Taylor}
	g\left(L_m\right) = g\left(L\right) + g'\left(L\right)(L_m-L) + \frac{g''(\Lambda_m)}{2}(L_m-L)^2,
	\end{equation}
	where $\Lambda_m$ is a random variable that lies between $L_m$ and $L$.
	
	\item \textbf{Hockey-stick function}: $g(x): = \max\{x, 0\}$.
	The hockey-stick function has a kink hence is not differentiable at $x=0$, but it is Lipschitz continuous because $|g(x)-g(y)|\leq |x-y|$.
	Moreover, it is clear that $g(x)= x\cdot\1\{x\geq 0\}$ so we can define its derivative $g'(x) = \1\{x \geq 0\}$, which is valid everywhere except at $x=0$; this derivative suffices in our analysis.
	The valid bounds $g(x) \leq |x|$ and $g'(x) \leq 1$ are also useful in our analysis.

	\item \textbf{Indicator function}: $g(x)=\1\{x \geq 0\}$ is neither continuous nor differentiable at $x=0$, which leads to a more complicated analysis than the other two cases.
	When needed, we make additional assumptions and employ a smooth approximation to circumvent this difficulty.
\end{enumerate}
Though different assumptions and mathematical tools are required to analyze the three classes of risk functions, we present a concise and coherent analysis that sheds lights on their similarities and common properties.
We also note that the kink and discontinuity at $x=0$ in our analysis is only for simplification purpose, which can be generalized to any constant threshold $x_0\in\R$ with a change of variable.

Let $L_m-L = Z_m/\sqrt{m}$ and suppose that $Z_m$ has a nontrivial limiting distribution as $m\rightarrow \infty$.
Assumption~\ref{assm:jointdensity} states some assumptions on the joint density function $p_m(\ell,z)$ for $(L,Z_m)$ that aid later analysis.
\begin{assumption}\label{assm:jointdensity}
	\begin{enumerate}[label=(\Roman*)]
		
		\item\label{assm:jointdensity1}
		The joint density $p_m(\ell,z)$ of $(L,Z_m)$ and its partial derivative $\frac{\partial}{\partial \ell} p_m(\ell,z)$ exists for every positive integer $m\geq 1$ and for all $(\ell,z)$.
		
		\item\label{assm:jointdensity2} For every positive integer $m\geq 1$, there exist nonnegative functions $\bar{p}_{0,m}(\cdot)$ and $\bar{p}_{1,m}(\cdot)$ such that
		\begin{equation*}
		p_m(\ell,z) \leq \bar{p}_{0,m}(z) \mbox{ and } \left|\frac{\partial}{\partial \ell} p_m(\ell,z)\right| \leq \bar{p}_{1,m}(z),\quad \forall (\ell,z).
		\end{equation*}
			
		\item\label{assm:jointdensity3} For $i=0,1$ and $0\leq r \leq 2$
		\begin{equation*}
		\sup_m \int_{-\infty}^\infty |z|^r \bar{p}_{i,m}(z) dz <\infty.
		\end{equation*}
	\end{enumerate}
\end{assumption}
Assumption~\ref{assm:jointdensity} is difficult to verify in general, but as argued in~\cite{gordy2010nested}, it can be expected to be true if some of the instruments in the portfolio have sufficiently smooth payoffs.
Mathematically, Assumption~\ref{assm:jointdensity} imposes smoothness and boundedness assumptions on the joint densities $p_{m}(\ell,z)$, which are needed in our analysis to compensate for the lack of differentiability or continuity in the hockey-stick and indicator risk function $g$.
Moreover, Assumption~\ref{assm:jointdensity} implies that the marginal density function of $L$, i.e., $\widetilde{p}(\ell)=\int p_m(\ell,z)dz$ exists.

Using Assumption~\ref{assm:jointdensity}, we can show the two identities in Lemma~\ref{lem:usefuleqs} that are useful for later analysis.
Detailed proof for Lemma~\ref{lem:usefuleqs} is provided in Appendix~\ref{app:MSE}.

\begin{lemma}\label{lem:usefuleqs}
	Suppose Assumptions~\ref{assm:basic} and~\ref{assm:jointdensity} hold. Then,
	\begin{align}
	\E\left[\1\{L_m\geq 0\} - \1\{L\geq 0\}\right] &= \cO(m^{-1}),\mbox{ and}\label{eq:usefuleq1}\\
	\E\left[|L_m\cdot(\1\{L_m\geq 0\} - \1\{L\geq 0\})|\right] &= \cO(m^{-1})\label{eq:usefuleq3}.
	\end{align}
\end{lemma}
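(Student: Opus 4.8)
The plan is to prove both identities by writing the relevant expectations as double integrals against the joint density $p_m(\ell,z)$ of $(L,Z_m)$, using the decomposition $L_m = L + Z_m/\sqrt{m}$ and writing $\epsilon = m^{-1/2}$ throughout. The difference of indicators $\1\{L_m\ge 0\} - \1\{L\ge 0\} = \1\{\ell\ge -\epsilon z\} - \1\{\ell\ge 0\}$ is supported, for each fixed $z$, on the $\ell$-interval between $0$ and $-\epsilon z$, whose length is $\epsilon|z|$. This ``thin band near $\ell=0$'' structure is what drives both estimates, and the only substantive difference between the two is whether a first-order term survives.

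For~\eqref{eq:usefuleq1} I would integrate in $\ell$ first. For fixed $z$ the inner integral equals $-\Phi(-\epsilon z,z)$, where $\Phi(\ell,z):=\int_0^\ell p_m(t,z)\,dt$ is the partial CDF in the first argument (the sign works out for both $z>0$ and $z<0$). Since $\partial_\ell\Phi = p_m$ and $\partial_\ell^2\Phi=\partial_\ell p_m$ exist by Assumption~\ref{assm:jointdensity}~\ref{assm:jointdensity1}, a second-order Taylor expansion in the upper limit gives
\[
-\Phi(-\epsilon z,z) = \epsilon z\, p_m(0,z) - \tfrac{1}{2}\epsilon^2 z^2\, \partial_\ell p_m(\xi,z)
\]
for some intermediate $\xi$. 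Integrating over $z$, the first-order term is $\epsilon\int z\, p_m(0,z)\,dz$, while the remainder is bounded in absolute value by $\tfrac{1}{2}\epsilon^2\int z^2\,\bar{p}_{1,m}(z)\,dz$ using the derivative bound in Assumption~\ref{assm:jointdensity}~\ref{assm:jointdensity2}; this is $\cO(\epsilon^2)=\cO(m^{-1})$ by the integrability in Assumption~\ref{assm:jointdensity}~\ref{assm:jointdensity3} (with $i=1$, $r=2$).

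The crux, and the step I expect to be the main obstacle, is showing that the first-order coefficient $\int z\, p_m(0,z)\,dz$ vanishes, since otherwise the estimate degrades to $\cO(m^{-1/2})$. Here I would invoke unbiasedness: by Proposition~\ref{prop:Lm_as}, $\E[L_m\mid X]=L$, and because $\sigma(L)\subseteq\sigma(X)$ the tower property yields $\E[Z_m\mid L]=0$ almost surely, i.e. $\int z\, p_m(\ell,z)\,dz = 0$ for a.e.\ $\ell$. To evaluate this at the single point $\ell=0$, I would observe that $\ell\mapsto\int z\, p_m(\ell,z)\,dz$ is continuous (dominated convergence, with the $\ell$-independent dominating function $|z|\bar{p}_{0,m}(z)$, integrable by Assumption~\ref{assm:jointdensity}~\ref{assm:jointdensity3} at $r=1$), so an a.e.-zero continuous function is identically zero and $\int z\, p_m(0,z)\,dz=0$, giving~\eqref{eq:usefuleq1}.

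Identity~\eqref{eq:usefuleq3} is more direct and does not need the vanishing-mean argument. The key observation is that on the straddle event $\{\1\{L_m\ge0\}\neq\1\{L\ge0\}\}$ both $L$ and $L_m$ lie within $|Z_m|/\sqrt{m}$ of zero; checking the two cases ($L_m\ge0>L$ and $L_m<0\le L$) gives $|L_m|\le |Z_m|/\sqrt{m}$ on this event. Hence
\[
\E\left[\,|L_m\,(\1\{L_m\ge0\}-\1\{L\ge0\})|\,\right] \le \frac{1}{\sqrt{m}}\,\E\left[\,|Z_m|\cdot|\1\{L_m\ge0\}-\1\{L\ge0\}|\,\right].
\]
Writing the right-hand expectation as a double integral and integrating in $\ell$ over the straddle band of length $\epsilon|z|$, each inner integral is at most $\bar{p}_{0,m}(z)\,\epsilon|z|$ by the density bound in Assumption~\ref{assm:jointdensity}~\ref{assm:jointdensity2}; integrating over $z$ bounds it by $\epsilon\int z^2\,\bar{p}_{0,m}(z)\,dz=\cO(m^{-1/2})$ via Assumption~\ref{assm:jointdensity}~\ref{assm:jointdensity3} (with $i=0$, $r=2$). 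Combining this with the $1/\sqrt{m}$ prefactor yields $\cO(m^{-1})$, completing the proof.
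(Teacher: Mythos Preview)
Your proof is correct and follows essentially the same approach as the paper: for~\eqref{eq:usefuleq1} both arguments Taylor-expand over the thin $\ell$-band, kill the first-order term via unbiasedness $\E[Z_m\mid L]=0$, and bound the remainder using $\bar p_{1,m}$; for~\eqref{eq:usefuleq3} both integrate $|L_m|$ over the straddle band using the density bound $\bar p_{0,m}$. Your treatment is slightly tidier in two places---you make explicit the continuity argument justifying evaluation of $\int z\,p_m(\ell,z)\,dz$ at the single point $\ell=0$, and for~\eqref{eq:usefuleq3} you use the sharper pointwise bound $|L_m|\le |Z_m|/\sqrt{m}$ on the straddle event rather than the paper's $|\ell+z/\sqrt m|\le|\ell|+|z|/\sqrt m$---but these are cosmetic differences within the same strategy.
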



\subsection{Asymptotic Bias}\label{subsec:Analysis_Bias}
For any given risk function $g$, the bias of the GNS estimator $\rho_{mn}$ can be decomposed as
\begin{equation}\label{eq:bias}
\Bias[\rho_{mn}] =  \E\left[g\left(L_m\right) - g\left(L\right)\right]= \E\left[g'\left(L\right)(L_m - L)\right] + \E\left[r_m\right].
\end{equation}
for appropriately defined derivative $g'$ where the remainder term is
\begin{equation}\label{eq:remainder}
r_m = g\left(L_m\right) - g\left(L\right) - g'\left(L\right)(L_m - L).
\end{equation}

The first expectation in the RHS of~\eqref{eq:bias} contributes to the bias due to the linear approximation of $g(\cdot)$.
For well defined derivative $g'$ such as the case for smooth and hockey-stick risk functions, this contribution is zero because
\begin{align*}
	&\E\left[g'\left(L\right)(L_m - L)\right] = \E\left[\E\left[g'(L(X))(L_m(X) - L(X))|X\right]\right] \nonumber\\
	=& \E\left[g'(L(X))(\E\left[L_m(X)|X\right] - L(X))\right] \stackrel{(*)}{=}\E\left[g'(L(X))(L(X) - L(X))\right]=0, \label{eq:zerocontribution}
\end{align*}
where $(*)$ holds because $\E\left[L_m(X)|X\right]=L(X)$ by Proposition~\ref{prop:Lm_as}.

We then show that the bias~\eqref{eq:bias} converges to zero at the rate $\cO(m^{-1})$ for all three classes of risk functions.
Specifically, $|\E[r_m]|\leq \E[|r_m|] = \cO(m^{-1})$ for the smooth and hockey-stick risk functions, where the inequality holds by Jensen's inequality for the convex function $|x|$.
Equation~\eqref{eq:usefuleq1} in Lemma~\ref{lem:usefuleqs} directly indicates that the $\E\left[g\left(L_m\right) - g\left(L\right)\right]=\cO(m^{-1})$ for indicator risk function $g(x)=\1\{x\geq 0\}$.
\begin{itemize}
	\item For a smooth risk function $g$, using the Taylor approximation~\eqref{eq:Taylor} and Theorem~\ref{thm:Lm_moment2p} with $p=1$, we have
	\begin{equation}\label{eq:smoothbias}
	\left|\E\left[r_m\right]\right| \leq \E\left[|r_m|\right] = \E\left[\frac{|g''(\Lambda_m)|}{2}(L_m-L)^2\right] \leq \frac{C_g}{2}\E\left[(L_m-L)^2\right]=\cO(m^{-1}).
	\end{equation}
	
	\item For the hockey-stick risk function $g(x)=\max\{x,0\} = x\cdot\1\{x \geq 0\}$, we define $g'(x)=\1\{x \geq 0\}$ so
	\begin{align*}
		r_m = L_m\cdot\1\{L_m \geq 0\} - L\cdot\1\{L \geq 0\} - \1\{L \geq 0\} (L_m-L) = L_m\cdot(\1\{L_m\geq 0\} - \1\{L\geq 0\})
	\end{align*}
	Then, using Equation~\eqref{eq:usefuleq3} in Lemma~\ref{lem:usefuleqs}, we have
	\begin{equation}\label{eq:hockeysticbias}
	\left|\E[r_m]\right| \leq \E[|r_m|] = \E\left[|L_m\cdot(\1\{L_m\geq 0\} - \1\{L\geq 0\})|\right]=\cO(m^{-1}).
	\end{equation}
	
	
	\item For the indicator risk function $g(x)=\1\{x\geq 0\}$, we consider the bias directly, i.e., $\Bias[\rho_{mn}] =  \E\left[\1\{L_m\geq 0\} - \1\{L\geq 0\}\right]$, which, based on Equation~\eqref{eq:usefuleq1} in Lemma~\ref{lem:usefuleqs}, converges at the rate $\cO(m^{-1})$.

\end{itemize}

Proposition~\ref{prop:bias} summarizes the above discussions about asymptotic biases.
\begin{proposition}\label{prop:bias}
	Suppose that Assumption~\ref{assm:basic} and one of the following sets of assumptions hold:
	\begin{enumerate}
		\item The risk function $g(\cdot)$ is twice differentiable with a bounded second derivative and $\E\left[\hatH^2\right]<\infty$, or
		\item The risk function $g(\cdot)$ is a hockey-stick function and Assumption~\ref{assm:jointdensity} holds, or
		\item The risk function $g(\cdot)$ is an indicator function and Assumption~\ref{assm:jointdensity} holds.
	\end{enumerate}
	Then,
	\begin{equation*}
	\Bias[\rho_{mn}] = \cO(m^{-1}).
	\end{equation*}
\end{proposition}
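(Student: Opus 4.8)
The plan is to prove Proposition~\ref{prop:bias} by following exactly the case decomposition already set up in the preceding discussion, treating the three risk-function classes in turn and in each case bounding the relevant remainder or difference by $\cO(m^{-1})$. The unifying starting point is the bias decomposition~\eqref{eq:bias}, $\Bias[\rho_{mn}] = \E[g(L_m)-g(L)] = \E[g'(L)(L_m-L)] + \E[r_m]$, which holds for the smooth and hockey-stick cases where a suitable derivative $g'$ is available. The first step, common to both of those cases, is to argue that the linear term vanishes: conditioning on $X$ and invoking the tower property together with the unbiasedness $\E[L_m(X)\mid X]=L(X)$ from Proposition~\ref{prop:Lm_as}, I obtain $\E[g'(L)(L_m-L)]=0$. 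This reduces the problem to controlling $\E[r_m]$, and throughout I use $|\E[r_m]|\leq\E[|r_m|]$ by Jensen's inequality for the convex map $|\cdot|$.

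For the smooth case, I would substitute the exact Taylor expansion~\eqref{eq:Taylor} into the definition~\eqref{eq:remainder} of $r_m$, giving $r_m = \tfrac12 g''(\Lambda_m)(L_m-L)^2$. Using the hypothesis $|g''|\leq C_g$ uniformly, I bound $\E[|r_m|]\leq \tfrac{C_g}{2}\E[(L_m-L)^2]$, and then apply Theorem~\ref{thm:Lm_moment2p} with $p=1$ (whose hypothesis $\E[\hatH^2]<\infty$ is assumed here) to conclude $\E[(L_m-L)^2]=\cO(m^{-1})$, yielding~\eqref{eq:smoothbias}. For the hockey-stick case, the key algebraic simplification is that with $g(x)=x\cdot\1\{x\geq 0\}$ and the chosen $g'(x)=\1\{x\geq 0\}$, the remainder collapses to $r_m = L_m\cdot(\1\{L_m\geq 0\}-\1\{L\geq 0\})$, because the two indicator-weighted linear pieces cancel. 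Then~\eqref{eq:usefuleq3} of Lemma~\ref{lem:usefuleqs} directly gives $\E[|r_m|]=\cO(m^{-1})$, as in~\eqref{eq:hockeysticbias}. Note that Lemma~\ref{lem:usefuleqs} requires Assumption~\ref{assm:jointdensity}, which is precisely what the hypothesis of case~2 supplies.

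For the indicator case the decomposition~\eqref{eq:bias} is no longer the natural vehicle since $g$ is discontinuous at $0$, so instead I would work with the bias directly as $\Bias[\rho_{mn}] = \E[\1\{L_m\geq 0\}-\1\{L\geq 0\}]$ and simply cite~\eqref{eq:usefuleq1} of Lemma~\ref{lem:usefuleqs}, again available under Assumption~\ref{assm:jointdensity}, to get the $\cO(m^{-1})$ rate. Since all three bounds hold uniformly in $n$ (nothing above depends on the number of outer scenarios), the stated conclusion $\Bias[\rho_{mn}]=\cO(m^{-1})$ follows in each case.

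The main obstacle is not in the present proof, which is essentially a routine assembly of earlier results, but is quarantined inside Lemma~\ref{lem:usefuleqs}: the genuine analytic work lies in establishing~\eqref{eq:usefuleq1} and~\eqref{eq:usefuleq3}, where one must convert the $\cL^2$-control of $L_m-L$ into control of an expectation involving the discontinuous indicator, using the joint-density smoothness and integrability in Assumption~\ref{assm:jointdensity} to bound the probability mass of $L$ near the threshold $0$ at which the indicators disagree. Given that lemma, the proof of Proposition~\ref{prop:bias} reduces to the three short case verifications above, and the only care needed is to match each case's stated hypotheses to the precise result invoked (Theorem~\ref{thm:Lm_moment2p} for the smooth case, Lemma~\ref{lem:usefuleqs} for the other two).
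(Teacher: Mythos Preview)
Your proposal is correct and follows essentially the same approach as the paper: the same bias decomposition~\eqref{eq:bias}, the same conditioning argument to kill the linear term, and the same case-by-case invocation of Theorem~\ref{thm:Lm_moment2p} (smooth case) and Lemma~\ref{lem:usefuleqs} (hockey-stick and indicator cases). Your observation that the real analytic content is quarantined in Lemma~\ref{lem:usefuleqs} is also exactly how the paper organizes the argument.
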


We can see the advantage of our GNS estimator $L_m$ by comparing Proposition~\ref{prop:bias} to analogous bias results for other nested estimators.
In the GNS procedure, the total number of inner samples is $m$.
The total number of inner samples for the standard nested simulation is $\Gamma=m'n$, where $n$ is the number of outer scenarios and $m'$ is the number of inner samples per outer scenario.
So we consider $m=\Gamma$ a fair comparison, i.e., the same simulation budget, between these two procedures.
Proposition~\ref{prop:bias} shows that the bias of the GNS estimator $\rho_{mn}$ converges to zero at the rate of $\cO(m^{-1})=\cO(\Gamma^{-1})$, which is fast and depends only on the simulation budget.
In contrast, the bias of the standard nested simulation estimator using the optimal budget allocation scheme in~\cite{gordy2010nested} is $\cO(\Gamma^{-1/3})$.
The bias of the regression-based procedure in~\cite{broadie2015risk} depends on the selection of the basis functions and is generally non-zero regardless of the simulation budget.
The bias of the kernel-based procedure in~\cite{hong2017kernel} depends not only on the simulation budget but also on the kernel bandwidth.


\subsection{Asymptotic Variance}\label{subsec:Analysis_variance}

To analyze the variance of the GNS estimator $\rho_{mn}$, we first note that
\begin{align}
	&\Var[\rho_{mn}] = \E\left[\left(\avgni g\left(\Lmi\right) - \E\left[g\left(L_m\right)\right]\right)^2\right]\nonumber\\
	=& \E\left[\left(\avgni \left(g\left(\Lmi\right) - g\left(\Li\right)\right) + \avgni \left(g\left(\Li\right) -\E\left[g\left(L\right)\right]\right)+\left(\E\left[g\left(L\right)\right]-\E\left[g\left(L_m\right)\right]\right)\right)^2 \right]\nonumber\\
	\stackrel{(*)}{\leq}& 3\E\left[\left(\avgni \left(g\left(\Lmi\right) - g\left(\Li\right)\right)\right)^2  + \left(\avgni g\left(\Li\right) -\E\left[g\left(L\right)\right]\right)^2+\left(\E\left[g\left(L\right)-g\left(L_m\right)\right]\right)^2 \right]\nonumber\\
	\stackrel{(**)}{=}& 3\E\left[\left(\avgni \left(g\left(\Lmi\right) - g\left(\Li\right)\right)\right)^2 \right]  + \frac{3}{n}\Var\left[g\left(L\right)\right] +3\left(\Bias[\rho_{mn}]\right)^2,\label{eq:varbound}
\end{align}
where $(*)$ holds by inequality~\eqref{eq:CauthySchwarzIneq1} in Appendix~\ref{app:MSE} and $(**)$ holds by applying Equation~\eqref{eq:Rmoment2piid} in Lemma~\ref{lem:avg_moment2p} to the second term.


We then analyze the three terms in Equation~\eqref{eq:varbound} separately: The last term converges at the rate of $\cO(m^{-2})$ by Proposition~\ref{prop:bias}. For the second term, we assume that $\E[(g(L))^2]<\infty$ so $\Var[g(L)] < \infty$.
As a result, the second term in Equation~\eqref{eq:varbound} converges at the rate of $\cO(n^{-1})$.
Note that $\E[(g(L))^2]<\infty$ is a standard assumption, which dictates that the Monte Carlo estimator for $\rho$ has a finite variance.
For the hockey-stick function $g(x)=\max\{0,x\} \leq |x|$, $\E[(g(L))^2]<\infty$ is satisfied if $\E\left[L^2\right]<\infty$, which, by Lemma~\ref{lem:finitemoment}, holds if $\E[\hatH^2]<\infty$. For the indicator function $g(x)=\1\{x\geq 0\} \leq 1$, this assumption is implicitly satisfied because $\E[(g(L))^2]\leq 1$.

It remains to analyze the first term in Equation~\eqref{eq:varbound}.
Using the inequality~\eqref{eq:CauthySchwarzIneq2} in Appendix~\ref{app:MSE}, we have
\begin{equation*}
	\E\left[\left(\avgni \left(g\left(\Lmi\right) - g\left(\Li\right)\right)\right)^2 \right] \leq \E\left[(g\left(L_m\right)-g\left(L\right))^2\right].
\end{equation*}
\begin{itemize}
	\item For smooth risk functions, using the Taylor approximation~\eqref{eq:Taylor}, we have
	\begin{align}
	\E\left[(g\left(L_m\right)-g\left(L\right))^2\right]&= \E\left[\left(g'\left(L\right)(L_m-L)+\frac{g''(\Lambda_m)}{2}(L_m-L)^2\right)^2\right]\nonumber\\
 &\stackrel{(*)}{\leq} 2 \E\left[(g'\left(L\right)(L_m-L))^2\right] + 2\E\left[\left(\frac{g''(\Lambda_m)}{2}(L_m-L)^2\right)^2\right]\nonumber\\
&\stackrel{(**)}{\leq} 2\left(\E\left[(g'\left(L\right))^4\right]\right)^{1/2}\left(\E\left[(L_m-L)^4\right]\right)^{1/2} + \frac{C_g^2}{2}\E\left[(L_m-L)^4\right] \nonumber\\
	&\stackrel{(***)}{=} \cO(m^{-1}) + \cO(m^{-2}) = \cO(m^{-1})\label{eq:varsmooth}
	\end{align}
	where $(*)$, $(**)$, and $(***)$ hold by ~\eqref{eq:CauthySchwarzIneq2}, ~\eqref{eq:CauthySchwarzIneq3}, and Theorem~\ref{thm:Lm_moment2p} with $p=2$, respectively, provided that $\E\left[(g'\left(L\right))^4\right]<\infty$ and $\E[\hatH^4]<\infty$.
	
	\item For the hockey-stick risk function, due to its Lipschitz continuity, i.e., $|g(x)-g(y)|\leq |x-y|$, we have
	\begin{equation}\label{eq:varhockeystick}
	\E\left[(g\left(L_m\right)-g\left(L\right))^2\right] \leq \E\left[(L_m-L)^2\right] = \cO(m^{-1}),
	\end{equation}
	where the equality holds by Theorem~\ref{thm:Lm_moment2p} with $p=1$, provided that $\E[\hatH^2]<\infty$.
	
	\item For the indicator risk function, we consider the first term in Equation~\eqref{eq:varbound} directly and show that it converges at the rate of $\cO(m^{-1})+\cO(n^{-1})$.
	Assumption~\ref{assm:indicatordensity2} is needed for the analysis in this case.
	Detailed analysis is provided in Appendix~\ref{app:MSE}.
\end{itemize}

\begin{assumption}\label{assm:indicatordensity2}
	\begin{enumerate}[label=(\Roman*)]
		
		\item\label{assm:jointdensityB1}
		For any $i,k\in\{1,...,n\}$ and $i\neq k$, the joint density $q_m(\ell_1,\ell_2,z_1,z_2)$ of $(\Li,L_k,Z_m(X_i),Z_m(X_k))$ and its partial derivatives $\frac{\partial}{\partial \ell_u} q_m(\ell_1,\ell_2,z_1,z_2)$ ($u=1,2$) exist for every $m$ and for all $(\ell_1,\ell_2,z_1,z_2)$.
		
		\item\label{assm:jointdensityB2} For every $m\geq 1$, there exist nonnegative functions $\bar{q}_{v,m}(z_1,z_2), (v=0,1)$ such that for $u=1,2$,
		\begin{align*}
			q_m(\ell_1,\ell_2,z_1,z_2) \leq \bar{q}_{0,m}(z_1,z_2) \mbox{ and }
			\left|\frac{\partial}{\partial\ell_u} q_m(\ell_1,\ell_2,z_1,z_2)\right| \leq \bar{q}_{1,m}(z_1,z_2),\ \ \forall(\ell_1,\ell_2,z_1,z_2).
		\end{align*}
		
		\item\label{assm:jointdensityB3} For $v=0,1$ and any $r_1,r_2 \geq 0$ with $r_1 + r_2 \leq 3$,
		\begin{equation*}
			\sup_m \int_\R |z_1|^{r_1}|z_2|^{r_2} \bar{q}_{v,m}(z_1,z_2) dz_1dz_2 <\infty.
		\end{equation*}
	\end{enumerate}
\end{assumption}

\begin{proposition}\label{prop:var}
	Suppose that Assumption~\ref{assm:basic} and one of the following sets of assumptions hold:
	\begin{enumerate}
		\item The risk function $g(\cdot)$ is twice differentiable with a bounded second derivative, $\E[\left(g\left(L\right)\right)^2]<\infty$, $\E[\left(g'\left(L\right)\right)^4]<\infty$, and $\E[\hatH^4]<\infty$, or
		\item The risk function $g(\cdot)$ is a hockey-stick function, $\E[\hatH^2]<\infty$, and Assumption~\ref{assm:jointdensity} holds, or
		\item The risk function $g(\cdot)$ is an indicator function and Assumptions~\ref{assm:jointdensity} and \ref{assm:indicatordensity2} hold.
	\end{enumerate}
	Then,
	\begin{equation*}
	\Var[\rho_{mn}] = \cO(m^{-1}) +\cO(n^{-1}) = \cO(\max\{m^{-1},n^{-1}\}).
	\end{equation*}
\end{proposition}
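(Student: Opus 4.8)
The plan is to begin from the three-term bound already recorded in \eqref{eq:varbound},
\[
\Var[\rho_{mn}] \le 3\,\E\left[\left(\avgni \bigl(g(\Lmi)-g(\Li)\bigr)\right)^2\right] + \frac{3}{n}\Var[g(L)] + 3\bigl(\Bias[\rho_{mn}]\bigr)^2,
\]
and to bound each summand. The last two are immediate under the stated hypotheses: whenever $\E[(g(L))^2]<\infty$ the middle term equals $\tfrac{3}{n}\Var[g(L)]=\cO(n^{-1})$, and by Proposition~\ref{prop:bias} the squared bias is $\cO(m^{-2})$, which is negligible. Hence all the difficulty is concentrated in the first term, the mean-squared recycling error averaged over scenarios.

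For the smooth and hockey-stick risk functions I would dispose of the first term through the single-summand inequality \eqref{eq:CauthySchwarzIneq2}, which bounds the squared sample average by $\E[(g(L_m)-g(L))^2]$. The estimates \eqref{eq:varsmooth} (a Taylor expansion, Cauchy--Schwarz, and Theorem~\ref{thm:Lm_moment2p} with $p=2$) and \eqref{eq:varhockeystick} (Lipschitz continuity and Theorem~\ref{thm:Lm_moment2p} with $p=1$) then give $\cO(m^{-1})$ directly, which settles cases~1 and~2.

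The indicator case is the main obstacle, and here the single-summand bound must be abandoned, since collapsing the correlated average to one term would discard exactly the $n^{-1}$ gain we need. Instead I would expand the square directly. Writing $D_i := \1\{\Lmi\ge 0\}-\1\{\Li\ge 0\}$,
\[
\E\left[\left(\avgni D_i\right)^2\right] = \frac{1}{n^2}\sum_{i=1}^n \E[D_i^2] + \frac{1}{n^2}\sum_{i\neq k}\E[D_iD_k].
\]
Because $D_i\in\{-1,0,1\}$ we have $D_i^2=|D_i|$, and $|D_i|=1$ only when $\Li$ and $\Lmi=\Li+Z_m(X_i)/\sqrt m$ straddle $0$; bounding the joint density of $(L,Z_m)$ by $\bar p_{0,m}$ over a band of width $\cO(m^{-1/2})$ gives $\E[|D_i|]=\cO(m^{-1/2})$ from Assumption~\ref{assm:jointdensity}~\ref{assm:jointdensity3} with $r=1$, so the diagonal contributes $\tfrac{1}{n}\cO(m^{-1/2})=\cO(n^{-1})$.

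The genuinely delicate part is the $n(n-1)$ off-diagonal terms: because every $\Lmi$ reuses the same inner sample $Y_1,\dots,Y_m$, the crossings $D_i$ and $D_k$ are dependent and cannot be factored, which is exactly what Assumption~\ref{assm:indicatordensity2} is built to handle. My plan is to condition on the scenario pair $(X_i,X_k)$ and carry out a two-dimensional version of the density computation behind \eqref{eq:usefuleq1}: the product $D_iD_k$ is supported on the event that both $\Li$ and $L_k$ lie within $\cO(m^{-1/2})$ bands of $0$, a region of area $\cO(m^{-1})$, so using the joint density $q_m$ of $(\Li,L_k,Z_m(X_i),Z_m(X_k))$ together with its uniform pointwise and $\ell$-derivative bounds, a first-order expansion in $(\ell_1,\ell_2)$ about this region yields $\E[D_iD_k]=\cO(m^{-1})$ uniformly in the pair. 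The finite weighted integrals in Assumption~\ref{assm:indicatordensity2}~\ref{assm:jointdensityB3} (for $r_1+r_2\le 3$) are precisely what keep the remainder of that expansion bounded uniformly in $m$, and bookkeeping this two-dimensional remainder under the shared-sample dependence is where I expect the real work to lie. Summing, the off-diagonal block is $\tfrac{n(n-1)}{n^2}\cO(m^{-1})=\cO(m^{-1})$, so the first term is $\cO(m^{-1})+\cO(n^{-1})$; combined with the middle and bias terms this delivers $\Var[\rho_{mn}]=\cO(m^{-1})+\cO(n^{-1})$.
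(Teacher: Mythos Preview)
Your proposal is correct and follows essentially the same route as the paper: the three-term bound \eqref{eq:varbound}, the single-summand reduction plus \eqref{eq:varsmooth}/\eqref{eq:varhockeystick} for the smooth and hockey-stick cases, and for the indicator case the diagonal/off-diagonal split with the cross term controlled via a first-order Taylor expansion of $q_m$ in $(\ell_1,\ell_2)$ under Assumption~\ref{assm:indicatordensity2}. The only cosmetic difference is that the paper bounds the diagonal term trivially by $\tfrac{1}{n}$ via $(g(x)-g(y))^2\le 1$, whereas you obtain the sharper $\tfrac{1}{n}\cO(m^{-1/2})$; either suffices.
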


Proposition~\ref{prop:var} implies that the number of outer scenarios should grow in the same order as the number of inner samples for the variance to converge quickly; this is also the condition for the MSE to converge quickly.
Matching the total number of inner samples in the GNS procedure and the standard nested simulation, i.e., $m=\Gamma$, Proposition~\ref{prop:var} states that the former's variance converges at $\cO(m^{-1})=\cO(\Gamma^{-1})$ while the latter's variance converges at $\cO(\Gamma^{-2/3})$~\citep[Proposition 2]{gordy2010nested}.



\subsection{Asymptotic Mean Square Error}\label{subsec:Analysis_MSE}
Combining Propositions~\ref{prop:bias} and~\ref{prop:var}, we immediately establish the asymptotic MSE of $\rho_{mn}$, as summarized in Theorem~\ref{thm:MSE}.
\begin{theorem}\label{thm:MSE}
	Suppose the conditions in Proposition~\ref{prop:var} hold. Then,
	
	\begin{equation*}
	\MSE(\rho_{mn}) = \cO(m^{-1}) +\cO(n^{-1}) = \cO(\max\{m^{-1},n^{-1}\}).
	\end{equation*}
\end{theorem}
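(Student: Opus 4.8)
The plan is to invoke the standard bias–variance decomposition and then substitute the two rates already established in Propositions~\ref{prop:bias} and~\ref{prop:var}. Concretely, I would begin by writing
\begin{equation*}
\MSE(\rho_{mn}) = \Var[\rho_{mn}] + \left(\Bias[\rho_{mn}]\right)^2,
\end{equation*}
which holds for any square-integrable estimator; the required square-integrability of $\rho_{mn}$ is guaranteed in each of the three cases by the moment assumptions carried over from Proposition~\ref{prop:var}.

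The first step is to verify that the hypotheses of Proposition~\ref{prop:var} already entail those of Proposition~\ref{prop:bias}, so that both results may be applied simultaneously under the single set of assumptions stated in this theorem. This is immediate in the hockey-stick and indicator cases, where Proposition~\ref{prop:var} merely appends Assumption~\ref{assm:indicatordensity2} (indicator case) or a moment condition (hockey-stick case) to the hypotheses of Proposition~\ref{prop:bias}. In the smooth case the only point to check is that $\E[\hatH^4]<\infty$ implies $\E[\hatH^2]<\infty$, which follows from Jensen's inequality (equivalently, Lyapunov's moment inequality). Hence the conditions of Proposition~\ref{prop:var}, which are exactly the conditions assumed here, suffice to cite both propositions at once.

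Next I would substitute the two rates. Proposition~\ref{prop:var} gives $\Var[\rho_{mn}] = \cO(m^{-1}) + \cO(n^{-1})$, while Proposition~\ref{prop:bias} gives $\Bias[\rho_{mn}] = \cO(m^{-1})$ and therefore $\left(\Bias[\rho_{mn}]\right)^2 = \cO(m^{-2})$. Combining these,
\begin{equation*}
\MSE(\rho_{mn}) = \cO(m^{-1}) + \cO(n^{-1}) + \cO(m^{-2}) = \cO(m^{-1}) + \cO(n^{-1}),
\end{equation*}
where the last equality absorbs the $\cO(m^{-2})$ term into the dominating $\cO(m^{-1})$ term. Rewriting $\cO(m^{-1}) + \cO(n^{-1}) = \cO(\max\{m^{-1}, n^{-1}\})$ finishes the argument.

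Frankly, there is no serious obstacle here: all of the analytic difficulty, namely the Taylor expansion together with the $\cL^{2p}$ bounds of Theorem~\ref{thm:Lm_moment2p} in the smooth case, and the joint-density smoothness arguments underlying Lemma~\ref{lem:usefuleqs} in the hockey-stick and indicator cases, has already been absorbed into Propositions~\ref{prop:bias} and~\ref{prop:var}. The only point worth stating with care is the compatibility-of-hypotheses check in the smooth case, so that the theorem can legitimately invoke both propositions under one hypothesis set; everything else is the routine bias–variance bookkeeping displayed above.
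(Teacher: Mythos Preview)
Your proposal is correct and follows essentially the same approach as the paper, which simply states that the result follows immediately from combining Propositions~\ref{prop:bias} and~\ref{prop:var} via the bias--variance decomposition. Your explicit verification that the hypotheses of Proposition~\ref{prop:var} subsume those of Proposition~\ref{prop:bias} is a welcome point of care that the paper leaves implicit.
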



Theorem~\ref{thm:MSE} shows that $n$ and $m$ should grow at the same rate for the MSE of the GNS estimator to converge quickly.
Then, matching the total number of inner samples in the GNS procedure and the standard nested simulation, i.e., $m=\Gamma$, the GNS estimator's MSE converges at $\cO(m^{-1})=\cO(\Gamma^{-1})$ but the MSE of the nested simulation with optimal simulation budget allocation in~\cite{gordy2010nested} converges at $\cO(\Gamma^{-2/3})$.
Clearly the GNS estimator's MSE converges faster.
Also, the GNS procedure is arguably easier to implement compared to the regression-based approach~\citep{broadie2015risk} and the kernel-based approach~\citep{hong2017kernel} because the GNS procedure does not require basis functions, kernel function, or kernel bandwidth.

\subsection{Central Limit Theorem and Variance Estimators}\label{subsec:CLT}
In this section we establish a Central Limit Theorem (CLT) of the GNS risk estimator $\rho_{mn}$ and prove a valid variance estimator for $\rho_{mn}$.
Constructing confidence intervals is a common use of CLT, but the variance of nested estimators are usually difficult to estimate, e.g., by running macro replications to get multiple estimates of $\rho$ then estimate the sample variance.
We propose a variance estimator for $\rho_{mn}$ that requires only one run of the GNS procedure and that converges to the asymptotic variance.
Simply put, the CLT result and variance estimator in this section lead to asymptotically valid confidence intervals of the GNS estimator $\rho_{mn}$.

The analysis for the smooth and hockey-stick risk functions are similar, but are different from the analysis for the indicator risk function.
So, for clarity, we provide separate presentations in Sections~\ref{subsubsec:CLTsmoothhockeystick} and~\ref{subsubsec:CLTindicator}.

\subsubsection{Analysis for Smooth and Hockey-Stick Functions}\label{subsubsec:CLTsmoothhockeystick}
The CLT for the GNS estimator $\rho_{mn}$ with smooth and hockey-stick risk functions are based on two-sample U-statistics~\citep[see Chapter 5 in][for example]{serfling2009approximation}, whose definition and asymptotic normality are stated below.

\begin{definition}\label{def:Ustats}
	Let $\{X_i,i=1,\ldots,n\}$ and $\{Y_j,j=1,\ldots,m\}$ be i.i.d. samples of two independent random variables $X$ and $Y$, respectively.
	For a given mapping $U(x,y)$, the average $\cU_{mn} = \frac{1}{mn}\sum_{i=1}^{n}\sum_{j=1}^{m}U(X_i,Y_j)$ is called a two-sample U-statistic.
\end{definition}

\begin{lemma}[Asymptotic Normality of a Two-Sample U-Statistic]\label{lem:Ustatistics}
	Let $\cU_{mn}$ be a two-sample U-statistic in Definition~\ref{def:Ustats}.
	If $\E\left[(U(X,Y))^2\right]<\infty$, then $\cU_{mn}$ is asymptotically normally distributed, as $\min\{m,n\}\rightarrow\infty$, with mean $\mu =\E\left[U(X,Y)\right]$ and variance $\sigma_{mn}^2 = \frac{\sigma_1^2}{n} + \frac{\sigma_2^2}{m}$ where $\sigma_1^2 =\Var\left[\E\left[U(X,Y)|X\right]\right]$ and $\sigma_2^2 = \Var\left[\E\left[U(X,Y)|Y\right]\right]$.
	Mathematically,
	\begin{equation*}
	\frac{\cU_{mn} - \mu}{\sigma_{mn}} \condist N(0,1), \mbox{ as } \min\{m,n\}\rightarrow \infty.
	\end{equation*}
\end{lemma}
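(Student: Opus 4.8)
The plan is to use the classical Hoeffding decomposition (equivalently, the H\'ajek projection) to reduce the two-sample U-statistic to a sum of two independent i.i.d.\ averages, whose joint asymptotic normality then follows from a Lindeberg-type central limit theorem. Write the centered kernel $\widetilde{U}(x,y)=U(x,y)-\mu$ and introduce the two first-order projections $g_1(x)=\E[\widetilde{U}(x,Y)]$ and $g_2(y)=\E[\widetilde{U}(X,y)]$, together with the doubly-centered remainder $\psi(x,y)=\widetilde{U}(x,y)-g_1(x)-g_2(y)$, which satisfies $\E[\psi(x,Y)]=0$ for every $x$ and $\E[\psi(X,y)]=0$ for every $y$. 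Since $\E[(U(X,Y))^2]<\infty$, conditional Jensen's inequality gives $\E[g_1(X)^2]<\infty$, $\E[g_2(Y)^2]<\infty$, and $\E[\psi(X,Y)^2]<\infty$, with $\sigma_1^2=\Var[g_1(X)]$ and $\sigma_2^2=\Var[g_2(Y)]$ matching the quantities in the statement.

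Substituting the decomposition into the definition of $\cU_{mn}$ yields
\[
\cU_{mn}-\mu = \avgni g_1(X_i) + \avgmj g_2(Y_j) + \frac{1}{mn}\sum_{i=1}^{n}\sum_{j=1}^{m}\psi(X_i,Y_j).
\]
First I would bound the remainder term: because $\psi$ is degenerate in each argument and the $X_i,Y_j$ are mutually independent, every cross-product expectation vanishes and the variance of the double sum collapses to $\frac{1}{mn}\E[\psi(X,Y)^2]=\cO((mn)^{-1})$. As $\sigma_{mn}^2=\sigma_1^2/n+\sigma_2^2/m$ has exact order $\max\{m^{-1},n^{-1}\}$ (assuming $\sigma_1^2+\sigma_2^2>0$), the remainder divided by $\sigma_{mn}$ tends to zero in $\cL^2$ and hence in probability; Slutsky's theorem then lets me work solely with the projection $W_{mn}=\avgni g_1(X_i)+\avgmj g_2(Y_j)$.

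It then remains to show $W_{mn}/\sigma_{mn}\condist N(0,1)$. Here $W_{mn}$ is a sum of the $n+m$ mutually independent, mean-zero summands $\{g_1(X_i)/n\}$ and $\{g_2(Y_j)/m\}$, whose total variance is exactly $\sigma_{mn}^2$. I would treat these as a single triangular array indexed by $(m,n)$ and verify the Lindeberg condition: using $\sigma_{mn}\ge\sigma_1/\sqrt{n}$ and $\sigma_{mn}\ge\sigma_2/\sqrt{m}$, the truncation levels $\epsilon\,n\,\sigma_{mn}$ and $\epsilon\,m\,\sigma_{mn}$ both diverge as $\min\{m,n\}\to\infty$, so the square-integrability of $g_1(X)$ and $g_2(Y)$ forces each truncated second-moment contribution to vanish by dominated convergence. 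The Lindeberg--Feller theorem then delivers the standard-normal limit for $W_{mn}/\sigma_{mn}$, and combining with the previous step completes the proof.

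The main obstacle I anticipate is this last step, specifically accommodating the fact that the two variance contributions $\sigma_1^2/n$ and $\sigma_2^2/m$ need not be of the same order: either $(\sigma_1^2/n)/\sigma_{mn}^2$ or $(\sigma_2^2/m)/\sigma_{mn}^2$ may tend to $0$, so one cannot simply invoke a fixed bivariate CLT and must instead route the argument through the triangular-array Lindeberg condition (or, equivalently, multiply the two characteristic functions and control each factor uniformly in the weights). The degenerate boundary cases $\sigma_1=0$ or $\sigma_2=0$ also warrant a brief separate treatment, since then one average drops out of $\sigma_{mn}^2$ and the limit reduces to a single classical i.i.d.\ central limit theorem.
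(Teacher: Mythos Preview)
The paper does not supply its own proof of this lemma: it is quoted as a known result and attributed to \cite{serfling2009approximation}, Chapter~5. Your Hoeffding/H\'ajek projection argument is precisely the standard textbook route (and is correct as sketched), so there is nothing substantive to compare.

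It is worth noting, though, that the paper does carry out essentially your argument in Appendix~D, in the proof of Lemma~\ref{lem indLCT U}, where the kernel $U_{\epsilon_m}$ depends on $m$ and Lemma~\ref{lem:Ustatistics} therefore cannot be cited directly. There the authors write exactly your decomposition $\cU_{\epsilon_m,mn}-\rho=\widetilde{\cU}_{\epsilon_m,n}+\widetilde{\cU}_{\epsilon_m,m}+(\text{remainder})$, bound the remainder in $\cL^2$ via the same degeneracy computation you describe, and then---instead of Lindeberg--Feller on the pooled $n+m$ array---treat the two independent projection averages separately: the $X$-part via the classical i.i.d.\ CLT and the $Y$-part via a characteristic-function expansion and L\'evy's continuity theorem (needed because the $Y$-summands change with $m$). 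For Lemma~\ref{lem:Ustatistics} itself, where the kernel is fixed, your single triangular-array Lindeberg verification is arguably cleaner since it handles both components and arbitrary growth rates of $m/n$ in one stroke; the paper's two-step route is tailored to the $m$-dependent kernel they actually need.
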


In the following, we will show that the GNS estimator $\rho_{mn}$ can be decomposed into two terms: One term is a two-sample U-statistic and the other term vanishes so quickly that it does not affect the asymptotic distribution of $\rho_{mn}$.
Recall that $\Lmi = \avgmj \hatHij$, so $\rho_{mn}$ can be decomposed as
\begin{equation}\label{eq:decomposerho}
\rho_{mn} =\avgni g\left(\Lmi\right) = \cU_{mn} + r_{mn},
\end{equation}
where
\begin{align}
\cU_{mn}&:= \frac{1}{mn}\sum_{i=1}^n\sum_{j=1}^m \left[g\left(\Li\right) + g'\left(\Li\right)\left(\hatHij - \Li\right)\right], \mbox{ and }\label{eq:ustat}\\
r_{mn} &:= \avgni \left[g\left(\Lmi\right) - g\left(\Li\right) - g'\left(\Li\right)\left(\Lmi - \Li\right)\right] .\label{eq:remainderrmn}
\end{align}
By Assumption~\ref{assm:basic}~\ref{assm:independence}, the outer scenarios and the inner samples are i.i.d. samples of two independent random variables.
Then $\cU_{mn}$ in~\eqref{eq:ustat} is a two-sample U-statistic by Definition~\ref{def:Ustats} with the mapping
\begin{equation}\label{eq:Umapping}
U\left(X,Y\right) = g\left(L(X)\right) + g'\left(L\left(X\right)\right)\left(\hatH\left(X,Y\right) - L\left(X\right)\right).
\end{equation}

Next, we validate the conditions of Lemma~\ref{lem:Ustatistics} and restate its conclusion for the mapping~\eqref{eq:Umapping}.
Firstly, note that
\begin{align}
&\E\left[(U(X,Y))^2\right] \stackrel{(*)}{\leq} 3 \left(\E[(g(L))^2] + \E[(g'(L)\hatH)^2] + \E[(g'(L)L)^2]\right)\label{eq:decomposeUmap1}\\
& \stackrel{(**)}{\leq} 3 \left(\E[(g(L))^2] + (\E[(g'(L))^4])^{1/2}(\E[\hatH^4])^{1/2} + (\E[(g'(L))^4])^{1/2}(\E[L^4])^{1/2}\right) \label{eq:decomposeUmap2},
\end{align}
where $(*)$ and $(**)$ hold by inequalities~\eqref{eq:CauthySchwarzIneq1} and~\eqref{eq:CauthySchwarzIneq3} in Appendix~\ref{app:MSE}, respectively.
Then, the moment condition in Lemma~\ref{lem:Ustatistics}, i.e., $\E\left[(U(X,Y))^2\right]<\infty$ can be satisfied by the following:
\begin{itemize}
	\item For the smooth risk functions, in light of~\eqref{eq:decomposeUmap2}, sufficient conditions are $\E[(g(L))^2]<\infty$, $\E[(g'(L))^4]<\infty$, $\E[\hatH^4]<\infty$, and $\E[L^4]<\infty$.
	Also, by Lemma~\ref{lem:finitemoment}, $\E[\hatH^4]<\infty$ implies $\E[L^4]<\infty$ so only the first three conditions need to be explicitly stated.
    \item For the hockey-stick function, in light of~\eqref{eq:decomposeUmap1}, sufficient conditions are $\E[(g(L))^2]<\infty$, $\E[(g'(L)\hatH)^2]<\infty$, and $\E[(g'(L)L)^2]<\infty$.
    Because $g(x)=\max\{x,0\}\leq |x|$ we have $\E[(g(L))^2]\leq \E[L^2]$.
    Also, because $g'(x) = \1\{x\geq 0\}\leq 1$, we have $\E[(g'(L)\hatH)^2]\leq \E[\hatH^2]$ and $\E[(g'(L)L)^2]\leq \E[L^2]$.
    So the sufficient conditions are simplified to $\E[\hatH^2]<\infty$ and $\E[L^2]<\infty$.
    Lastly, by Lemma~\ref{lem:finitemoment}, these conditions are further simplified to $\E[\hatH^2]<\infty$.
\end{itemize}
Note that these moment conditions also ensure the existence of asymptotic variances $\sigma_{1}^2$ and $\sigma_2^2$.

Next, consider the mean $\mu$ and the two variances $\sigma_{1}^2$ and $\sigma_{2}^2$ in Lemma~\ref{lem:Ustatistics} for the mapping~\eqref{eq:Umapping}.
Note that $	\E[U(X,Y)|X] =  g\left(L(X)\right) + g'\left(L\left(X\right)\right)\left(\E\left[\hatH\left(X,Y\right)|X \right] - L\left(X\right)\right) \stackrel{(*)}{=} g(L(X)),$
where $(*)$ holds because $\E\left[\hatH\left(X,Y\right)|X \right]=L(X)$ by~\eqref{eq:LtauLR}.
Also, by the independence of $X$ and $Y$ we have $\E\left[U(X,Y)|Y\right] = \E[g(L)] + \E\left[g'(L)\hatH|Y\right] - \E\left[g'(L)L\right]$, where the first and the last expectations are constants.
Therefore, we have
\begin{align}
	\mu &=\E\left[U(X,Y)\right] = \E\left[\E\left[U(X,Y)|X\right]\right] = \E\left[g\left(L\right)\right] = \rho, \mbox{ and } \nonumber\\
	\sigma_{1}^2 &= \Var\left[\E\left[U(X,Y)|X\right]\right] = \Var\left[g\left(L\right)\right] = \E[g(L)^2] - (\E[g(L)])^2, \mbox{ and } \label{eq:asymvar1}\\
	\sigma_{2}^2 &=  \Var\left[\E\left[g'\left(L\right)\hatH|Y\right]\right] = \E\left[\left(\E\left[g'\left(L\right)\hatH|Y\right]\right)^2\right] - \left(\E\left[g'\left(L\right)\hatH\right]\right)^2.\label{eq:asymvar2}
\end{align}
%
%

Then Lemma~\ref{lem:Ustatistics} implies that $\sigma_{mn}^{-1}(\cU_{mn}-\rho)\condist \cN(0,1)$ as $\min\{m,n\}\rightarrow \infty$.
But, to make a conclusion about the asymptotic distribution of the GNS estimator $\rho_{mn}$, we also need to consider the remainder term $r_{mn}$ in~\eqref{eq:remainderrmn}.

Note that $r_{mn}$ in~\eqref{eq:remainderrmn} is an average of $n$ identically distributed samples of $r_{m}$ as defined in~\eqref{eq:remainder}.
By~\eqref{eq:smoothbias} and~\eqref{eq:hockeysticbias} we have $\E[|r_{mn}|]\leq \E[|r_m|]=\cO(m^{-1})$ and so
\begin{equation*}
	\E\left[\left|\frac{r_{mn}}{\sigma_{mn}}\right|\right] = \left( \frac{\sigma_1^2}{m} + \frac{\sigma_2^2}{n}\right)^{-1/2} \cO(m^{-1})=\cO\left(\left[m\left(\sigma_1^2+\sigma_2^2 \cdot\frac{m}{n}\right)\right]^{-\frac{1}{2}}\right)\rightarrow 0, \mbox{ as } \min\{m,n\}\rightarrow\infty.
\end{equation*}
This means that $\frac{r_{mn}}{\sigma_{mn}}\conlone 0$ and hence $\frac{r_{mn}}{\sigma_{mn}}\condist 0$ as $\min\{m,n\}\rightarrow\infty$.
Finally, applying the Slutsky's theorem to~\eqref{eq:decomposerho} we arrive at the desired CLT result for the GNS estimator $\rho_{mn}$, as stated in Theorem~\ref{thm:CLT}.
\begin{theorem}\label{thm:CLT}
	Suppose that Assumption~\ref{assm:basic} and one of the following sets of assumptions hold:
	\begin{enumerate}
		\item The risk function $g(\cdot)$ is twice differentiable with a bounded second derivative, $\E\left[(g\left(L\right))^2\right] < \infty$, $\E\left[(g'\left(L\right))^4\right] < \infty$, and $\E\left[\hatH^4\right]<\infty$, or
		
		\item The risk function $g(\cdot)$ is a hockey-stick function, Assumption~\ref{assm:jointdensity} holds, and $\E\left[\hatH^2\right]<\infty$.
	\end{enumerate}
	Then,
	\begin{equation*}\label{eq:CLTsmooth}
	\frac{\rho_{mn}-\rho}{\sigma_{mn}} \condist \cN(0,1), \mbox{ as } \min\{m,n\}\rightarrow\infty,
	\end{equation*}
	where $\sigma_{mn}^2 = \frac{\sigma_1^2}{n} + \frac{\sigma_2^2}{m}$, $\sigma_{1}^2 =  \Var\left[g\left(L\right)\right]$, and $\sigma_2^2 =  \Var\left[\E\left[g'\left(L\right)\hatH|Y\right]\right]$.
\end{theorem}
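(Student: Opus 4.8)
The plan is to build the proof on the decomposition $\rho_{mn}=\cU_{mn}+r_{mn}$ recorded in~\eqref{eq:decomposerho}, handling the linearized piece $\cU_{mn}$ and the remainder $r_{mn}$ separately. The governing observation is that the first-order expansion $g(\Lmi)\approx g(\Li)+g'(\Li)(\Lmi-\Li)$, once $\Lmi$ is written as the inner average $\avgmj\hatHij$, turns the leading term into an \emph{exact} double average over the outer index $i$ and the inner index $j$, i.e.\ a two-sample U-statistic with kernel $U$ given in~\eqref{eq:Umapping}. I would therefore apply Lemma~\ref{lem:Ustatistics} to obtain the CLT for $\cU_{mn}$, and then use Slutsky's theorem to argue that the normalized remainder $r_{mn}/\sigma_{mn}$ is asymptotically negligible.

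To invoke Lemma~\ref{lem:Ustatistics} I would first verify its square-integrability hypothesis $\E[(U(X,Y))^2]<\infty$ case by case using the elementary bounds~\eqref{eq:decomposeUmap1}--\eqref{eq:decomposeUmap2}: in the smooth case the stated conditions $\E[(g(L))^2]<\infty$, $\E[(g'(L))^4]<\infty$, and $\E[\hatH^4]<\infty$ (the last implying $\E[L^4]<\infty$ through Lemma~\ref{lem:finitemoment}) suffice, whereas in the hockey-stick case the bounds $g(x)\leq|x|$ and $g'(x)\leq 1$ collapse everything to $\E[\hatH^2]<\infty$. I would then identify the mean and the two variance components by conditioning: the tower property together with the unbiasedness $\E[\hatH(X,Y)\,|\,X]=L(X)$ from~\eqref{eq:LtauLR} gives $\E[U(X,Y)\,|\,X]=g(L(X))$, hence $\mu=\rho$ and $\sigma_1^2=\Var[g(L)]$; conditioning on $Y$ instead and discarding the constant terms yields $\sigma_2^2=\Var[\E[g'(L)\hatH\,|\,Y]]$, in agreement with~\eqref{eq:asymvar1}--\eqref{eq:asymvar2}. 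This produces $\sigma_{mn}^{-1}(\cU_{mn}-\rho)\condist\cN(0,1)$.

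It then remains to dispose of the remainder. Since $r_{mn}$ in~\eqref{eq:remainderrmn} averages $n$ identically distributed copies of $r_m$, Jensen's inequality gives $\E[|r_{mn}|]\leq\E[|r_m|]=\cO(m^{-1})$, the last equality being exactly the bias estimates~\eqref{eq:smoothbias} (smooth case) and~\eqref{eq:hockeysticbias} (hockey-stick case, via Lemma~\ref{lem:usefuleqs}). The point requiring care is that this rate must beat the normalization for every admissible joint growth of $m$ and $n$: writing $\E[|r_{mn}/\sigma_{mn}|]=\cO(m^{-1})/\sigma_{mn}=\cO((m^2\sigma_{mn}^2)^{-1/2})$ and noting $m^2\sigma_{mn}^2=\sigma_1^2 m^2/n+\sigma_2^2 m\geq \sigma_2^2 m\to\infty$ (for nondegenerate $\sigma_2>0$), one gets $\E[|r_{mn}/\sigma_{mn}|]\to 0$ as $\min\{m,n\}\to\infty$, uniformly in the ratio $m/n$. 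Hence $r_{mn}/\sigma_{mn}\conlone 0$, so $r_{mn}/\sigma_{mn}\condist 0$, and Slutsky's theorem applied to~\eqref{eq:decomposerho} delivers the claimed CLT.

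I expect the genuine difficulty to be conceptual rather than computational: recognizing that the sample recycling, which renders all the $\Lmi$ mutually dependent through the common inner sample $Y_1,\ldots,Y_m$, is exactly encoded by the two-sample U-statistic structure, so that the otherwise intricate correlations among the $g(\Lmi)$ collapse into the two transparent variance components $\sigma_1^2=\Var[g(L)]$ and $\sigma_2^2=\Var[\E[g'(L)\hatH\,|\,Y]]$. A secondary technical subtlety is the hockey-stick case, where $g'$ exists only as the almost-everywhere derivative $\1\{x\geq 0\}$; there the linearization remainder cannot be controlled by a second-order Taylor term and must instead be bounded through the density regularity supplied by Lemma~\ref{lem:usefuleqs} and Assumption~\ref{assm:jointdensity}.
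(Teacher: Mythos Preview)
Your proposal is correct and follows essentially the same route as the paper: the decomposition~\eqref{eq:decomposerho}, the identification of $\cU_{mn}$ as a two-sample U-statistic handled by Lemma~\ref{lem:Ustatistics} with the moment checks~\eqref{eq:decomposeUmap1}--\eqref{eq:decomposeUmap2}, the computation of $\mu,\sigma_1^2,\sigma_2^2$ via conditioning, the remainder bound $\E[|r_{mn}|]\leq\E[|r_m|]=\cO(m^{-1})$ from~\eqref{eq:smoothbias}--\eqref{eq:hockeysticbias}, and the final Slutsky step all match. Your treatment of $\E[|r_{mn}/\sigma_{mn}|]\to 0$ via $m^2\sigma_{mn}^2\geq\sigma_2^2 m\to\infty$ is an equivalent rewriting of the paper's $\cO\bigl([m(\sigma_1^2+\sigma_2^2 m/n)]^{-1/2}\bigr)\to 0$, and your explicit mention of the implicit nondegeneracy $\sigma_2^2>0$ is a fair observation.
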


Theorem~\ref{thm:CLT} demonstrates the asymptotic normality of $\rho_{mn}$ and the asymptotic variance decomposition due to the stochasticities of $X$ and $Y$ separately.
The asymptotic variance has two parts: The first part, $\sigma_{1}^2 =  \Var\left[g\left(L\right)\right]$, is due to the stochasticity of the outer scenario $X$, and $\frac{\sigma_{1}^2}{n}$ would have been the asymptotic variance in a classical CLT for the sample average of $n$ i.i.d. samples of $g(L)$.
The second part, $\sigma_2^2 =  \Var\left[\E\left[g'\left(L\right)\hatH|Y\right]\right]$, is due to the stochasticity of the inner sample $Y$ that affects all outer scenarios due to sample recycling.
Moreover, the derivative $g'$ in the inner conditional expectation indicates that $\sigma_{2}^2$ is also affected by the nonlinearity of the risk function $g$.

A CLT result like Theorem~\ref{thm:CLT} is useful for constructing confidence intervals, typically by replacing unknown population mean and variance by the corresponding sample estimates.
However, the variance of nested simulation estimators are typically difficult or costly to estimate.
One way is by running \textit{macro replications}, i.e., independent repetitions of the entire simulation procedure, then estimate the sample variance of i.i.d. samples of nested simulation estimators.
However, standard nested simulation procedure is costly to run even once, so running macro replications is prohibitively burdensome.

In contrast, we propose a variance estimator for our GNS estimator $\rho_{mn}$ that only requires running the GNS procedure once.
Specifically, $\sigma_{mn}^2$ is estimated by $\widehat{\sigma}_{mn}^2 = \frac{\widehat{\sigma}_{1,mn}^2}{n} + \frac{\widehat{\sigma}_{2,mn}^2}{m}$, where the estimators for $\sigma_{1}^2$ and $\sigma_{2}^2$ are
\begin{align}
	\widehat{\sigma}_{1,mn}^2 &= \avgni \left(g\left(\Lmi\right)\right)^2 - \left(\avgni g\left(\Lmi\right)\right)^2, \mbox{ and }\label{eq:sig1hat}\\
	\widehat{\sigma}_{2,mn}^2 &= \avgmj \left(\avgni g'\left(\Lmi\right)\hatHij\right)^2 - \left(\avgni g'\left(\Lmi\right)\Lmi\right)^2, \mbox{ respectively.}\label{eq:sig2hat}
\end{align}
%
Theorem~\ref{thm:varianceestimate} shows that the proposed variance estimators are valid as they converge to the corresponding population variances.
The proof for Theorem~\ref{thm:varianceestimate} is provided in Appendix~\ref{app:CLT}.
\begin{theorem}\label{thm:varianceestimate}
	Suppose the conditions in Theorem~\ref{thm:CLT} hold. Then,
	\begin{equation*}\label{eq:CIsmooth}
		\widehat{\sigma}_{1,mn}^2 \conprob \sigma_1^2\quad\mbox{, }\quad \widehat{\sigma}_{2,mn}^2  \conprob  \sigma_2^2,\quad\mbox{ and }\quad \widehat{\sigma}_{mn}^2/\sigma_{mn}^2 \conprob 1, \quad\mbox{ as } \min\{m,n\}\rightarrow\infty.
		\end{equation*}
\end{theorem}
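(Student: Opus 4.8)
The plan is to establish the three convergences by reducing everything to laws of large numbers applied to the i.i.d.\ outer scenarios and inner samples, after first replacing the estimated conditional expectations $\Lmi$ by their targets $\Li$. The third claim follows from the first two by Slutsky's theorem together with an exact algebraic identity: writing $w_{mn} = (\sigma_1^2/n)/\sigma_{mn}^2 \in [0,1]$ (so $1-w_{mn} = (\sigma_2^2/m)/\sigma_{mn}^2$), one has $\widehat{\sigma}_{mn}^2/\sigma_{mn}^2 = w_{mn}\,(\widehat{\sigma}_{1,mn}^2/\sigma_1^2) + (1-w_{mn})\,(\widehat{\sigma}_{2,mn}^2/\sigma_2^2)$; since $w_{mn}\in[0,1]$ and both ratios converge to $1$ in probability, the convex combination converges to $1$ regardless of how the weights behave. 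Thus it suffices to prove $\widehat{\sigma}_{1,mn}^2\conprob\sigma_1^2$ and $\widehat{\sigma}_{2,mn}^2\conprob\sigma_2^2$, and throughout I would use the template ``replace $L_m$ by $L$, invoke the weak law of large numbers on the resulting i.i.d.\ average, then apply the continuous mapping theorem,'' controlling each replacement error in $\cL^1$.

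For $\widehat{\sigma}_{1,mn}^2$, I would show $\avgni g(\Lmi)\conprob\E[g(L)]$ and $\avgni (g(\Lmi))^2\conprob\E[(g(L))^2]$. Splitting $\avgni g(\Lmi)=\avgni g(\Li)+\avgni(g(\Lmi)-g(\Li))$, the first average is over i.i.d.\ terms and converges by the weak law of large numbers, while the second is bounded in $\cL^1$ by $\E[|g(L_m)-g(L)|]\le(\E[(g(L_m)-g(L))^2])^{1/2}=\cO(m^{-1/2})$ using the bounds in~\eqref{eq:varsmooth} and~\eqref{eq:varhockeystick}. The squared average is treated identically after factoring $(g(\Lmi))^2-(g(\Li))^2=(g(\Lmi)-g(\Li))(g(\Lmi)+g(\Li))$ and applying Cauchy--Schwarz, the second factor being $\cL^2$-bounded since $\E[(g(L_m))^2]$ is uniformly bounded. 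The continuous mapping theorem then yields $\widehat{\sigma}_{1,mn}^2\conprob\E[(g(L))^2]-(\E[g(L)])^2=\sigma_1^2$, matching~\eqref{eq:asymvar1}.

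For $\widehat{\sigma}_{2,mn}^2$, the subtracted term is handled as above: for the hockey-stick function $g'(x)x=g(x)$, so $\avgni g'(\Lmi)\Lmi=\avgni g(\Lmi)\conprob\E[g(L)]=\E[g'(L)L]=\E[g'(L)\hatH]$, and for smooth $g$ the map $x\mapsto g'(x)x$ is handled by the same replacement-plus-WLLN argument. The crux is the first term $\avgmj A_j^2$ with $A_j:=\avgni g'(\Lmi)\hatHij$. Introducing $a_j:=\E[g'(L)\hatH\mid Y_j]$ and $\tilde A_j:=\avgni g'(\Li)\hatHij$, I would decompose $A_j=a_j+(\tilde A_j-a_j)+(A_j-\tilde A_j)$. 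The average $\avgmj a_j^2$ is over i.i.d.\ terms and converges to $\E[a_1^2]=\E[(\E[g'(L)\hatH\mid Y])^2]$ by the weak law of large numbers. Conditionally on $Y_j$, $\tilde A_j-a_j$ is a centered sample mean over the i.i.d.\ scenarios, so $\E[(\tilde A_j-a_j)^2\mid Y_j]=n^{-1}\Var_X[g'(L)\hatH(X,Y_j)]$, giving $\E[\avgmj(\tilde A_j-a_j)^2]=\cO(n^{-1})$; and the cross term $\avgmj a_j(\tilde A_j-a_j)$ has mean zero and $\cL^2$-norm $\cO(n^{-1/2})$, the key point being that for $j\neq k$ the products $(\tilde A_j-a_j)(\tilde A_k-a_k)$ couple only through the shared scenarios $X_i$, so that only the diagonal-in-$i$ terms survive and contribute $\cO(n^{-1})$ overall.

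The main obstacle is the remaining replacement error $A_j-\tilde A_j=\avgni(g'(\Lmi)-g'(\Li))\hatHij$ inside $\avgmj A_j^2$. By Cauchy--Schwarz it suffices to show $\avgmj(A_j-\tilde A_j)^2\conprob0$ while $\avgmj(A_j+\tilde A_j)^2$ remains bounded in probability. For smooth $g$, the bounded second derivative gives $|g'(\Lmi)-g'(\Li)|\le C_g|\Lmi-\Li|$, after which Theorem~\ref{thm:Lm_moment2p} and the assumption $\E[\hatH^4]<\infty$ close the estimate. The genuinely delicate case is the hockey-stick function, where $g'(\Lmi)-g'(\Li)=\1\{\Lmi\geq0\}-\1\{\Li\geq0\}$ is the indicator that $\Lmi$ and $\Li$ straddle the kink; here I would lean on Assumption~\ref{assm:jointdensity} and the localization estimates underlying Lemma~\ref{lem:usefuleqs} to show this event has probability $\cO(m^{-1})$, so the weighted average still vanishes despite the discontinuity of $g'$. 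Carefully tracking the shared-$Y_j$ dependence and the interchange of the $m$- and $n$-averages in this final step is where the bulk of the technical effort lies.
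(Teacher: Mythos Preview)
Your overall architecture matches the paper's: decompose each estimator, replace $\Lmi$ by $\Li$, control the replacement in $\cL^1$, and finish with the weak law of large numbers on the i.i.d.\ pieces. Your convex-combination argument for $\widehat\sigma_{mn}^2/\sigma_{mn}^2$ is a clean variant of the paper's continuous-mapping step. For $\widehat\sigma_{2,mn}^2$, the paper avoids your explicit cross-term bookkeeping by working termwise in $\cL^1$: it bounds $\E\bigl[|\avgmj(\hatR_{m,j}^2-\hatR_j^2)|\bigr]\le\E\bigl[|\hatR_{m}^2-\hatR^2|\bigr]$ and $\E\bigl[|\avgmj(\hatR_{j}^2-R_j^2)|\bigr]\le\E\bigl[|\hatR^2-R^2|\bigr]$, then applies inequality~\eqref{eq:CauthySchwarzIneq4}. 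This sidesteps the ``diagonal-in-$i$'' coupling analysis entirely and is worth adopting.

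There is, however, a genuine gap in your hockey-stick step for $A_j-\tilde A_j$. First, the straddle event $\{\Lmi\ge 0\}\triangle\{\Li\ge 0\}$ does \emph{not} have probability $\cO(m^{-1})$; under Assumption~\ref{assm:jointdensity} the bound is only $\cO(m^{-1/2})$ (the $\cO(m^{-1})$ in Lemma~\ref{lem:usefuleqs} comes from a cancellation in the \emph{signed} difference that disappears once you take absolute values). Second, and more importantly, even a correct probability bound does not close the estimate: you need $\E\bigl[(\1\{L_m\ge 0\}-\1\{L\ge 0\})^2\hatH^2\bigr]\to 0$, and with only $\E[\hatH^2]<\infty$ available in the hockey-stick case you cannot trade the indicator for a moment of $\hatH$ via H\"older. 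The paper handles this with dominated convergence: $\1\{L_m\ge 0\}-\1\{L\ge 0\}\to 0$ a.s.\ by Proposition~\ref{prop:Lm_as} (and $\P(L=0)=0$ since $L$ has a density under Assumption~\ref{assm:jointdensity}), while the integrand is dominated by $\hatH^2\in\cL^1$. Replace your rate argument with this DCT step and the proof goes through.
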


A direct result of Theorems~\ref{thm:CLT} and~\ref{thm:varianceestimate} is a valid confidence interval for $\rho$ with one run of the GNS procedure, as summarized in Corollary~\ref{cor:confidenceinterval}.
	
\begin{corollary}\label{cor:confidenceinterval}
	Suppose the conditions in Theorem~\ref{thm:CLT} hold. Then, the following is an asymptotically valid confidence interval for the nested estimator $\rho$ with a confidence level of $1-\alpha$:
	\begin{equation*}\label{eq:CI1}
	(\rho_{mn}-z_{1-\alpha/2}\cdot \widehat{\sigma}_{mn}, \ \rho_{mn}+z_{1-\alpha/2}\cdot\widehat{\sigma}_{mn}),
	\end{equation*}
	where $\widehat{\sigma}_{mn}^2 = \frac{\widehat{\sigma}_{1,mn}^2}{n} + \frac{\widehat{\sigma}_{2,mn}^2}{m}$ and $z_{1-\alpha/2}$ is the $1-\alpha/2$ quantile of the standard normal distribution.
\end{corollary}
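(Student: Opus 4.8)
The plan is to establish the three convergences in turn, with the third reduced to the first two. For each of $\widehat{\sigma}_{1,mn}^2$ and $\widehat{\sigma}_{2,mn}^2$ I would use a two-stage strategy: first replace each estimated loss $\Lmi$ by the true conditional expectation $\Li=L(X_i)$ inside $g$ or $g'$ and show the replacement error vanishes in probability; then recognize the resulting idealized statistic as either a classical i.i.d.\ average (handled by the law of large numbers) or a generalized two-sample U-statistic (handled by its weak law of large numbers). Once $\widehat{\sigma}_{1,mn}^2\conprob\sigma_1^2$ and $\widehat{\sigma}_{2,mn}^2\conprob\sigma_2^2$ are in hand, I would write $\widehat{\sigma}_{mn}^2/\sigma_{mn}^2$ as the convex combination $\lambda_{mn}(\widehat{\sigma}_{1,mn}^2/\sigma_1^2)+(1-\lambda_{mn})(\widehat{\sigma}_{2,mn}^2/\sigma_2^2)$ with weight $\lambda_{mn}=(\sigma_1^2/n)/\sigma_{mn}^2\in[0,1]$, so that the triangle inequality bounds $|\widehat{\sigma}_{mn}^2/\sigma_{mn}^2-1|$ by $|\widehat{\sigma}_{1,mn}^2/\sigma_1^2-1|+|\widehat{\sigma}_{2,mn}^2/\sigma_2^2-1|\conprob 0$, which gives the third claim.

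For $\widehat{\sigma}_{1,mn}^2=\avgni(g(\Lmi))^2-(\avgni g(\Lmi))^2$, the second term equals $\rho_{mn}^2$, and since Theorem~\ref{thm:MSE} yields $\rho_{mn}\conprob\rho=\E[g(L)]$, the continuous mapping theorem gives $\rho_{mn}^2\conprob(\E[g(L)])^2$. For the first term I would bound the $\mathcal{L}^1$ distance to the idealized average $\avgni(g(\Li))^2$: by identical distribution across $i$ and Cauchy--Schwarz, $\E|\avgni((g(\Lmi))^2-(g(\Li))^2)|\le(\E[(g(L_m)-g(L))^2])^{1/2}(\E[(g(L_m)+g(L))^2])^{1/2}$, where the first factor is $\cO(m^{-1})$ by~\eqref{eq:varsmooth} or~\eqref{eq:varhockeystick} and the second is bounded uniformly in $m$ under the stated moment conditions. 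The idealized average converges to $\E[g(L)^2]$ by the strong law applied to the i.i.d.\ sequence $\{g(\Li)^2\}$, so $\widehat{\sigma}_{1,mn}^2\conprob\E[g(L)^2]-(\E[g(L)])^2=\sigma_1^2$.

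The heart of the argument is $\widehat{\sigma}_{2,mn}^2$. Its second term $(\avgni g'(\Lmi)\Lmi)^2$ converges to $(\E[g'(L)L])^2=(\E[g'(L)\hatH])^2$ by the same replacement-plus-LLN argument applied to $g'(\Lmi)\Lmi$. For the first term $\avgmj(\avgni g'(\Lmi)\hatHij)^2$, after replacing $g'(\Lmi)$ by $g'(\Li)$ I would expand the square as $\frac{1}{n^2}\sum_{i,k}g'(\Li)g'(L_k)\avgmj\hatHij\hatH_{kj}$ and split into the diagonal $i=k$ and off-diagonal $i\neq k$ parts. The diagonal part carries a factor $1/n$ with bounded expectation, so it vanishes in probability by Markov's inequality. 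The off-diagonal part is $\frac{n-1}{n}$ times a generalized two-sample U-statistic with symmetric kernel $h(x,x',y)=g'(L(x))g'(L(x'))\hatH(x,y)\hatH(x',y)$; since $\E[|h|]<\infty$ holds under the moment conditions (via H\"older and Cauchy--Schwarz), its weak law of large numbers gives convergence in probability to $\E[h(X,X',Y)]$. Conditioning on $Y$ and using the independence of the two copies $X,X'$ shows $\E[h(X,X',Y)]=\E[(\E[g'(L)\hatH|Y])^2]$, whence the first term converges to $\E[(\E[g'(L)\hatH|Y])^2]$ and $\widehat{\sigma}_{2,mn}^2\conprob\sigma_2^2$.

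The main obstacle is the replacement step for the hockey-stick risk function, where $g'(x)=\1\{x\ge 0\}$ is discontinuous and $g'(\Lmi)$ differs from $g'(\Li)$ exactly when $\Lmi$ and $\Li$ straddle the origin. Controlling this requires Assumption~\ref{assm:jointdensity}: integrating the density bound $\bar{p}_{0,m}$ over the straddling interval of length $|Z_m|/\sqrt{m}$ bounds $\E[|\1\{L_m\ge0\}-\1\{L\ge0\}|]$ by $\cO(m^{-1/2})$, which suffices for convergence to zero since no rate is needed here, and on the straddling event the elementary bound $|L|\le|L_m-L|$ disposes of the term involving $g'(\Lmi)\Lmi$. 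A secondary technical burden is verifying the integrability $\E[|h|]<\infty$ of the U-statistic kernel and the uniform-in-$m$ second-moment bounds used in the replacement estimates; both follow from the moment hypotheses of Theorem~\ref{thm:CLT} together with Lemma~\ref{lem:finitemoment} and Theorem~\ref{thm:Lm_moment2p} (with $p=2$).
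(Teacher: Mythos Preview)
Your proposal is essentially correct and shares the paper's overall architecture (replace $\Lmi$ by $\Li$, then invoke a law of large numbers), but your treatment of $\widehat{\sigma}_{2,mn}^2$ differs genuinely from the paper's. The paper introduces $R_j=\E[g'(L)\hatH\mid Y=Y_j]$, $\hatR_j=\avgni g'(\Li)\hatHij$, $\hatR_{m,j}=\avgni g'(\Lmi)\hatHij$ and proves a chain of $\cL^1$ replacements $\hatR_{m,j}^2\to\hatR_j^2\to R_j^2$ (the second step via a conditional-i.i.d.\ variance bound of order $n^{-1}$), finishing with the weak law on the i.i.d.\ sequence $\{R_j^2\}$. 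You instead make a single replacement $g'(\Lmi)\to g'(\Li)$, expand the square, discard the diagonal by its $1/n$ prefactor, and recognize the off-diagonal as a generalized two-sample U-statistic with kernel $h(x,x',y)=g'(L(x))g'(L(x'))\hatH(x,y)\hatH(x',y)$ converging to $\E[(\E[g'(L)\hatH\mid Y])^2]$. Both routes are valid; yours is a bit more direct and avoids introducing the intermediate $R_j$, while the paper's chain isolates exactly where each source of error (inner replacement, outer averaging) enters. Your convex-combination argument for $\widehat{\sigma}_{mn}^2/\sigma_{mn}^2\conprob 1$ matches the paper's continuous-mapping step.

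One small gap to flag in the hockey-stick case: the density bound from Assumption~\ref{assm:jointdensity} gives $\E\bigl[\,|\1\{L_m\ge 0\}-\1\{L\ge 0\}|\,\bigr]=\cO(m^{-1/2})$, but the replacement error in the first term of $\widehat{\sigma}_{2,mn}^2$ requires controlling $\E\bigl[(g'(L_m)-g'(L))^2\hatH^2\bigr]=\E\bigl[\,|\1\{L_m\ge 0\}-\1\{L\ge 0\}|\,\hatH^2\,\bigr]$, and Assumption~\ref{assm:jointdensity} concerns the joint density of $(L,Z_m)$, not of $(L,Z_m,\hatH)$. Under only $\E[\hatH^2]<\infty$ (the hockey-stick hypothesis) you cannot split off the $\hatH^2$ factor by Cauchy--Schwarz without a fourth moment. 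The paper handles this by dominated convergence: since $L_m\stackrel{a.s.}{\to}L$ (Proposition~\ref{prop:Lm_as}) and $|\1\{L_m\ge 0\}-\1\{L\ge 0\}|\hatH^2\le \hatH^2\in\cL^1$, the expectation tends to zero with no rate needed. Patching this in, your argument goes through.
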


\subsubsection{Analysis for the Indicator Function}\label{subsubsec:CLTindicator}
The discontinuity of the indicator risk function $g(x)=\1\{x\geq 0\}$ is a major difficulty in establishing CLT for the GNS estimator $\rho_{mn}$ in this case.
To circumvent this difficulty, we consider a sequence of smooth approximations of $g(x)$:
Let $\phi(u) = \frac{1}{4\pi} (1-\cos(u))\cdot \1\{|u| \leq 2\pi\}$, and for any $\epsilon > 0$ we define a function
\begin{equation}\label{eq:smoothapprox}
	\geps(x) = \int_{-\infty}^{x/\epsilon} \phi(u) du=\begin{cases}
		1,& x\geq 2\pi\epsilon,\\
		\dfrac{1}{4\pi}\left[\dfrac{x}{\epsilon}-\sin\left(\dfrac{x}{\epsilon}\right)\right]+\dfrac{1}{2}, &|x|<2\pi\epsilon,\\
		0,& x\leq -2\pi\epsilon.
	\end{cases}
\end{equation}
One can show that $\geps(x)$ is twice differentiable for any $\epsilon > 0$.
Also, $\geps(x)$ converges pointwisely to $\1\{x\geq 0\}$ as $\epsilon\to 0$ everywhere except at $x=0$.
To establish the desired CLT in this case, we use a sequence of $\epsilon_m$ that depends on the number of inner samples $m$. We carefully construct such a sequence in the proof of the CLT,  although this sequence is not part of the theorem statement.

As $\gepsm(x)$ is twice differentiable for any $\epsilon_m>0$, we use the Taylor's theorem for $\gepsm$ to decompose the GNS estimator $\rho_{mn} = \avgni g(\Lmi)$ as follows:
\begin{align}\label{eq:decomposerho_indicator}
	\rho_{mn} &=\avgni g\left(\Lmi\right) = \cU_{\epsilon_m,mn} + r_{\epsilon_m,mn}^a + r_{\epsilon_m,mn}^b + r_{\epsilon_m,mn}^c + r_{\epsilon_m,mn}^d,
\end{align}
where
\begin{align*}
\cU_{\epsilon_m,mn} &:= \avgni\left[ g(\Li)+\gepsm'(\Li)(\Lmi-\Li)\right],\\
r_{\epsilon_m,mn}^a &:= \avgni \gepsm''(\Li)(\Lmi-\Li)^2,\\
r_{\epsilon_m,mn}^b &:= \avgni \left[\gepsm(\Li)-g(\Li)\right],\\
r_{\epsilon_m,mn}^c &:= \avgni \left[g(\Lmi)-\gepsm(\Lmi)\right],
\end{align*}
and $r_{\epsilon_m,mn}^d$ is the higher-order remainder term in the Taylor's expansion of $\gepsm$.
%
The decomposition~\eqref{eq:decomposerho_indicator} is more complicated than~\eqref{eq:decomposerho} due to using $\gepsm$ and its Taylor expansion.
Nonetheless, the general strategy to analyze $\rho_{mn}$ in this case is similar to that in Section~\ref{subsubsec:CLTsmoothhockeystick}: First show that $\cU_{\epsilon_m,mn}$ converges to an asymptotically normal distribution then show that the other remainder terms quickly vanishes.
We provide some insights in this section and defer the detailed proofs to Appendix~\ref{app:CLTindicator}.

We assume that the joint density $\psi(x,\ell)$ of $(X,L(X))$ exists and define $\psi_0(x)=\psi(x,0)$ for notational convenience.
Assumption~\ref{assm:indicatordensity} is useful for establishing asymptotic results for $\rho_{mn}$ with the indicator risk function.
\begin{assumption}\label{assm:indicatordensity}
	\begin{enumerate}[label=(\Roman*)]
		\item\label{assum:existence} The partial derivative $\frac{\partial}{\partial \ell} \psi(x,\ell)$ exists for all $x$ and $\ell$ and there exists a nonnegative function $\psi_1(x)$ such that $|\frac{\partial}{\partial \ell} \psi(x,\ell)|\leq \psi_1(x)$ in any open neighborhood of $(x,0)$ for all $x$.
		
		\item\label{assum:boundedmoments} For $i=0,1$, the following quantities are finite,
		\begin{equation*}
			\int \psi_i(x)\d x<\infty,\,\, \E\left[\int \left(\hatH(x,Y)\right)^2 \psi_i(x)\d x\right]<\infty,\, \mbox{ and } \E\left[\left(\int \left|\hatH(x,Y)\right|\psi_i(x)\d x\right)^2\right]<\infty.
		\end{equation*}
	\end{enumerate}
	
\end{assumption}

Assumption~\ref{assm:indicatordensity}~\ref{assum:existence} is similar to~Assumption~\ref{assm:jointdensity}, which is useful for applying Taylor theorem to the joint density function $\psi(x,\ell)$ of $(X,L(X))$.
Assumption~\ref{assm:indicatordensity}~\ref{assum:boundedmoments} may seem intricate, but it is a moment condition in disguise: Similar to the moment conditions in Theorem~\ref{thm:CLT} for the smooth and hockey-stick risk functions, Assumption~\ref{assm:indicatordensity}~\ref{assum:boundedmoments} guarantees the existence of the asymptotic variance~\eqref{eq:IndVar2} for the indicator risk function.
We note that the first two conditions in Assumption~\ref{assm:indicatordensity}~\ref{assum:boundedmoments} are sufficient for the third one, but we state the latter explicitly nonetheless for ease of reference.

Define the mapping $U_{\epsilon_m}(X,Y)=g(L(X))+\gepsm'(L(X))(\hatH(X,Y)-L(X))$. Then we can write $\cU_{\epsilon_m,mn}= \frac{1}{mn}\sum_{i=1}^{n}\sum_{j=1}^{m}U_{\epsilon_m}(X_i,Y_j)$.
Despite the similarity, $\cU_{\epsilon_m,mn}$ is not a two-sample U-statistic as in Definition~\ref{def:Ustats} because the mapping $U_{\epsilon_m}(X,Y)$ depends on the number of scenarios $m$, so Lemma~\ref{lem:Ustatistics} does not apply.
Nonetheless, we show in Appendix~\ref{app:CLTindicator} that $\cU_{\epsilon_m,mn}$ has similar asymptotic properties as $\cU_{mn}$ in Lemma~\ref{lem:Ustatistics}, i.e., $\frac{\cU_{\epsilon_m,mn}-\rho}{\widetilde{\sigma}_{mn}} \condist \cN(0,1)$ where $\widetilde{\sigma}_{mn}^2 = \frac{\widetilde{\sigma}_{1}^2}{n} + \frac{\widetilde{\sigma}_{2}^{2}}{m}$,
\begin{align}
	\widetilde{\sigma}_{1}^{2} &= \Var[g(L)] = \E[\1\{L\geq 0 \}] - (\E[\1\{L\geq 0\}])^2,\mbox{ and }\label{eq:IndVar1}\\
	\widetilde{\sigma}_{2}^{2} &= \E\left[\left(\int\hatH(x,Y)\psi(x,0) \d x\right)^2\right].\label{eq:IndVar2}
\end{align}
We also show in Appendix~\ref{app:CLTindicator} that the remainder terms in~\eqref{eq:decomposerho_indicator} vanish quickly so that $\rho_{mn}$ has the same asymptotic distribution as $\cU_{\epsilon_m,mn}$.
Then we can establish the CLT for $\rho_{mn}$ with the indicator risk function, as stated in Theorem~\ref{thm:CLTIndicator}.

\begin{theorem}\label{thm:CLTIndicator}
	Consider the indicator risk function $g(x)=\1\{x\geq 0\}$. Suppose that Assumptions~\ref{assm:basic},~\ref{assm:jointdensity}, ~\ref{assm:indicatordensity2} and ~\ref{assm:indicatordensity} hold.
	Then,
	\begin{equation*}
	\frac{\rho_{mn}-\rho}{\sigma_{mn}} \stackrel{d}{\rightarrow} \cN(0,1), \mbox{ as } \min\{m,n\}\rightarrow\infty,
	\end{equation*}
	where $\widetilde{\sigma}_{mn}^2 = \frac{\widetilde{\sigma}_1^2}{n} + \frac{\widetilde{\sigma}_2^2}{m}$ and $\widetilde{\sigma}_{1}^2$ and $\widetilde{\sigma}_2^2$ are defined as~\eqref{eq:IndVar1} and~\eqref{eq:IndVar2}, respectively.
\end{theorem}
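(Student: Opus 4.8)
The plan is to push the central limit theorem from the twice-differentiable surrogate $\cU_{\epsilon_m,mn}$ onto $\rho_{mn}$ through the decomposition~\eqref{eq:decomposerho_indicator} and Slutsky's theorem, once a sequence $\epsilon_m\to 0$ has been fixed so that each of the four remainder terms, divided by $\widetilde{\sigma}_{mn}$, tends to zero in probability. First I would treat $\cU_{\epsilon_m,mn}$. Since its kernel $U_{\epsilon_m}(X,Y)=g(L(X))+\gepsm'(L(X))(\hatH(X,Y)-L(X))$ depends on $m$, Lemma~\ref{lem:Ustatistics} does not apply directly, so I would instead run a H\'ajek-type projection by hand. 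Because $\E[\hatH(X,Y)\mid X]=L(X)$ by~\eqref{eq:LtauLR}, the $\gepsm'$-term has conditional mean zero, which gives $\E[\cU_{\epsilon_m,mn}]=\rho$ exactly and, crucially, makes the $X$-projection equal to $g(L(X))-\rho$ \emph{free of $m$}; hence its variance is $\widetilde{\sigma}_1^2=\Var[g(L)]$ as in~\eqref{eq:IndVar1}. The $Y$-projection is $h_{\epsilon_m}(Y)-\E[\gepsm'(L)L]$, where $h_{\epsilon_m}(y):=\E[\gepsm'(L(X))\hatH(X,y)]$.

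The analytic heart of Step~1 is the limit of this $Y$-projection. Writing $\gepsm'(x)=\epsilon_m^{-1}\phi(x/\epsilon_m)$, the symmetry and unit mass of $\phi$ make $\gepsm'$ an approximate identity with $\int\gepsm'=1$ and vanishing first moment. Taylor-expanding the joint density $\psi(x,\ell)$ of $(X,L(X))$ about $\ell=0$ under Assumption~\ref{assm:indicatordensity}, I would show $h_{\epsilon_m}\to W$ in $L^2(\ftilde)$, where $W(y)=\int\hatH(x,y)\psi(x,0)\,\d x$, with the $L^2$-error of order $\cO(\epsilon_m)$ controlled by the moment bounds in Assumption~\ref{assm:indicatordensity}~\ref{assum:boundedmoments}; the subtracted constant $\E[\gepsm'(L)L]\to 0$, which simultaneously forces $\E[W(Y)]=0$, so that $\widetilde{\sigma}_2^2$ in~\eqref{eq:IndVar2} is a pure second moment and $\widetilde{\sigma}_{2,m}^2\to\widetilde{\sigma}_2^2$. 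I would then check that the degenerate (canonical) part of the projection has variance $\cO\big((mn\epsilon_m)^{-1}\big)=o(\widetilde{\sigma}_{mn}^2)$, since $\max\{m,n\}\,\epsilon_m\to\infty$, and conclude that $\cU_{\epsilon_m,mn}$ is asymptotically equivalent to the sum of two independent i.i.d. averages, namely $n^{-1}\sum_i[g(\Li)-\rho]$ and $m^{-1}\sum_j W(Y_j)$, whence $(\cU_{\epsilon_m,mn}-\rho)/\widetilde{\sigma}_{mn}\condist\cN(0,1)$ by the classical CLT applied to the two coordinates. The pairwise-density Assumption~\ref{assm:indicatordensity2} enters here (and in Step~2) to bound the cross-scenario covariances that arise because sample recycling makes $\Lmi$ and $L_{m,k}$ dependent.

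Step~2 is the remainder control, and it is where the choice of $\epsilon_m$ is delicate---this is the main obstacle. Using $\E[(\Lmi-\Li)^2\mid X_i]=\cO(m^{-1})$ (Theorem~\ref{thm:Lm_moment2p}) together with $\E|\gepsm''(L)|=\cO(\epsilon_m^{-1})$ and the Lagrange form of the Taylor remainder, I expect $\E|r_{\epsilon_m,mn}^a|=\cO\big((m\epsilon_m)^{-1}\big)$ and a comparable bound for $r_{\epsilon_m,mn}^d$; these want $\epsilon_m$ \emph{large}. On the other hand the smoothing gaps $r_{\epsilon_m,mn}^b$ and $r_{\epsilon_m,mn}^c$ are supported on $\{|\cdot|<2\pi\epsilon_m\}$ and want $\epsilon_m$ \emph{small}. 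The tension is resolved precisely by the symmetry of $\phi$: because $\gepsm-g$ is an odd kernel integrated against a smooth marginal density, the smoothing bias is second order, $\E[r_{\epsilon_m,mn}^b]=\cO(\epsilon_m^2)$ rather than $\cO(\epsilon_m)$ (and $r_{\epsilon_m,mn}^c$ is handled analogously via Assumption~\ref{assm:jointdensity} and the argument behind Lemma~\ref{lem:usefuleqs}). Since $\widetilde{\sigma}_{mn}^{-1}\asymp\sqrt{\min\{m,n\}}$, these bounds leave a nonempty window $m^{-1/2}\ll\epsilon_m\ll m^{-1/4}$---for instance $\epsilon_m=m^{-1/3}$---on which both $r_{\epsilon_m,mn}^a/\widetilde{\sigma}_{mn}$ and $r_{\epsilon_m,mn}^b/\widetilde{\sigma}_{mn}$ vanish. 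Fixing such a sequence and applying Markov's inequality gives $r_{\epsilon_m,mn}^{a},r_{\epsilon_m,mn}^{b},r_{\epsilon_m,mn}^{c},r_{\epsilon_m,mn}^{d}=o_{\P}(\widetilde{\sigma}_{mn})$.

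Finally, combining Steps~1 and~2 in~\eqref{eq:decomposerho_indicator} and invoking Slutsky's theorem yields $(\rho_{mn}-\rho)/\widetilde{\sigma}_{mn}\condist\cN(0,1)$, which is the claim. I anticipate the genuinely hard parts to be (i) the non-standard CLT for $\cU_{\epsilon_m,mn}$ with its $m$-dependent kernel, where one must both identify the limiting variance and control the degenerate part uniformly in $\epsilon_m$, and (ii) verifying that the window for $\epsilon_m$ is actually nonempty, which hinges on the second-order accuracy bought by the symmetric mollifier $\phi$.
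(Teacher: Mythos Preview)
Your proposal is correct and follows essentially the same route as the paper: H\'ajek projection of $\cU_{\epsilon_m,mn}$ (with the $X$-projection $g(L)-\rho$ free of $m$ and the $Y$-projection converging to $W(y)=\int\hatH(x,y)\psi(x,0)\,\d x$ via Taylor expansion of $\psi$ under Assumption~\ref{assm:indicatordensity}), degenerate part of order $\cO((mn\epsilon_m)^{-1})$, and remainder bounds exploiting the symmetry of $\phi$ to get $\cO(\epsilon_m^2)$ smoothing bias. One caution on the window: the paper's covariance analysis for $r_{\epsilon_m,mn}^c$ under Assumption~\ref{assm:indicatordensity2} produces the tighter constraint $m\epsilon_m^3\to 0$ (not just $m\epsilon_m^4\to 0$), so the admissible range is $m^{-1/2}\ll\epsilon_m\ll m^{-1/3}$ and your suggested $\epsilon_m=m^{-1/3}$ sits on the boundary---take, e.g., $\epsilon_m=m^{-2/5}$ instead.
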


Next, we propose variance estimators that require only one run of the GNS procedure.
Specifically, $\widetilde{\sigma}_{mn}^2$ is estimated by $\widehat{\widetilde{\sigma}}_{mn}^2 = \frac{\widehat{\widetilde{\sigma}}_1^2}{n} + \frac{\widehat{\widetilde{\sigma}}_2^2}{m}$, where the estimators for $\widetilde{\sigma}_{1}^2$ and $\widetilde{\sigma}_2^2$ are
\begin{align}
	\widehat{\widetilde{\sigma}}_{1,mn}^2 &= \avgni \1\{\Lmi\geq 0\}- \left(\avgni \1\{\Lmi\geq 0\}\right)^2, \mbox{ and }\label{eq:sig1hatIndicator}\\
	\widehat{\widetilde{\sigma}}_{2,mn}^2 &= \avgmj \left(\avgni g'_\epsilon\left(\Lmi\right)\hatHij\right)^2, \label{eq:sig2hatIndicator}
\end{align}
respectively.
These variance estimators are valid as they converge to the corresponding asymptotic population variances, as stated in Theorem~\ref{thm:varianceestimateIndicator}; the proof is provided in Appendix~\ref{app:CLTindicator}. 

\begin{theorem}\label{thm:varianceestimateIndicator}
	Suppose the conditions in Theorem~\ref{thm:CLTIndicator} hold. If, in addition, $\E[\hatH^4]<\infty$ and the sequence $\epsilon$ satisfies $\epsilon\rightarrow 0$, $m\epsilon^5\rightarrow \infty$, and $n\epsilon^2\rightarrow \infty$ as $\min\{m,n\}\rightarrow\infty$. Then,
	\begin{equation*}\label{eq:CIsmoothindicator}
		\widehat{\widetilde{\sigma}}_{1,mn}^2 \conprob \widetilde{\sigma}_1^2\quad\mbox{, }\quad \widehat{\widetilde{\sigma}}_{2,mn}^2  \conprob  \widetilde{\sigma}_2^2,\quad\mbox{ and }\quad \widehat{\widetilde{\sigma}}_{mn}^2/\widetilde{\sigma}_{mn}^2 \conprob 1, \quad\mbox{ as } \min\{m,n\}\rightarrow\infty.
	\end{equation*}
\end{theorem}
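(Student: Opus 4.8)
The plan is to handle the two variance estimators separately and then combine them for the ratio. Write $p:=\E[\1\{L\ge0\}]$ and $W(y):=\int\hatH(x,y)\psi(x,0)\,\d x$, so that $\widetilde{\sigma}_1^2=p(1-p)$ and $\widetilde{\sigma}_2^2=\E[W(Y)^2]$, the latter being finite by Assumption~\ref{assm:indicatordensity}~\ref{assum:boundedmoments}.

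\textbf{Consistency of $\widehat{\widetilde{\sigma}}_{1,mn}^2$.} Setting $\bar p_{mn}:=\avgni\1\{\Lmi\ge0\}$ we have $\widehat{\widetilde{\sigma}}_{1,mn}^2=\bar p_{mn}(1-\bar p_{mn})$, so by the continuous mapping theorem it is enough to show $\bar p_{mn}\conprob p$. Split $\bar p_{mn}-p=\big(\bar p_{mn}-\avgni\1\{\Li\ge0\}\big)+\big(\avgni\1\{\Li\ge0\}-p\big)$: the second bracket is an i.i.d. average with mean $p$ and vanishes by the weak law of large numbers as $n\to\infty$, while for the first an $\cL^1$ estimate parallel to Lemma~\ref{lem:usefuleqs}—decomposing the absolute crossing event $|\1\{L_m\ge0\}-\1\{L\ge0\}|=\1\{L_m\ge0>L\}+\1\{L\ge0>L_m\}$ into two one-sided events each of probability $\cO(m^{-1})$—gives $\E|\bar p_{mn}-\avgni\1\{\Li\ge0\}|=\cO(m^{-1})$, hence $o_p(1)$. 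Therefore $\bar p_{mn}\conprob p$ and $\widehat{\widetilde{\sigma}}_{1,mn}^2\conprob p(1-p)=\widetilde{\sigma}_1^2$.

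\textbf{Consistency of $\widehat{\widetilde{\sigma}}_{2,mn}^2$.} Write $A_j:=\avgni g'_\epsilon(\Lmi)\hatHij$ so $\widehat{\widetilde{\sigma}}_{2,mn}^2=\avgmj A_j^2$, and put $W_j:=W(Y_j)$. Since $A_j^2-W_j^2=(A_j-W_j)^2+2W_j(A_j-W_j)$ and the $W_j$ are i.i.d. with $\E[W_j^2]=\widetilde{\sigma}_2^2<\infty$, the weak law of large numbers gives $\avgmj W_j^2\conprob\widetilde{\sigma}_2^2$; using Cauchy--Schwarz on the cross term, it then suffices to prove $\avgmj(A_j-W_j)^2\conprob0$, which by exchangeability of the $Y_j$ follows from $\E[(A_1-W_1)^2]\to0$ and Markov's inequality. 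Conditioning on the $\sigma$-field $\mathcal F_Y$ generated by $Y_1,\dots,Y_m$ (so each $L_m(x)$ is a fixed function of $x$), I decompose $A_1-W_1=\mathrm{(I)}+\mathrm{(II)}+\mathrm{(III)}$, with $\psi_X$ the marginal density of $X$:
\begin{align*}
\mathrm{(I)}&=A_1-\int g'_\epsilon(L_m(x))\hatH(x,Y_1)\psi_X(x)\,\d x,\\
\mathrm{(II)}&=\int\big[g'_\epsilon(L_m(x))-g'_\epsilon(L(x))\big]\hatH(x,Y_1)\psi_X(x)\,\d x,\\
\mathrm{(III)}&=\int g'_\epsilon(L(x))\hatH(x,Y_1)\psi_X(x)\,\d x-W_1.
\end{align*}
Term (I) is, given $\mathcal F_Y$, a centered average of $n$ i.i.d. summands, so $\E[\mathrm{(I)}^2]=\tfrac1n\E\big[\Var\big(g'_\epsilon(L_m(X))\hatH(X,Y_1)\mid\mathcal F_Y\big)\big]$ for a single fresh $X$; since $g'_\epsilon$ has height $\cO(\epsilon^{-1})$ on a set of $X$-probability $\cO(\epsilon)$, the bound $\E[\hatH^4]<\infty$ yields $\E[\mathrm{(I)}^2]=\cO(1/(n\epsilon))$, which vanishes under $n\epsilon^2\to\infty$. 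Term (III) is an approximate-identity error: writing $\Psi(y,\ell):=\int\hatH(x,y)\psi(x,\ell)\,\d x$ so that $\Psi(y,0)=W(y)$, one has $\mathrm{(III)}=\int g'_\epsilon(\ell)[\Psi(Y_1,\ell)-\Psi(Y_1,0)]\,\d\ell$, and the evenness of $g'_\epsilon$ together with the bound on $\partial_\ell\psi$ in Assumption~\ref{assm:indicatordensity}~\ref{assum:existence} and the moments of Assumption~\ref{assm:indicatordensity}~\ref{assum:boundedmoments} gives $\mathrm{(III)}=\cO(\epsilon)$ in $\cL^2(Y_1)$. Term (II) is the plug-in error from replacing $L$ by $L_m$ inside the kernel; bounding $|g'_\epsilon(L_m)-g'_\epsilon(L)|\le\|g''_\epsilon\|_\infty|L_m-L|=\cO(\epsilon^{-2})|L_m-L|$ on the $\cO(\epsilon)$-band $\{|L|\lesssim\epsilon\}$ and combining with the $\cL^{2p}$ rate $\E[(L_m-L)^{2p}]=\cO(m^{-p})$ of Theorem~\ref{thm:Lm_moment2p} and $\E[\hatH^4]<\infty$, its $\cL^2$ size vanishes under the stated rate $m\epsilon^5\to\infty$ (the weak $\cO(1/m)$ dependence of $L_m$ on $Y_1$ enters only through moment bounds and is negligible). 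Hence $\E[(A_1-W_1)^2]\to0$ and $\widehat{\widetilde{\sigma}}_{2,mn}^2\conprob\widetilde{\sigma}_2^2$.

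\textbf{Ratio and main obstacle.} Since $\widehat{\widetilde{\sigma}}_{mn}^2=\widehat{\widetilde{\sigma}}_{1,mn}^2/n+\widehat{\widetilde{\sigma}}_{2,mn}^2/m$ and $\widetilde{\sigma}_{mn}^2=\widetilde{\sigma}_1^2/n+\widetilde{\sigma}_2^2/m$ have deterministic weights, writing $\widehat{\widetilde{\sigma}}_{i,mn}^2=\widetilde{\sigma}_i^2+o_p(1)$ and noting that $\tfrac{1/n}{\widetilde{\sigma}_{mn}^2}\le\tfrac1{\widetilde{\sigma}_1^2}$ and $\tfrac{1/m}{\widetilde{\sigma}_{mn}^2}\le\tfrac1{\widetilde{\sigma}_2^2}$ are bounded, we obtain $\widehat{\widetilde{\sigma}}_{mn}^2/\widetilde{\sigma}_{mn}^2\conprob1$ (assuming $\widetilde{\sigma}_1^2,\widetilde{\sigma}_2^2>0$). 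I expect the main obstacle to be Term (II) of $\widehat{\widetilde{\sigma}}_{2,mn}^2$: because $g'_\epsilon$ is a kernel with a shrinking bandwidth $\epsilon$, controlling $g'_\epsilon(\Lmi)$ versus $g'_\epsilon(\Li)$ requires balancing the blow-up $\|g''_\epsilon\|_\infty=\cO(\epsilon^{-2})$ against the $\cO(m^{-1/2})$ estimation error of $L_m$ over the narrowing band near the level set $\{L=0\}$, which is precisely what forces the delicate rate $m\epsilon^5\to\infty$; this is the kernel-estimation heart of the argument and parallels the handling of $\cU_{\epsilon_m,mn}$ in Appendix~\ref{app:CLTindicator}.
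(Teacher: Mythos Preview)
Your proposal is correct and proceeds along the same lines as the paper, with one organizational difference worth noting. For $\widehat{\widetilde{\sigma}}_{2,mn}^2$ the paper telescopes at the level of squares through four intermediates $\widehat R_{\epsilon,m,j}^2\to\widehat R_{\epsilon,j}^2\to R_{\epsilon,j}^2\to R_j^2$ (its terms $A_1,\dots,A_4$), first replacing $L_m$ by $L$ at the sample level and then passing to the conditional expectation; you instead telescope $A_j-W_j$ at the linear level into (I)+(II)+(III), first centering conditionally on $\mathcal F_Y$ and then replacing $L_m$ by $L$ inside the integral. The ingredients are identical---the Lipschitz bound $|g'_\epsilon(L_m)-g'_\epsilon(L)|\le C\epsilon^{-2}|L_m-L|$, Theorem~\ref{thm:Lm_moment2p} with $p=2$, and the density bounds of Assumptions~\ref{assm:jointdensity} and~\ref{assm:indicatordensity}---and your ordering is arguably cleaner, since the cross term pairs (II) with $W_j$ (for which $\E[W_j^2]=\cO(1)$) rather than with $\widehat R_{\epsilon,j}$ (for which $\E[\widehat R_{\epsilon,j}^2]=\cO(\epsilon^{-1})$, the source of the extra $\epsilon$ in the paper's rate $m\epsilon^5$). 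Two small corrections: for $\widehat{\widetilde{\sigma}}_{1,mn}^2$, each one-sided crossing probability $\P(L_m\ge0>L)$ and $\P(L\ge0>L_m)$ is only $\cO(m^{-1/2})$, not $\cO(m^{-1})$ (it is their \emph{difference} that is $\cO(m^{-1})$ in Lemma~\ref{lem:usefuleqs}, by cancellation of the leading term), though $o(1)$ is all you need; and in (II) the support of $g'_\epsilon(L_m)-g'_\epsilon(L)$ is $\{|L_m|\le 2\pi\epsilon\}\cup\{|L|\le 2\pi\epsilon\}$, not just $\{|L|\lesssim\epsilon\}$, but the global Lipschitz bound alone already gives $\E[(\mathrm{II})^2]=\cO(1/(m\epsilon^4))\to 0$ under $m\epsilon^5\to\infty$, so the band restriction is not needed.
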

Note that, unlike Theorem~\ref{thm:CLTIndicator}, the sequence $\epsilon$ is in the statement of Theorem~\ref{thm:varianceestimateIndicator}.
This is because the variance $\widetilde{\sigma}_{2}^{2}$ in~\eqref{eq:IndVar2} involves $\psi(x,0)$, the unknown density function of $(X, L(X))$. Its estimate in~\eqref{eq:sig2hatIndicator} thus requires the smooth approximation function $g_\epsilon$, with $\epsilon$ satisfying the regularity conditions specified in Theorem~\ref{thm:varianceestimateIndicator} to ensure its convergence.

A direct result of Theorems~\ref{thm:CLTIndicator} and~\ref{thm:varianceestimateIndicator} is an asymptotically valid confidence interval for $\rho$ with one run of the GNS procedure, as summarized in Corollary~\ref{cor:confidenceintervalIndicator}.
\begin{corollary}\label{cor:confidenceintervalIndicator}
	Suppose the conditions in Theorem~\ref{thm:varianceestimateIndicator} hold. The following is an asymptotically valid confidence interval for the nested estimator $\rho$ with a confidence level of $1-\alpha$:
	\begin{equation*}\label{eq:CIindicator}
		(\rho_{mn}-z_{1-\alpha/2}\widehat{\widetilde{\sigma}}_{mn},\ \rho_{mn}+z_{1-\alpha/2}\widehat{\widetilde{\sigma}}_{mn}),
	\end{equation*}
	where $\widehat{\widetilde{\sigma}}_{mn}^2 = \frac{\widehat{\widetilde{\sigma}}_{1,mn}^2}{n} + \frac{\widehat{\widetilde{\sigma}}_{2,mn}^2}{m}$ and $z_{1-\alpha/2}$ is the $1-\alpha/2$ quantile of the standard normal distribution.
\end{corollary}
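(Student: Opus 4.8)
The plan is to derive the corollary as a routine consequence of the two preceding results: the CLT in Theorem~\ref{thm:CLTIndicator} provides asymptotic normality with the (unknown) normalizer $\widetilde{\sigma}_{mn}$, and Theorem~\ref{thm:varianceestimateIndicator} supplies a consistent estimator $\widehat{\widetilde{\sigma}}_{mn}$ of it; stitching the two together by Slutsky's theorem yields a studentized statistic that is still asymptotically standard normal, from which the stated coverage follows immediately.

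Concretely, I would first record that Theorem~\ref{thm:CLTIndicator} gives $(\rho_{mn}-\rho)/\widetilde{\sigma}_{mn}\condist \cN(0,1)$ as $\min\{m,n\}\to\infty$, and that the third conclusion of Theorem~\ref{thm:varianceestimateIndicator}, namely $\widehat{\widetilde{\sigma}}_{mn}^2/\widetilde{\sigma}_{mn}^2\conprob 1$, implies via the continuous mapping theorem that $\widetilde{\sigma}_{mn}/\widehat{\widetilde{\sigma}}_{mn}\conprob 1$. I would then write
\[
\frac{\rho_{mn}-\rho}{\widehat{\widetilde{\sigma}}_{mn}}=\frac{\rho_{mn}-\rho}{\widetilde{\sigma}_{mn}}\cdot\frac{\widetilde{\sigma}_{mn}}{\widehat{\widetilde{\sigma}}_{mn}}
\]
and apply Slutsky's theorem, the first factor converging in distribution to $\cN(0,1)$ and the second in probability to $1$, to conclude $(\rho_{mn}-\rho)/\widehat{\widetilde{\sigma}}_{mn}\condist \cN(0,1)$.

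Finally, I would translate this into a coverage statement. Because the limiting standard normal law is continuous, convergence in distribution of the studentized statistic gives
\[
\P\!\left(\rho_{mn}-z_{1-\alpha/2}\widehat{\widetilde{\sigma}}_{mn}\le \rho\le \rho_{mn}+z_{1-\alpha/2}\widehat{\widetilde{\sigma}}_{mn}\right)=\P\!\left(\left|\frac{\rho_{mn}-\rho}{\widehat{\widetilde{\sigma}}_{mn}}\right|\le z_{1-\alpha/2}\right)\longrightarrow 1-\alpha,
\]
by the definition of $z_{1-\alpha/2}$ as the $(1-\alpha/2)$ quantile of the standard normal. There is no genuine obstacle at the level of the corollary: all the difficulty resides in the two theorems it invokes, in particular the non-standard CLT for the $m$-dependent mapping $U_{\epsilon_m}$ and the consistency of the smoothed estimator $\widehat{\widetilde{\sigma}}_{2,mn}^2$ under the bandwidth schedule $\epsilon\to 0$, $m\epsilon^5\to\infty$, $n\epsilon^2\to\infty$. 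The only point needing a line of care is that $\widehat{\widetilde{\sigma}}_{mn}$ be eventually positive so that dividing by it is legitimate; this is guaranteed by $\widehat{\widetilde{\sigma}}_{mn}^2/\widetilde{\sigma}_{mn}^2\conprob 1$ together with $\widetilde{\sigma}_{mn}^2>0$, which forces the probability of a vanishing denominator to zero.
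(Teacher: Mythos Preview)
Your proposal is correct and matches the paper's approach: the paper states the corollary as ``a direct result of Theorems~\ref{thm:CLTIndicator} and~\ref{thm:varianceestimateIndicator}'' without giving a separate proof, and your Slutsky-based argument is exactly the standard way to make that implication explicit. Your added remark that the hypotheses of Theorem~\ref{thm:varianceestimateIndicator} subsume those of Theorem~\ref{thm:CLTIndicator}, so both theorems are available, is accurate and worth noting.
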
	

In summary, for all three classes of risk functions, we establish CLTs for our GNS estimator $\rho_{mn}$, propose valid variance estimators that require a single run of the GNS procedure, and construct asymptotically valid confidence intervals.

\section{Numerical Experiments}\label{sec:Experiment}
In this section, we consider two risk management examples to examine the performance of the proposed GNS procedure compared to the standard nested simulation procedure and a state-of-art regression-based procedure.
The first example shows that the GNS estimator's accuracy increases with the simulation budget and the convergence rate matches the asymptotic analysis in Section~\ref{sec:AnalysisRisk}.
The second example is a larger example with 240 options, which demonstrates the applicability and performance of the GNS procedure in practical problems.
In the examples, we set $m=n$ as this setting leads to the fastest convergence of MSE according to the asymptotic analysis in Section~\ref{sec:AnalysisRisk}.

%
%
In the examples, we consider option portfolios written on one or multiple, e.g., $d$, underlying assets, whose prices follow the Black-Scholes model.
For simplicity, we assume the same expected return $\mu$ for all underlying assets and a constant risk-free rate $r$.
That is, the price dynamics of the underlying assets $\bS_t=(S_{t}^1,...,S_{t}^d)^\top\in{\cal R}^d$ follows the follow stochastic differential equation
\begin{align*}
\d S_{t}^i=\mu' S_{t}^i \d t+\sum_{j=1}^{d}\sigma_{ij}S_{t}^i \d B_{t}^i,\quad i=1,...,d,
\end{align*}
where $\bm{B}_t=(B_{t}^1,...,B_{t}^d)$ is a $d$-dimensional standard Brownian motion and, without loss of generality, $\Sigma = [\sigma_{kk'}]$ is a $d\times d$ sub-triangular volatility matrix that specifies the volatility and correlations of the underlying assets.
Then the asset prices at any time~$t>0$ are
\begin{align}\label{eq:assetmodel}
	S_{t}^i=S_{0}^i \exp\left\{\left(\mu'- \frac{1}{2}\sum_{j=1}^{i} \sigma_{ij}^2\right)t + \sum_{j=1}^{i} \sigma_{ij} B_{t}^i\right\},\quad i=1,...,d.
\end{align}
We note that drift $\mu'$ equals the $k$-th asset's expected return under the real-world probability measure and equals the risk-free rate $r$ under the risk-neutral measure.

We define an option portfolio's maturity as the longest maturity among all options in the portfolio, which is denoted by $T$.
In our simulation experiments, the current time is $t=0$ and asset values are simulated at discrete times $0=t_0<t_1<\cdots<t_N=T$.
We are interested in measuring the portfolio risk at a future time $t_{k^*} = \tau \in(0,T)$, or $k^*\in\{1,\ldots,N-1\}$.
This is a nested estimation problem:
In a standard nested simulation procedure, one first simulates outer scenarios $X=\{\bS_{t_k}, k=1,...,k^*\}$ under the real-world measure then, given $X$, simulates inner sample paths $Y=\{\bS_{t_k}, k=k^*+1,...,N\}$ under the risk-neutral measure.
Denote the portfolio's current value and the payoff (discounted to time~$0$) by $V_0$ and $V_T(X,Y)$, respectively.
The portfolio's loss at time~$\tau$ given $X$ is
$$L(X)=\E\left[\left. V_0-V_T(X,Y)\right|X\right],$$
which is a random variable at time~$0$.
We want to measure the portfolio risk $\rho = \E\left[g(L(X))\right]$, where three risk functions $g$ are considered: a quadratic function $g(x)=(x-x_0)^2$, a hockey-stick function $g(x)=(x-x_0)^+$, and an indicator function $g(x)=\1\{x > x_0\}$, all with a pre-specified threshold $x_0$.

As the Black-Scholes asset model is Markovian, the likelihood ratio calculation is simplified.
Specifically, the outer scenarios $X=\{\bS_{t_k}, k=1,...,k^*\}$ are simulated using the Black-Scholes model under the real-world measure.
Independent to the outer scenarios, we simulate $\bS_{t_{k^*+1}}\sim\ftilde_{k^*+1}(s)$ where $\ftilde_{k^*+1}$ is the marginal log-normal distribution of $\bS_{t_{k^*+1}}$ according to~\eqref{eq:assetmodel} ($k^*$ steps under the real-world measure and 1 step under the risk neutral measure).
Conditional on $\bS_{t_{k^*+1}}$, we simulate later values $\bS_{t_{k^*+2}},\ldots,\bS_{t_{N}}$ under the risk-neutral measure.
Then the likelihood ratio can be calculated very efficiently, e.g.,
\begin{align*}
	\frac{f(Y|X)}{\ftilde(Y)} &= \frac{f(\bS_{t_{k^*+1}},\ldots,\bS_{t_N}|\bS_{t_{1}},\ldots,\bS_{t_{k^*}})}{\ftilde(\bS_{t_{k^*+1}},\ldots,\bS_{t_N})} =\frac{f(\bS_{t_{k^*+1}},\ldots,\bS_{t_N}|\bS_{t_{k^*}})}{\ftilde(\bS_{t_{k^*+1}},\ldots,\bS_{t_N})} \\
	&=\frac{f(\bS_{t_{k^*+1}}|\bS_{t_{k^*}})f(\bS_{t_{k^*+2}}|\bS_{t_{k^*+1}})\ldots f(\bS_{t_N}|\bS_{t_{N-1}})}{\ftilde_{k^*+1}(\bS_{t_{k^*+1}})f(\bS_{t_{k^*+2}}|\bS_{t_{k^*+1}})\ldots f(\bS_{t_N}|\bS_{t_{N-1}})} \\
	&=\frac{f(\bS_{t_{k^*+1}}|\bS_{t_{k^*}})}{\ftilde_{k^*+1}(\bS_{t_{k^*+1}})}.
\end{align*}
Also, calculating the likelihood ratio as a whole is faster than calculating two densities then taking the ratio.



\subsection{10 Barrier Options}
In this example, we consider 10 barrier options written on one underlying asset, i.e., $d=1$.
The asset model parameters are: $S_0^1=100$, $T=1$, $\tau=3/50$, $\mu=8\%$, $r=5\%$ and volatility $\sigma=20\%$.
The option portfolio include 10 barrier options with the same strike $K=90$ but different barriers:
\begin{itemize}
 \item 5 long up-and-out call options with barriers $U=118, 119, 120, 121, 122$, and
 \item 5 long down-and-out call options with barriers $D=78, 79, 80, 81, 82$.
\end{itemize}
In the implementation, when simulating the continuously monitoring maximum and minimum for barrier options, we use $N=200$ time steps and Brownian bridge approximation is applied for any two adjacent time points; see~\cite[pp. 367-368]{glasserman2013monte} for details of Brownian bridge approximations.

Even though there are 10 options in this example, because they are all written on the same underlying asset so we only need to calculate the likelihood ratio once to reuse different simulation outputs.
This is an appealing feature of the GNS procedure: The likelihood ratio calculation depends only on the dimension of the underlying assets, not the number of instruments in a portfolio.

To measure the performance of our GNS procedure, we accurately estimate the true value of $\rho$ as a benchmark:
So we generate a large number, i.e., $10^9$, i.i.d. scenarios $X$ then calculate the corresponding $L(X)$ and $g(L(X))$.
For barrier options, the loss $L(X)$ can be calculated analytically under the Black-Scholes model.
The 90\%-tile of these losses $L(X)$ is used as the threshold $x_0$ in the three different risk functions.
The sample mean of $g(L(X))$ is then an accurate estimate of $\rho$, which is then used to assess the accuracy of the GNS estimator $\rho_{mn}$.
All results reported are estimated based on 1,000 independent macro replications (using the same benchmark).


\begin{figure}[h!]
	\centering
    \hspace*{-2.33cm}
    \includegraphics[scale=0.393]{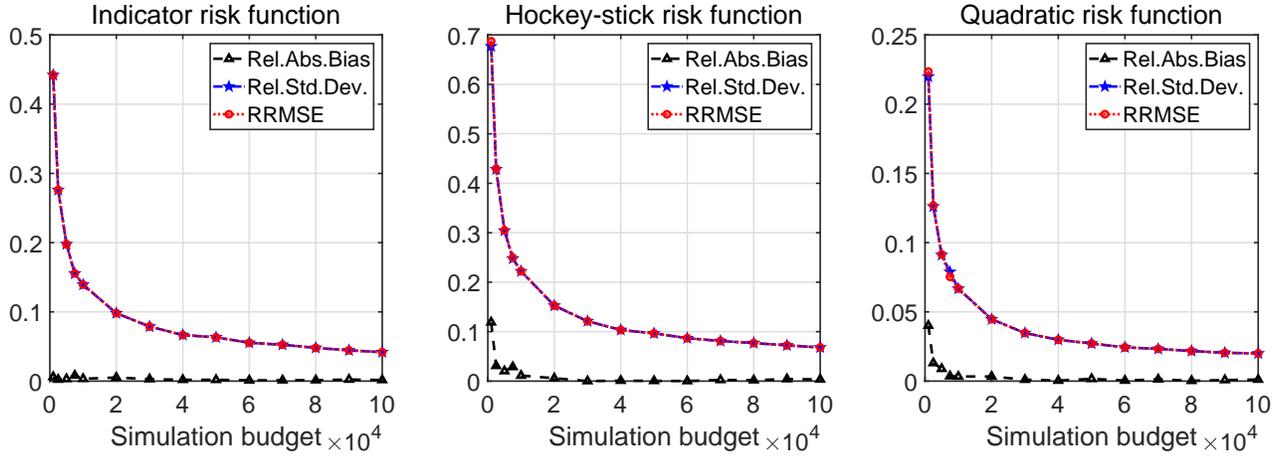}
	\caption{Plot in relative terms for the GNS estimators.}
	\label{fig:BarrierMSEBiasVar}
\end{figure}

Figure~\ref{fig:BarrierMSEBiasVar} depicts the relative absolute biases, relative standard deviations, and the relative root mean squared error (RRMSE) with different simulation budgets.
RRMSE is the ratio between the root MSE of the GNS estimator $\rho_{mn}$ and the benchmark estimate of $\rho$.
The error measures are relative to the benchmark estimate and have the same unit (by taking square roots of the variance and MSE).
We see that all three error measures decrease as the simulation budget increases, as expected.
Moreover, we see that the relative standard deviation almost coincides with the RRMSE, as the relative bias is small.
This is consistent with our intuition that the likelihood ratio estimator $L_m(X)$ is unbiased, which leads to relatively small bias in $g(L_m(X))$ and $\rho_{mn}$.


\begin{figure}[h!]
	\centering
    \hspace*{-2.33cm}
	\includegraphics[scale=0.393]{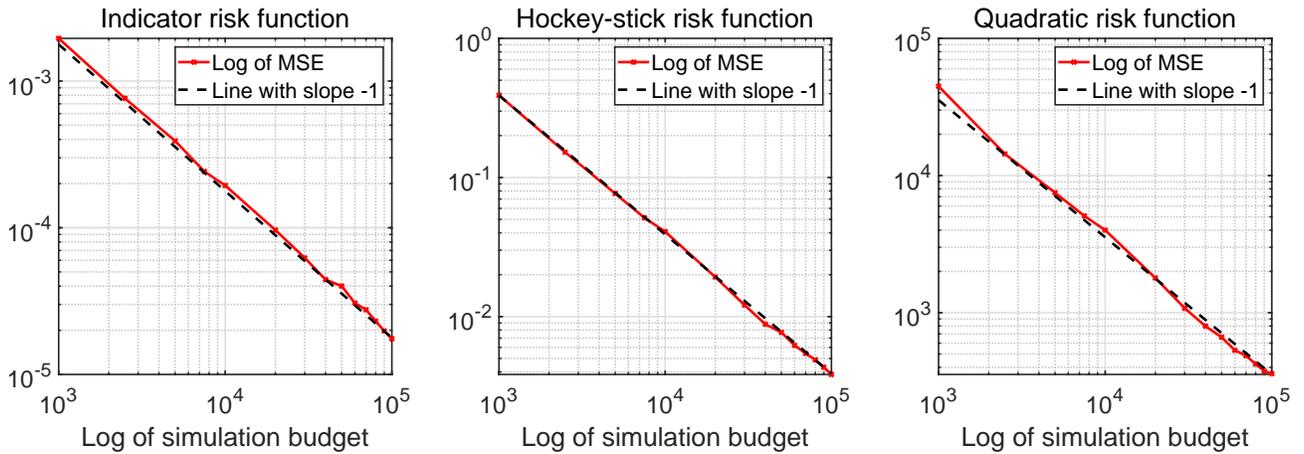}
	\caption{Illustration of the convergence rate of MSE for the GNS estimators.}
	\label{fig:BarrierMSE}
\end{figure}

Figure~\ref{fig:BarrierMSE} depicts the relative MSE (square of RRMSE) in log scale; a dashed line with slope $-1$ is added to the figure to aid visualization.
We see that the relative MSE follows closely with the dashed line, which means that it decreases at $\cO(\Gamma^{-1})$, where $\Gamma$ is the simulation budget of the GNS procedure.
This observation is consistent with Theorem~\ref{thm:MSE}, as the simulation budget is $\Gamma=m$ and we set $m=n$ in this experiment.

%
%
%
%
%

\begin{table}[h!]
	\centering
	\caption{Comparison of relative absolute bias, relative standard deviation, RRMSE, and 90\% confidence interval's coverage probability of the GNS procedure for different simulation budgets and different risk functions. The three error measures are in \% of the benchmark estimate of $\rho$.}
	\label{tab:large1}
	\begin{tabular}{ccccccc}
		\toprule
		Sim. Budget & Risk function $g$ & Rel.Abs.Bias & Rel.Std.Dev. & RRMSE & 90\% CI Cov.Prob.\\
		\midrule
		\multirow{3}{*}{$m=10^3$}
		&Indicator    & 0.61\% & 44.20\%& 44.20\%& 80.50\% \\
		&Hockey-stick & 11.89\%& 67.68\%& 68.72\%& 86.87\% \\
		&Quadratic    & 4.01\% & 21.99\%& 22.35\%& 91.20\% \\
		
		\midrule
		\multirow{3}{*}{$m=10^4$}
		&Indicator    & 0.34\% & 13.93\% & 13.93\% & 88.5\%\\
		&Hockey-stick & 1.12\% & 22.20\% & 22.23\% & 88.3\%\\
		&Quadratic    & 0.33\% & 6.67\% & 6.68\% & 90.7\%\\
		
		
		\midrule
		\multirow{3}{*}{$m=10^5$}
		&Indicator & 0.16\% & 4.18\% & 4.18\% & 90.8\%\\
		&Hockey-stick & 0.35\% & 6.81\% & 6.82\% & 88.4\%\\
		&Quadratic & 0.11\% & 2.00\% & 2.00\% & 88.8\%\\
		
		\bottomrule
	\end{tabular}
\end{table}


Table~\ref{tab:large1} presents a quantitative summary of this experiment.
Consistent with the observations in Figures~\ref{fig:BarrierMSEBiasVar} and~\ref{fig:BarrierMSE}, all three error measures decrease as the simulation budget increases.
Also, the main contribution in the RRMSE is the relative standard deviation, the relative bias is small in all configurations.
Besides the three relative error measures, the last column in Table~\ref{tab:large1} includes the coverage probabilities of the 90\% CIs.
That is, the percentage of 1,000 macro replications where the benchmark estimator falls in the 90\% CIs according to Corollaries~\ref{cor:confidenceinterval} and~\ref{cor:confidenceintervalIndicator}.
We see that the coverage probabilities presented in Table~\ref{tab:large1} are all close to 90\%.
This observation supports the proposed variance estimators for the GNS estimator.
We emphasize that these variance estimators are obtained in one run of the GNS procedure so no macro replication is needed.

\subsection{A Realistic Option Portfolio}
In this example, we consider an option portfolio with 240 options written on 60 different assets.
The assets are divided into three groups, each with 20 assets, and assets from different groups are assumed to be independent.
This is a more realistic risk management problem compared to the previous example.
We compare the GNS procedure's performance with standard nested simulation and a state-of-art regression based approach.

The option portfolio consists of 60 European call options, 60 geometric Asian call options, and 120 barrier options.
Specifically:
\begin{enumerate}
 \item In Group 1, there are 20 underlying assets. Three European call options with strikes $K=90,100,110$ are written on each asset in this group

 \item In Group 2, there are 20 underlying assets. Three geometric Asian call options with strikes $K=90,100,110$ are written on each asset in this group.
 The payoff of a geometric Asian call option is $((\prod_{k=1}^N S_{t_k}^i)^{1/N}-K)^+$, where $K$ is the strike price.
 In the implementation we use $N=50$ time steps for these Asian options.

 \item In Group 3, there are 20 underlying assets. Three up-and-out call options with barrier $U=120$ and three down-and-out call options with barrier $D=90$ are written on these assets.
 Both type of options have three different strikes $K=90,100,110$.
 In the implementation we use $N=200$ time steps for these barrier options use Brownian bridge approximation for any two adjacent time points to simulate the continuously monitoring maximum and minimum values.

\end{enumerate}

We compare the GNS estimator with standard nested simulation estimators and the regression estimator proposed in~\cite{broadie2015risk}.
We consider different budget allocations for the standard nested simulation estimators, to identify the one with the highest accuracy.
For the regression estimator, weighted Laguerre polynomials on the underlying asset price up to an order of 4 are used as the basis functions~\citep[see][for example]{longstaff2001valuing}.

Table~\ref{tab:large2} summarizes the RRMSEs of the three approaches.
We see that, based on the RRMSEs the GNS estimator is significantly more accurate that the standard nested simulation estimators.
For example, for a hockey-stick risk function with $10^5$ simulation budget, the lowest RRMSE of the standard nested simulation estimator, among all allocations presented in the table, is 21.98\%.
The RRMSE of the GNS estimator with the same configuration is only 2.75\%, which is 8 times smaller than the former.
Therefore, if we presume that the optimal convergence rate of nested simulation estimator is achieved, i.e., $\Gamma^{-1/3}$ for RRMSE, then the sampling budget for the nested simulation estimator needs to be $8^3$ times of the GNS estimator to achieve the same level of RRMSE.

\begin{table}[H]
\centering
\caption{Comparison of RRMSEs (\%) for the standard nested simulation estimator, regression estimator, and the GNS estimator. For the standard nested simulation, the allocation $n \times m'$ means that there are $n$ outer scenarios with $m'$ inner samples each.}
\label{tab:large2}
%
%
%
%
%
%
\begin{tabular}{@{}rccccccccc@{}}
	\toprule
	Sim. Budget&\phantom{1}&\multicolumn{4}{c}{Standard nested simulation}&\phantom{1}&Regression &\phantom{1}& GNS \\
	\cmidrule{1-1}\cmidrule{3-6}\cmidrule{8-8}\cmidrule{10-10}
	$m=10^3$ &&$10\times100$&$20\times50$& $40\times25$& $50\times20$&& && \\

Indicator    && 100.30\% & 78.88\% & 73.24\% & 77.61\% && 123.18\%&& 22.75\% \\
Hockey-stick && 148.50\% & 127.14\%& 138.66\%& 153.53\%&& 638.17\%&& 29.26\% \\
Quadratic    && 42.12\%  & 32.69\% & 29.71\% & 31.42\% && 753.40\%&& 13.26\% \\
	
	\midrule
	$m=10^4$ &&$50\times200$&$100\times100$& $200\times50$& $400\times25$&&& & \\
	
	Indicator && 44.50\% & 34.14\% & 34.12\% & 50.27\%&& 16.85\% && 7.00\% \\
	Hockey-stick && 60.12\% & 51.61\% & 58.48\% & 98.09\%&& 48.63\% && 8.87\%\\
	Quadratic && 18.29\% & 14.23\% & 12.81\% & 18.78\%&& 10.92\% && 3.88\% \\
	
%
	
	\midrule
	$m=10^5$ &&$200\times500$&$400\times250$& $1,\!000\times100$& $2,\!000\times50$&&& &\\
	
	Indicator && 21.27\% & 15.59\% & 16.38\% & 26.81\%&& 2.82\% && 2.13\% \\
	Hockey-stick && 28.64\% & 21.98\% & 27.00\% & 48.02\%&& 5.53\% && 2.75\% \\
	Quadratic && 8.84\% & 6.52\% & 6.01\% & 9.08\%&& 1.39\% && 1.16\%  \\
	\bottomrule
\end{tabular}
\end{table}

Table~\ref{tab:large2} also shows that the GNS estimator outperforms the regression estimator, sometimes significantly so, e.g., when the simulation budget is small.
In all experiments presented in Table~\ref{tab:large2}, the GNS estimator has smaller RRMSEs than the regression estimator, although the difference becomes smaller as the simulation budget increases. It should be pointed out that the bias of the regression estimator may persist regardless of how large the simulation budget is, due to the model error in selecting basis functions. By contrast, convergence of the GNS estimator to $\rho$ can be guaranteed theoretically as simulation budget increases.

\section{Conclusions}\label{sec:Conclusions}

We have proposed a green nested simulation (GNS) procedure, that pools inner simulation outputs from different outer scenarios, for solving nested estimation problems. Inner simulation outputs are weighted by likelihood ratios to ensure the unbiasedness of the conditional expectation estimates, helping to produce a convergent GNS estimator. The MSE of the GNS estimator is shown to converge at a rate of $\Gamma^{-1}$, the fastest rate that can be achieved by a typical simulation estimator, where $\Gamma$ is the simulation budget. This rate is achieved by simply recycling the inner simulation outputs weighted by likelihood ratios, without introducing modeling errors that appear to be common in existing regression-based and metamodeling-based methods when selecting basis functions, covariance functions or kernel bandwidth. CLT and variance estimates of the GNS procedure have been established, enabling the construction of asymptotically valid confidence intervals. Numerical examples on the portfolio risk measurement application have shown that the proposed GNS procedure works quite well.

\clearpage
\bibliographystyle{plainnat}
\bibliography{RiskMeasurementLR}

\clearpage
\appendix
\section{Auxiliary proofs for results in Section~\ref{sec:AnalysisLoss}}\label{app:AnalysisLoss}

\begin{proof}[Proofs of Proposition~\ref{prop:Lm_as}]
 Assumption~\ref{assm:basic}~\ref{assm:abscon} ensures that the likelihood ratio is well-defined so for any fixed scenario $x$ we have, by Equation~\eqref{eq:LtauLR},
 \begin{equation*}
 \E\left[L_m(x)\right]= \E\left[\avgmj\hatH(x,Y_j)\right] = \E\left[\hatH(x,Y)\right] = L(x).
 \end{equation*}

 Also, since $\E\left[|\hatH|\right]<\infty$ and $Y_j$, $j=1,\ldots,m$ are i.i.d., by the strong law of large numbers we have $L_m(x)\stackrel{a.s.}{\to} L(x) \mbox{ as } m\to\infty$.
	This means that $\P\left(\lim\limits_{m\to\infty} L_m(x)=L(x)\right)=1$ for any fixed $x$.
	Because $X$ and $Y$ are independent by Assumption~\ref{assm:basic}~\ref{assm:independence}, the Independence Lemma~\citep[see Lemma 2.3.4 in][for example]{shreve2004stochastic} implies that
	\begin{align*}
	&\P\left(\lim_{m\to\infty}L_m(X)=L(X)\right)=\E\left[\1\left\lbrace \lim\limits_{m\to\infty}L_m(X)=L(X)\right\rbrace \right]\\
	=&\E\left[ \E\left[\1\left\lbrace \lim\limits_{m\to\infty}L_m(X)=L(X)\right\rbrace|X \right] \right]
	=\E\left[\left. \P\left(\lim\limits_{n\to\infty}L_m(x)=L(x)\right)\right|_{x=X}\right]=1.
	\end{align*}
	This means that $L_m(X)\stackrel{a.s.}{\to} L(X) \mbox{ as } m\to\infty$ and the proof is complete.
\end{proof}

\begin{proof}[Proof of Lemma~\ref{lem:rv_moment2p}]
	Note that
	\begin{align*}\everymath{\displaystyle}
	\E\left[(R-\E\left[R|\cG\right])^{2p}\right]\leq&\E\left[(|R|+|\E\left[R|\cG\right]|)^{2p}\right]=\E\left[\sum_{k=0}^{2p}\binom{2p}{k} |R|^{2p-k}|\E\left[R|\cG\right]|^{k}\right]\\
	=&\E [R^{2p}]+\E\left[\E\left[R|\cG\right]^{2p}\right]+\sum_{k=1}^{2p-1}\binom{2p}{k}\E \left[|R|^{2p-k}|\E\left[R|\cG\right]|^{k}\right]\\
	\stackrel{(*)}{\leq}&\E [R^{2p}]+\E\left[\E\left[R|\cG\right]^{2p}\right]+\sum_{k=1}^{2p-1}\binom{2p}{k} (\E [R^{2p}])^\frac{2p-k}{2p}\left(\E\left[\left( \E\left[R|\cG\right]\right)^{2p}\right]\right)^\frac{k}{2p}\\
	\stackrel{(**)}{\leq}&\E [R^{2p}]+\E [R^{2p}]+\sum_{k=1}^{2p-1}\binom{2p}{k} (\E\left[R^{2p}\right])^\frac{2p-k}{2p}(\E [R^{2p}])^\frac{k}{2p}\\
	=&\sum_{k=0}^{2p}\binom{2p}{k}\E [R^{2p}]=2^{2p}\E [R^{2p}],
	\end{align*}
	where inequalities $(*)$ and $(**)$ follow from H${\rm \ddot{o}}$lder's and Jensen's inequalities, respectively. The proof is complete.
\end{proof}

\begin{proof}[Proof of Lemma~\ref{lem:avg_moment2p}]
	According to the multinomial theorem and the conditional independence of $R_j$'s, we have
	\begin{eqnarray*}
		\E\left[\left(\avgmj R_j\right)^{2p}\right] &=& \frac{1}{m^{2p}}\sum_{i_1+\cdots+i_k=2p}\frac{(2p)!}{i_1!i_2!\cdots i_k!}\E\left[R_{j_1}^{i_1}\cdots R_{j_k}^{i_k}\right]\\
		&=& \frac{1}{m^{2p}}\sum_{i_1+\cdots+i_k=2p}\frac{(2p)!}{i_1!i_2!\cdots i_k!}\E\left[\E\left[R_{j_1}^{i_1}\cdots R_{j_k}^{i_k}|\cG\right]\right]\\
		&=& \frac{1}{m^{2p}}\sum_{i_1+\cdots+i_k=2p}\frac{(2p)!}{i_1!i_2!\cdots i_k!}\E\left[\E\left[R_{j_1}^{i_1}|\cG\right]\cdots \E\left[R_{j_k}^{i_k}|\cG\right]\right].
	\end{eqnarray*}
	
	We will next bound the value and the number of summands. Since $i_1+\cdots+i_k=2p$, one can show that
	\begin{equation*}
	\E\left[R_{j_1}^{i_1}\cdots R_{j_k}^{i_k}\right] \leq \E\left[\left|R_{j_1}^{i_1} R_{j_2}^{i_2}\cdots R_{j_l}^{i_l}\right|\right]\stackrel{(*)}{\leq} \left(\E\left[ R_{j_1}^{2p}\right]\right)^\frac{i_1}{2p}\cdots\left(\E\left[ R_{j_l}^{2p}\right]\right)^\frac{i_l}{2p}=\E\left[ R_1^{2p}\right] < \infty,
	\end{equation*}
	where $(*)$ follows the generalized H${\rm \ddot{o}}$lder's inequality.
	
	Since $\E\left[R_j|\cG\right]=0$ for all $1\leq j\leq m$, for a summand to be non-zero it must have all $i_1,\ldots,i_k \geq 2$.
	Combine this with $i_1+\cdots +i_k = 2 p$, we have $k\leq p$.
	Table~\ref{tab:terms} summarizes the multinomial coefficients and the number of summands of the form $\E\left[R_{j_1}^{i_1}\cdots R_{j_k}^{i_k}\right]$ for fixed numbers $k=1,\ldots,p$; the special case where $k=p$ is given in the second row.
	
	\begin{table}[h!]
		\centering
		\begin{tabular}{C{2.35cm}|C{2cm}|C{3.5cm}|C{2.8cm}|C{3.5cm}}
			\hline
			summand expression & multinomial coefficient & \# of different $\{i_1,\ldots,i_k\}$ & \# of different $\{j_1,\ldots,j_k\}$ & product\\\hline				
			$\E\left[R_{j_1}^{i_1}\cdots R_{j_k}^{i_k}\right]$& $\displaystyle\frac{(2p)!}{i_1!i_2!\cdots i_k!}$ & \# of integer solution satisfying $i_1,\ldots, i_k \geq 2$ and $i_1+\cdots+i_k=2p$. \newline Does not depend on $m$. & $\displaystyle\binom{m}{k} = \cO(m^k)$ & $\cO\left(m^k\right)\leq \cO(m^{p-1})$ for $k\leq p-1$\\\hline
			$\E\left[R_{j_1}^{2}\cdots R_{j_p}^{2}\right]$& $\displaystyle\frac{(2p)!}{2^p}$ & 1 & $\displaystyle\binom{m}{p} = \frac{m^p}{p!} + \cO(m^{p-1})$ & $\displaystyle c_p m^p + \cO(m^{p-1})$ where $c_p=\frac{(2p)!}{2^p(p!)}$ \\
			\hline
		\end{tabular}
		\caption{A breakdown of the number of summands for $k=1,\ldots,p$ unique of $R_{j}$'s. The binomial coefficients are denoted by $\binom{n}{k} =\frac{n!}{(n-1)!k!}$.}
		\label{tab:terms}
	\end{table}
	For sufficiently large $m$, we have $\binom{m}{k} \leq \binom{m}{p}$ for $k\leq p$.
	Therefore, as $m \to\infty$,
	\begin{equation*}
	\E\left[\left(\avgmj R_j\right)^{2p}\right] = \frac{1}{m^{2p}}\left(c_pm^p + \cO(m^{p-1}) \right) \E\left[ R_1^{2p}\right] = \cO\left(m^{-p}\right).
	\end{equation*}
	The proof is complete.
\end{proof}

\begin{proof}[Proof of Theorem~\ref{thm:Lm_moment2p}]
	Let $L_m(X)-L(X)=\avgmj R_j$ where $R_j=H(X, Y_j)-L(X)$ for $j=1,...,m$ and $\cG = \sigma(X)$ then it suffices to verify that the conditions of Lemma~\ref{lem:avg_moment2p} hold.
	
	Firstly, since $Y_j$ are i.i.d. so $R_j$'s are identically distributed and are conditional independent given $X$.	
	Moreover, by Equation~\eqref{eq:LtauLR} we have $\E\left[H(X, Y_j)|X\right] = L(X)$ so $\E\left[R_j|\cG\right]=0$ for $j=1,\ldots,m$.
	Lastly, the $2p$-moment of $R_1$ is bounded because
	\begin{equation*}
	\E\left[|R_1|^{2p}\right] = \E\left[\left(H(X, Y_1)-\E\left[H(X, Y_1)|X\right]\right)^{2p}\right]\stackrel{(*)}{\leq} 4^{p}\E\left[\left|H(X, Y_1)\right|^{2p}\right]<\infty,
	\end{equation*}
	where the inequality $(*)$ holds due to Lemma~\ref{lem:rv_moment2p} with $R=H(X, Y_1)$, and $\cG=\sigma(X)$.
	The proof is complete.
\end{proof}

\section{Supplementary details for asymptotic bias, variance, and MSE}\label{app:MSE}

A few special instances of Cauchy-Schwartz's inequalities are frequently used in our analysis, so we summarize them in Lemma~\ref{lem:CauchyIneq} for ease of reference.
\begin{lemma}\label{lem:CauchyIneq}
	For all vectors $\bm{x}$ and $\bm{y}$ of an inner product space, Cauchy-Schwartz's inequality asserts that $|\left\langle \bm{x},\bm{y}\right\rangle |^2 \leq \left\langle \bm{x},\bm{x}\right\rangle\cdot \left\langle \bm{y},\bm{y}\right\rangle$, where $\left\langle \cdot,\cdot\right\rangle$ is the inner product.
	In particular, if $\bm{x}=(x_1,\ldots,x_n)$ and $\bm{y}$ is a vector of ones with compatible dimension, then
	\begin{equation}\label{eq:CauthySchwarzIneq1}
	\left(\sum_{i=1}^n x_i\right)^2 \leq n\sum_{i=1}^n x_i^2.
	\end{equation}
	Also, if $X, X_1,\ldots,X_n$ are identically distributed random variables, then
	\begin{equation}\label{eq:CauthySchwarzIneq2}
	\E\left[\left(\avgni X_i\right)^2\right] = \frac{1}{n^2} \E\left[\left(\sum_{i=1}^n X_i\right)^2\right] \leq \frac{1}{n} \left(\sum_{i=1}^n\E\left[ X_i^2\right]\right) = \E[X^2].
	\end{equation}
	Moreover, define the inner product of two arbitrary random variables $X$ and $Y$ as the expectation of their product, then
	\begin{equation}\label{eq:CauthySchwarzIneq3}
	\E\left[|XY|\right] \leq \left(\E\left[|X|^2\right]\right)^{1/2}\left(\E\left[|Y|^2\right]\right)^{1/2}.
	\end{equation}
	Lastly,~\eqref{eq:CauthySchwarzIneq3} implies that the following inequality holds for arbitrary random variables $X$ and $Y$,
	\begin{align}
		&\E\left[X^2-Y^2\right] \leq \E\left[\left|X^2-Y^2\right|\right] = \E\left[\left|(X-Y)^2 + 2Y(X-Y)\right|\right]\nonumber\\
		  \leq &\E\left[(X-Y)^2\right] + 2\left(\E\left[Y^2\right]\right)^{1/2}\left(\E\left[(X-Y)^2\right]\right)^{1/2}.\label{eq:CauthySchwarzIneq4}
	\end{align}
\end{lemma}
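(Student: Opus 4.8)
The plan is to obtain every displayed inequality as a direct specialization or consequence of the abstract Cauchy--Schwartz inequality $|\langle\bm{x},\bm{y}\rangle|^2\leq\langle\bm{x},\bm{x}\rangle\cdot\langle\bm{y},\bm{y}\rangle$, which I take as the starting point. First I would prove~\eqref{eq:CauthySchwarzIneq1} by applying Cauchy--Schwartz in $\R^n$ with $\bm{x}=(x_1,\ldots,x_n)$ and $\bm{y}$ the all-ones vector under the Euclidean inner product: then $\langle\bm{x},\bm{y}\rangle=\sum_{i=1}^n x_i$, $\langle\bm{x},\bm{x}\rangle=\sum_{i=1}^n x_i^2$, and $\langle\bm{y},\bm{y}\rangle=n$, so the abstract bound reads exactly $\left(\sum_{i=1}^n x_i\right)^2\leq n\sum_{i=1}^n x_i^2$.

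Next, for~\eqref{eq:CauthySchwarzIneq2} I would apply~\eqref{eq:CauthySchwarzIneq1} pathwise to the random variables $X_1,\ldots,X_n$, giving $\left(\sum_{i=1}^n X_i\right)^2\leq n\sum_{i=1}^n X_i^2$ almost surely; taking expectations and using that the $X_i$ are identically distributed (so $\E[X_i^2]=\E[X^2]$) yields $\E\left[\left(\sum_{i=1}^n X_i\right)^2\right]\leq n\sum_{i=1}^n\E[X_i^2]=n^2\E[X^2]$, and dividing by $n^2$ closes the chain. For~\eqref{eq:CauthySchwarzIneq3} I would invoke Cauchy--Schwartz in the $\cL^2$ space of random variables with inner product $\langle X,Y\rangle=\E[XY]$, applied to $|X|$ and $|Y|$, which gives $\E[|X|\,|Y|]\leq\left(\E[|X|^2]\right)^{1/2}\left(\E[|Y|^2]\right)^{1/2}$.

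Finally, for~\eqref{eq:CauthySchwarzIneq4} I would first verify the algebraic identity $X^2-Y^2=(X-Y)^2+2Y(X-Y)$ by expansion, then bound $\E[X^2-Y^2]\leq\E[|X^2-Y^2|]$ by monotonicity of expectation, apply the triangle inequality $|(X-Y)^2+2Y(X-Y)|\leq(X-Y)^2+2|Y(X-Y)|$, and control the cross term via~\eqref{eq:CauthySchwarzIneq3} with the two variables taken to be $Y$ and $X-Y$, producing the bound $2\left(\E[Y^2]\right)^{1/2}\left(\E[(X-Y)^2]\right)^{1/2}$. Since each step is a textbook application, there is no genuine obstacle; the only care needed is the bookkeeping in~\eqref{eq:CauthySchwarzIneq4}, namely combining the identity and the triangle-inequality split correctly before invoking the $\cL^2$ Cauchy--Schwartz bound.
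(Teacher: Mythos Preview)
Your proposal is correct and matches the paper's approach exactly: the paper does not give a separate proof but embeds the derivations directly in the lemma statement, and your filling-in of the details (Euclidean Cauchy--Schwartz with the all-ones vector for~\eqref{eq:CauthySchwarzIneq1}, pathwise application plus identical distribution for~\eqref{eq:CauthySchwarzIneq2}, $\cL^2$ Cauchy--Schwartz for~\eqref{eq:CauthySchwarzIneq3}, and the algebraic identity plus triangle inequality plus~\eqref{eq:CauthySchwarzIneq3} for~\eqref{eq:CauthySchwarzIneq4}) is precisely the intended reasoning.
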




Proposition~\ref{prop:bias}, Proposition~\ref{prop:var}, and Theorem~\ref{thm:MSE} are analyzed in Sections~\ref{subsec:Analysis_Bias},~\ref{subsec:Analysis_variance}, and~\ref{subsec:Analysis_MSE}, respectively.
This section provide additional details to unproven parts of the above results, such as proving Lemma~\ref{lem:usefuleqs} and asymptotic variance for the indicator risk function.

\begin{proof}[Proof of Lemma~\ref{lem:usefuleqs}]
	For Equation~\eqref{eq:usefuleq1}, note that
\begin{align}
	\E\left[\1\{L_m\geq 0\} - \1\{L\geq 0\}\right] &= \int\int_{-z/\sqrt{m}}^{\infty} p_m(\ell,z)\d\ell\d z - \int\int_{0}^{\infty} p_m(\ell,z)\d\ell\d z\nonumber\\
	&= \int\int_{-z/\sqrt{m}}^{0} p_m(\ell,z)\d\ell\d z \nonumber\\
	&\stackrel{(*)}{=} \int\int_{-z/\sqrt{m}}^{0} \left[p_m(0,z) + \ell\cdot\frac{\partial}{\partial \ell} p_m(u_\ell,z)\right] \d\ell\d z \nonumber\\
	&= \int \frac{z}{\sqrt{m}}p_m(0,z)dz + \int\int_{-z/\sqrt{m}}^{0} \ell \frac{\partial}{\partial \ell} p_m(u_\ell,z)\d\ell\d z.\label{eq:auxeq1}
\end{align}
	where $(*)$ holds by Assumption~\ref{assm:jointdensity}.
	The first term in~\eqref{eq:auxeq1} can be written as $\frac{\widetilde{p}(\ell)}{\sqrt{m}}\E[Z_m|L=0]$, which equals 0 because, by Proposition~\ref{prop:Lm_as},
	\[
	\frac{1}{\sqrt{m}}\E[Z_m|L=0] = \E[\E[L_m(X)-L(X)|X]|L(X)=0] = \E[L(X)-L(X)|L(X)=0]= 0.
	\]
	The second term of~\eqref{eq:auxeq1} is of order $\cO(m^{-1})$ because, by Assumption~\ref{assm:jointdensity}~\ref{assm:jointdensity3}, it is bounded by
	\[
 \int\int_{-z/\sqrt{m}}^{0} |\ell|\cdot\bar{p}_{1,m}(z)\d\ell\d z = \frac{1}{2m}\int z^2 \bar{p}_{1,m}(z)dz = \cO(m^{-1}).
	\]

	For Equation~\eqref{eq:usefuleq3}, note that
	\begin{align*}
		&\E\left[|L_m\cdot(\1\{L_m\geq 0\} - \1\{L\geq 0\})|\right]\\
		\leq& \E\left[|L_m|\cdot\1\{L_m\geq 0 > L\}\right] +\E\left[|L_m|\cdot\1\{L\geq 0 > L_m\}\right]\\
=&\int^{\infty}_{0}\int_{-z/\sqrt{m}}^{0} \left|\ell+\frac{z}{\sqrt{m}}\right| p_m(\ell,z)\d\ell\d z + \int_{-\infty}^{0}\int^{-z/\sqrt{m}}_{0} \left|\ell+\frac{z}{\sqrt{m}}\right| p_m(\ell,z)\d\ell\d z\\
\leq&\int^{\infty}_{0}\int_{-z/\sqrt{m}}^{0} \left(|\ell|+\frac{|z|}{\sqrt{m}}\right) \bar{p}_{0,m}(z)\d\ell\d z +\int_{-\infty}^{0}\int^{-z/\sqrt{m}}_{0} \left(|\ell|+\frac{|z|}{\sqrt{m}}\right) \bar{p}_{0,m}(z)\d\ell\d z\\
=&\int^{\infty}_{0}\left(\frac{z^2}{2m} + \frac{z^2}{m}\right) \bar{p}_{0,m}(z)\d z +\int_{-\infty}^{0}\left(\frac{z^2}{2m} + \frac{z^2}{m}\right) \bar{p}_{0,m}(z)\d z\\
		=&\frac{3}{2m}\int z^2 \bar{p}_{0,m}(z)\d z= \cO(m^{-1}),
	\end{align*}
	where the last equality holds by Assumption~\ref{assm:jointdensity}~\ref{assm:jointdensity3}.
	The proof is complete.
\end{proof}

\begin{proof}[Proof of Proposition~\ref{prop:var}]

Discussions in Section~\ref{subsec:Analysis_variance} assert Proposition~\ref{prop:var} for smooth and hockey-stick risk functions.
For the indicator risk function, it remains to prove that the first term in~\eqref{eq:varbound} is of order $\cO(m^{-1}) + \cO(n^{-1})$.

Note that $L_{m,i}$, $i=1,\ldots,n$ are identically distributed (so are $\Li$, $i=1,\ldots,n$), then
\begin{align}
&\E\left[\left(\avgni \left(g\left(\Lmi\right) - g\left(\Li\right)\right)\right)^2 \right]\nonumber\\
=&\frac{1}{n^2}\E\left[\sum_{i=1}^n\left(g\left(\Lmi\right) - g\left(\Li\right)\right)^2 + \sum_{i=1}^n\sum_{\substack{k=1\\ k\neq i}}^n\left(g\left(\Lmi\right) - g\left(\Li\right)\right)\left(g\left(L_{m,k}\right) - g\left(L_{k}\right)\right)\right]\nonumber\\
=&\frac{1}{n}\E\left[\left(g\left(L_{m,1}\right) - g\left(L_{1}\right)\right)^2 \right]+\frac{n-1}{n}\E\left[\left(g\left(L_{m,1}\right) - g\left(L_1\right)\right)\left(g\left(L_{m,2}\right) - g\left(L_2\right)\right)\right]\nonumber\\
\leq&\frac{1}{n}+\frac{n-1}{n}\E\left[\left(g\left(L_{m,1}\right) - g\left(L_{1}\right)\right)\left(g\left(L_{m,2}\right) - g\left(L_{2}\right)\right)\right], \label{VarIndicator01}
\end{align}
where the inequality holds because $g(x)=\1\{x\geq 0\}\leq 1$ and so $(g(x)-g(y))^2 \leq 1$.
The first term in~\eqref{VarIndicator01} is of order $\cO(n^{-1})$.
For the second term in~\eqref{VarIndicator01}, note that
\begin{align}
&\E\left[\left(g\left(L_{m,1}\right) - g\left(L_1\right)\right)\left(g\left(L_{m,2}\right) - g\left(L_2\right)\right)\right]\nonumber\\
=&\E\left[\left(\1\{L_{m,1}\geq 0 > L_1\} - \1\{L_1\geq 0>L_{m,1}\}\right)\left(\1\{L_{m,2}\geq 0 > L_2\} - \1\{L_2 \geq 0 > L_{m,2}\}\right)\right]\nonumber\\
=&\P\left(L_{m,1} \geq 0 > L_1,L_{m,2} \geq 0 > L_2 \right) - \P\left(L_1\geq 0 > L_{m,1},L_{m,2} \geq 0 >L_2\right) \nonumber\\
& - \P\left(L_{m,1} \geq 0 > L_1, L_2 \geq 0 > L_{m,2}\right) + \P\left(L_1 \geq 0 >L_{m,1},L_2 \geq 0 > L_{m,2}\right).\label{eq:auxeq2}
\end{align}
We examine the convergence rate of the first term in~\eqref{eq:auxeq2}, which is common for all four terms.
By Assumption~\ref{assm:indicatordensity2}, we can apply the Taylor's theorem to the joint density $q_m(\ell_1,\ell_2,z_1,z_2)$ so
\begin{align}
	q_m(\ell_1,\ell_2,z_1,z_2)&=q_m(0,0,z_1,z_2) + \ell_1\frac{\partial}{\partial \ell_1}q_m(\bar{\ell}_1,\bar{\ell}_2,z_1,z_2)+ \ell_2\frac{\partial}{\partial \ell_2}q_m(\bar{\ell}_1,\bar{\ell}_2,z_1,z_2)\label{indVar Taylor q}\\
	&\leq q_m(0,0,z_1,z_2) + (|\ell_1|+|\ell_2|)\cdot \bar{q}_{1,m}(z_1,z_2)\label{indVar Taylor q_ineq},
\end{align}
where $\bar{\ell}_1\in(\ell_1,0)$, $\bar{\ell}_2\in(\ell_2,0)$, and the inequality holds by Assumption~\ref{assm:indicatordensity2}~\ref{assm:jointdensityB2}.
Then we have
\begin{align*}
	&\P\left(L_{m,1}\geq0>L_{1},L_{m,2}\geq0>L_{2}\right)\\
	=&\P\left(L_{1}+Z_{m,1}/\sqrt{m}\geq0>L_{1},L_{2}+Z_{m,2}/\sqrt{m}\geq0>L_{2}\right)\\
	=&\int_0^{\infty}\int_0^{\infty}\int_{-\frac{z_1}{\sqrt{m}}}^0\int_{-\frac{z_2}{\sqrt{m}}}^0 q_m(\ell_1,\ell_2,z_1,z_2)\d\ell_1 \d\ell_2 \d z_1 \d z_2\\
\stackrel{\eqref{indVar Taylor q_ineq}}{\leq}&\frac{1}{m}\int_0^{\infty}\int_0^{\infty} z_1z_2 \bar{q}_{0,m}(z_1,z_2)\d z_1 \d z_2 + \int_0^{\infty}\int_0^{\infty}\int_{-\frac{z_1}{\sqrt{m}}}^0\int_{-\frac{z_2}{\sqrt{m}}}^0 (|\ell_1| + |\ell_2|)\bar{q}_{1,m}(z_1,z_2)\d\ell_1 \d\ell_2 \d z_1 \d z_2\\
=&\cO(m^{-1}) - \frac{1}{2m^{3/2}}\int_0^{\infty}\int_0^{\infty}(z_1^2z_2+z_1z_2^2)\bar{q}_{1,m}(z_1,z_2)\d z_1 \d z_2\\
=&\cO(m^{-1}) + \cO(m^{-3/2})=\cO(m^{-1}).
\end{align*}
This means that the first term in~\eqref{eq:auxeq2}, and indeed all four terms, converge at the rate $\cO(m^{-1})$.
So~\eqref{VarIndicator01} is of order $\cO(m^{-1})+\cO(n^{-1})$.
Combining this with the latter two terms in~\eqref{eq:varbound}, which are of order $\cO(n^{-1})$ and $\cO(m^{-2})$, we see that $\Var[\rho_{mn}]=\cO(m^{-1})+\cO(n^{-1})$, as desired.
\end{proof}

\section{Proof for Theorem~\ref{thm:varianceestimate}}\label{app:CLT}

We will use a few lemmas below to help prove Theorem~\ref{thm:varianceestimate}.
Specifically, Lemmas~\ref{lem:aux1} and~\ref{lem:sig1} show that $\widehat{\sigma}_{1,mn}^2\conprob\sigma_{1}^2$ and Lemmas~\ref{lem:aux2} and~\ref{lem:sig2} show that $\widehat{\sigma}_{2,mn}^2\conprob\sigma_{2}^2$.
Then $\widehat{\sigma}_{mn}^2/\sigma_{mn}^2$ converges to 1 in probability by the continuous mapping theorem.

\begin{lemma}\label{lem:aux1}
	Suppose the conditions for Theorem~\ref{thm:CLT} hold, then the following convergences hold for any positive integer $n$,
	\begin{align}
		&\avgni \left[g(\Lmi)-g(\Li)\right] \conlone 0 \mbox{ as } m\to \infty, \label{eq:aux1}\\
		&\avgni \left[(g(\Lmi))^2-(g(\Li))^2\right] \conlone 0 \mbox{ as } m\to \infty.\label{eq:aux2}
	\end{align}
\end{lemma}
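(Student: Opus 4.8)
The plan is to reduce both $\mathcal{L}^1$ statements, which hold for fixed $n$ as $m\to\infty$, to single-index convergences by exploiting the identical distribution of the pairs $(\Lmi,\Li)$ across $i$. By the triangle inequality,
\[
\E\left[\left|\avgni \left(g(\Lmi)-g(\Li)\right)\right|\right] \leq \avgni \E\left[\left|g(\Lmi)-g(\Li)\right|\right] = \E\left[\left|g(L_m)-g(L)\right|\right],
\]
where the last equality holds because $(\Lmi,\Li)$ is distributed as $(L_m,L)$ for every $i$. Applying the same reasoning to $(g(\Lmi))^2-(g(\Li))^2$ shows that it suffices to prove $\E[|g(L_m)-g(L)|]\to 0$ and $\E[|(g(L_m))^2-(g(L))^2|]\to 0$ as $m\to\infty$. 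Note that $n$ drops out entirely after this reduction, so the resulting bounds are uniform in $n$, matching the phrasing of the statement.

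For~\eqref{eq:aux1}, I would split into the two risk-function cases already in force under the conditions of Theorem~\ref{thm:CLT}. For a smooth $g$, the Taylor expansion~\eqref{eq:Taylor} gives $|g(L_m)-g(L)|\leq |g'(L)|\,|L_m-L| + \frac{C_g}{2}(L_m-L)^2$; taking expectations, applying~\eqref{eq:CauthySchwarzIneq3} to the first term and Theorem~\ref{thm:Lm_moment2p} with $p=1$ throughout, yields
\[
\E\left[|g(L_m)-g(L)|\right] \leq \left(\E[(g'(L))^2]\right)^{1/2}\left(\E[(L_m-L)^2]\right)^{1/2} + \frac{C_g}{2}\E[(L_m-L)^2] = \cO(m^{-1/2}),
\]
where $\E[(g'(L))^2]<\infty$ follows from the assumed $\E[(g'(L))^4]<\infty$. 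For the hockey-stick $g$, Lipschitz continuity gives $\E[|g(L_m)-g(L)|]\leq \E[|L_m-L|]\leq (\E[(L_m-L)^2])^{1/2}=\cO(m^{-1/2})$, again by Theorem~\ref{thm:Lm_moment2p}. In both cases the bound tends to zero, establishing~\eqref{eq:aux1}.

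For~\eqref{eq:aux2}, I would invoke inequality~\eqref{eq:CauthySchwarzIneq4} with $X=g(L_m)$ and $Y=g(L)$, which bounds the quantity of interest by
\[
\E\left[\left|(g(L_m))^2-(g(L))^2\right|\right] \leq \E\left[(g(L_m)-g(L))^2\right] + 2\left(\E[(g(L))^2]\right)^{1/2}\left(\E[(g(L_m)-g(L))^2]\right)^{1/2}.
\]
Here $\E[(g(L))^2]$ is finite by the standing assumptions of Theorem~\ref{thm:CLT}, while $\E[(g(L_m)-g(L))^2]=\cO(m^{-1})$ is precisely the estimate derived in~\eqref{eq:varsmooth} for smooth $g$ and in~\eqref{eq:varhockeystick} for the hockey-stick $g$. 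Hence the right-hand side is $\cO(m^{-1})+\cO(m^{-1/2})=\cO(m^{-1/2})\to 0$, which gives~\eqref{eq:aux2}.

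I do not anticipate a serious obstacle: the lemma is essentially a bookkeeping consequence of the $\mathcal{L}^2$ control on $g(L_m)-g(L)$ already established in the variance analysis, combined with Cauchy--Schwarz. The only steps requiring care are the reduction via identical distribution and, for the square case, feeding the previously proven $\cO(m^{-1})$ second-moment bounds into the identity $a^2-b^2=(a-b)^2+2b(a-b)$; once those inputs are correctly matched, the remaining estimates are routine.
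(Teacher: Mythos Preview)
Your proposal is correct and follows essentially the same approach as the paper: reduce to single-index quantities via identical distribution, then control $\E[|g(L_m)-g(L)|]$ and $\E[|(g(L_m))^2-(g(L))^2|]$ using Cauchy--Schwarz and the $\cO(m^{-1})$ bound on $\E[(g(L_m)-g(L))^2]$ from~\eqref{eq:varsmooth} and~\eqref{eq:varhockeystick}. The only stylistic difference is that for~\eqref{eq:aux1} the paper bounds $\E[|g(L_m)-g(L)|]$ directly by $(\E[(g(L_m)-g(L))^2])^{1/2}$ and invokes those same variance estimates, whereas you re-derive the bound case by case via Taylor and Lipschitz; your route is slightly longer but equally valid.
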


\begin{proof}[Proof of Lemma~\ref{lem:aux1}]
	Recall that $\Lmi$, $i=1,\ldots,n$ are identically distributed, and so are $\Li$, $i=1,\ldots,n$.
	Then
	\begin{align*}
	\E\left[\left|\avgni \left[g(\Lmi)-g(\Li)\right]\right|\right]	\leq \E[|g(L_m)-g(L)|]
	\stackrel{\eqref{eq:CauthySchwarzIneq3}}{\leq} \left(\E[(g(L_m)-g(L))^2]\right)^{1/2} \stackrel{(*)}{=} \cO(m^{-1/2})
	\end{align*}
	where $(*)$ holds by Equations~\eqref{eq:varsmooth} and~\eqref{eq:varhockeystick}.
	This means that~\eqref{eq:aux1} holds.

	Moreover,
	\begin{align*}
	&\E\left[\left|\avgni \left[(g(\Lmi))^2-(g(\Li))^2\right]\right|\right]\leq \E[|(g(L_m))^2-(g(L))^2|]\\
	\stackrel{\eqref{eq:CauthySchwarzIneq4}}{\leq} &\E\left[\left(g(L_m)-g(L)\right)^2\right] + 2\left(\E\left[(g(L))^2\right]\right)^{1/2}\left(\E\left[\left(g(L_m)-g(L)\right)^2\right]\right)^{1/2}\\
	=&\cO(m^{-1}) + \cO(m^{-1/2})=\cO(m^{-1/2}),
	\end{align*}
	where the last equality holds due to Equations~\eqref{eq:varsmooth} and~\eqref{eq:varhockeystick}.
	This means that~\eqref{eq:aux2} holds.
	The proof is complete.
\end{proof}

\begin{lemma}\label{lem:sig1}
	If the conditions for Theorem~\ref{thm:CLT} hold, then $\widehat{\sigma}_{1,mn}^2\conprob\sigma_{1}^2$ as $\min\{m,n\}\to 0$.
\end{lemma}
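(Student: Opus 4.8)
The plan is to compare the feasible estimator $\widehat{\sigma}_{1,mn}^2$ in~\eqref{eq:sig1hat} against the infeasible ``oracle'' quantity built from the true conditional expectations $\Li=L(X_i)$, namely
\[
S_{1,n} := \avgni (g(\Li))^2 - \left(\avgni g(\Li)\right)^2,
\]
and to argue separately that (i) $S_{1,n}\conprob\sigma_1^2$ as $n\to\infty$, and (ii) $\widehat{\sigma}_{1,mn}^2 - S_{1,n}\conprob 0$ as $\min\{m,n\}\to\infty$. The conclusion then follows from Slutsky's theorem.

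For (i), note that $g(\Li)=g(L(X_i))$ depends only on $X_i$, so by Assumption~\ref{assm:basic}~\ref{assm:independence} the samples $\{g(\Li)\}_{i=1}^n$ and $\{(g(\Li))^2\}_{i=1}^n$ are each i.i.d. The moment condition $\E[(g(L))^2]<\infty$ from Theorem~\ref{thm:CLT} (which also yields $\E[|g(L)|]<\infty$ by Jensen's inequality) lets me apply the weak law of large numbers to obtain $\avgni(g(\Li))^2\conprob\E[(g(L))^2]$ and $\avgni g(\Li)\conprob\E[g(L)]$. The continuous mapping theorem then gives $S_{1,n}\conprob\E[(g(L))^2]-(\E[g(L)])^2=\sigma_1^2$, recalling~\eqref{eq:asymvar1}.

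For (ii), I would write the difference as
\[
\widehat{\sigma}_{1,mn}^2 - S_{1,n} = \underbrace{\avgni\left[(g(\Lmi))^2-(g(\Li))^2\right]}_{=:T_1} - \underbrace{\left[\left(\avgni g(\Lmi)\right)^2 - \left(\avgni g(\Li)\right)^2\right]}_{=:T_2}.
\]
The term $T_1\conprob 0$ is immediate from~\eqref{eq:aux2} in Lemma~\ref{lem:aux1} (since $\mathcal{L}^1$ convergence implies convergence in probability). For $T_2$, set $A_m=\avgni g(\Lmi)$ and $A=\avgni g(\Li)$ and factor $A_m^2-A^2=(A_m-A)(A_m+A)$. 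By~\eqref{eq:aux1} in Lemma~\ref{lem:aux1} we have $A_m-A\conprob 0$, while $A\conprob\E[g(L)]$ from step (i) forces $A_m=A+(A_m-A)\conprob\E[g(L)]$ as well, so $A_m+A\conprob 2\E[g(L)]$; Slutsky's theorem then yields $T_2\conprob 0$. Hence $\widehat{\sigma}_{1,mn}^2-S_{1,n}\conprob 0$.

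The one point deserving care — and the crux of the argument — is that the limit must be taken jointly as $\min\{m,n\}\to\infty$ rather than iteratively. This is legitimate because the $\mathcal{L}^1$ bounds underlying~\eqref{eq:aux1} and~\eqref{eq:aux2} are of order $\cO(m^{-1/2})$ \emph{uniformly in} $n$: as is evident from the proof of Lemma~\ref{lem:aux1}, the bound $\E[|g(L_m)-g(L)|]\le(\E[(g(L_m)-g(L))^2])^{1/2}=\cO(m^{-1/2})$ does not involve $n$ at all. Consequently $T_1$ and $T_2$ vanish whenever $m\to\infty$, while $S_{1,n}\to\sigma_1^2$ whenever $n\to\infty$; both occur under $\min\{m,n\}\to\infty$, which completes the proof.
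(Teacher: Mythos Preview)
Your proof is correct and follows essentially the same route as the paper's own argument: both rely on Lemma~\ref{lem:aux1} (parts~\eqref{eq:aux1} and~\eqref{eq:aux2}) together with the WLLN for the i.i.d.\ terms $g(\Li)$ and $(g(\Li))^2$, combined via continuous mapping/Slutsky. The only cosmetic difference is organizational---you route the comparison through the oracle $S_{1,n}$, whereas the paper splits $\widehat{\sigma}_{1,mn}^2-\sigma_1^2$ directly into the two pieces~\eqref{eq:diff_sig1hat}---and your explicit remark on uniformity in $n$ of the $\cO(m^{-1/2})$ bounds is a welcome clarification that the paper leaves implicit.
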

\begin{proof}[Proof of Lemma~\ref{lem:sig1}]
	By~\eqref{eq:asymvar1} and~\eqref{eq:sig1hat}, we have
	\begin{equation}\label{eq:diff_sig1hat}
		\widehat{\sigma}_{1,mn}^2 - \sigma_{1}^2 = \left[\avgni (g(\Lmi))^2 - \E\left[(g\left(L\right))^2\right]\right] + \left[\left(\avgni g(\Lmi)\right)^2 - \left(\E\left[g\left(L\right)\right]\right)^2\right].
	\end{equation}
	We then show that both terms on the RHS converge to 0 in probability as $\min\{m,n\}\to \infty$.
	
	

	For the first term in~\eqref{eq:diff_sig1hat}, note that
	\begin{align}
	&\avgni (g(\Lmi))^2 - \E\left[(g\left(L\right))^2\right] \nonumber\\
	= &\avgni \left[(g(\Lmi))^2-(g(\Li))^2\right] +\left(\avgni (g(\Li))^2- \E\left[(g\left(L\right))^2\right]\right).\label{eq:diff1}
	\end{align}
	The first term in~\eqref{eq:diff1} converges to 0 in probability by~\eqref{eq:aux1} in Lemma~\ref{lem:aux1}.
	The second term in~\eqref{eq:diff1} converges to 0 in probability to zero as $n\to\infty$ by the weak law of large numbers because $(g(\Li))^2$, $i=1,\ldots,n$ are i.i.d. samples with the common expectation $\E[(g\left(L\right))^2]$.
 	
	For the second term in~\eqref{eq:diff_sig1hat}, by the continuous mapping theorem it suffices to show that $\avgni g(\Lmi) \conprob \E\left[g\left(L\right)\right]$.
	Note that
	\begin{align}
		\avgni g(\Lmi) - \E\left[g\left(L\right)\right] =\avgni \left[g(\Lmi)-g(\Li)\right] +\left(\avgni g(\Li)- \E\left[g\left(L\right)\right]\right).\label{eq:diff2}
	\end{align}
	The first term in the RHS of~\eqref{eq:diff2} converges to 0 in probability by~\eqref{eq:aux2} in Lemma~\ref{lem:aux1}.
	The second term in the RHS of~\eqref{eq:diff2} converges to 0 in probability as $n\to\infty$ by weak law of large numbers because $g(\Li)$, $i=1,\ldots,n$ are i.i.d. samples with the common expectation $\E[g\left(L\right)]$.
	Therefore by the Slutsky's theorem we have $\avgni g(\Lmi) \conprob \E\left[g\left(L\right)\right]$, as desired.
	

	In summary, both terms in~\eqref{eq:diff_sig1hat} converges to 0 in probability, as desired.
	The proof is complete.
\end{proof}

The next two lemmas show $\widehat{\sigma}_{2,mn}^2\conprob\sigma_{2}^2$.
We define new notations for the convenience to state and prove the lemmas:
For any $j=1,\ldots,m$,
\begin{equation}\label{eq:notations}
R_{j} := \E[g'(L)\hatH|Y=Y_j],\
\hatR_{j} := \avgni g'(\Li)\hatHij, \mbox{ and }
\hatR_{m,j} := \avgni g'(\Lmi)\hatHij.
\end{equation}
Note that $\{R_j,j=1,\ldots,m\}$, are identically distributed, and so are $\{\hatR_{j}, j=1,\ldots,m\}$ and $\{\hatR_{m,j},j=1,\ldots,m\}$.
When no confusion arises, the subscript $j$ is omitted to denote a generic index $j=1,\ldots,m$.

\begin{lemma}\label{lem:aux2}
	If the conditions for Theorem~\ref{thm:CLT} hold, then the following convergences hold:
	\begin{align}
		&\avgmj \left[\hatR_{m,j}^2-\hatR_{j}^2\right] \conlone 0, \label{eq:aux4}\\
		&\avgmj \left[\hatR_{j}^2-R_{j}^2\right] \conlone 0,\label{eq:aux5} \\
	& \avgni \left[g'(\Lmi)\Lmi - g'(\Li)\Li\right] \conlone 0.\label{eq:aux6}
	\end{align}
\end{lemma}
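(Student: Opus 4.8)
The plan is to establish all three $\mathcal{L}^1$ limits through a common two-step reduction. First, since the summands of each average are identically distributed in the averaging index (the $X_i$ are i.i.d.\ and the $Y_j$ are i.i.d.), the triangle inequality collapses every average to a single representative term, e.g.\ $\E[|\avgmj(\hatR_{m,j}^2-\hatR_j^2)|]\le \E[|\hatR_{m,1}^2-\hatR_1^2|]$, and similarly $\E[|\avgmj(\hatR_j^2-R_j^2)|]\le\E[|\hatR_1^2-R_1^2|]$ and $\E[|\avgni(g'(\Lmi)\Lmi-g'(\Li)\Li)|]\le\E[|g'(L_m)L_m-g'(L)L|]$. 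Second, for the two differences of squares \eqref{eq:aux4} and \eqref{eq:aux5} I would invoke the difference-of-squares inequality \eqref{eq:CauthySchwarzIneq4}, which bounds $\E[|A^2-B^2|]$ by $\E[(A-B)^2]+2(\E[B^2])^{1/2}(\E[(A-B)^2])^{1/2}$; it then suffices to show the mean-square of the relevant difference vanishes while the reference second moment stays bounded.

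For \eqref{eq:aux5}, note that conditional on $Y_1$ the scenarios $X_i$ are i.i.d., so $\hatR_1=\avgni g'(\Li)\hatH_{i1}$ is a sample mean with conditional expectation exactly $R_1=\E[g'(L)\hatH\mid Y_1]$; a conditional-variance bound then gives $\E[(\hatR_1-R_1)^2]\le \tfrac1n\E[(g'(L)\hatH)^2]=\cO(n^{-1})$ as $n\to\infty$, and the reference moment $\E[R_1^2]\le\E[(g'(L)\hatH)^2]$ (conditional Jensen) is finite under the Theorem~\ref{thm:CLT} hypotheses, by \eqref{eq:CauthySchwarzIneq3} with $\E[(g'(L))^4],\E[\hatH^4]<\infty$ in the smooth case and by $g'\le 1$ with $\E[\hatH^2]<\infty$ in the hockey-stick case. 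For \eqref{eq:aux6} I would split $g'(L_m)L_m-g'(L)L=g'(L_m)(L_m-L)+(g'(L_m)-g'(L))L$: in the smooth case $g'$ is Lipschitz (bounded $g''$), so Cauchy--Schwarz \eqref{eq:CauthySchwarzIneq3} together with $\E[(L_m-L)^2]=\cO(m^{-1})$ from Theorem~\ref{thm:Lm_moment2p} controls both pieces at order $\cO(m^{-1/2})$; in the hockey-stick case the identity $g'(x)x=x\,\1\{x\ge 0\}=g(x)$ makes \eqref{eq:aux6} coincide verbatim with \eqref{eq:aux1} of Lemma~\ref{lem:aux1}.

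For \eqref{eq:aux4}, the reduction combined with \eqref{eq:CauthySchwarzIneq2} gives $\E[(\hatR_{m,1}-\hatR_1)^2]\le \E[(g'(L_{m,1})-g'(L_1))^2\hatH_{11}^2]$, and $\E[\hatR_1^2]\le\E[(g'(L)\hatH)^2]<\infty$ supplies the bounded reference moment, so everything rests on this last expectation. In the smooth case it is at most $C_g^2\E[(L_{m,1}-L_1)^2\hatH_{11}^2]\le C_g^2(\E[(L_m-L)^4])^{1/2}(\E[\hatH^4])^{1/2}=\cO(m^{-1})$ by Theorem~\ref{thm:Lm_moment2p} with $p=2$ and $\E[\hatH^4]<\infty$. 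The hockey-stick case is the main obstacle: here $g'(x)=\1\{x\ge 0\}$ is discontinuous and the hypotheses provide only $\E[\hatH^2]<\infty$, so the Cauchy--Schwarz route (which would cost a fourth moment of $\hatH$) is unavailable. Instead I would argue by dominated convergence, writing the integrand as $|\1\{L_{m,1}\ge 0\}-\1\{L_1\ge 0\}|\,\hatH_{11}^2$: on the full-measure event $\{L_1\neq 0\}$ (the marginal density of $L$ exists by Assumption~\ref{assm:jointdensity}, so $\P(L=0)=0$) the almost-sure convergence $L_m\to L$ of Proposition~\ref{prop:Lm_as} forces $\1\{L_{m,1}\ge 0\}$ to agree with $\1\{L_1\ge 0\}$ for all large $m$, so the integrand tends to $0$ a.s.\ and is dominated by the integrable $\hatH_{11}^2$; hence the expectation vanishes as $m\to\infty$. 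This discontinuous-$g'$ dominated-convergence step is the only genuine delicacy, the smooth case and \eqref{eq:aux5}--\eqref{eq:aux6} being routine moment estimates.
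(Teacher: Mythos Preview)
Your proposal is correct and follows essentially the same approach as the paper: the same triangle-inequality reduction to a single representative term, the same use of the difference-of-squares bound \eqref{eq:CauthySchwarzIneq4} for \eqref{eq:aux4} and \eqref{eq:aux5}, the same conditional-variance argument for \eqref{eq:aux5}, and the same dominated-convergence step for the hockey-stick case of \eqref{eq:aux4} (your explicit remark that $\P(L=0)=0$ under Assumption~\ref{assm:jointdensity} is in fact a detail the paper glosses over).

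The one genuine difference is your treatment of \eqref{eq:aux6} in the hockey-stick case. The paper splits $g'(L_m)L_m-g'(L)L=(g'(L_m)-g'(L))L_m+g'(L)(L_m-L)$ and invokes the quantitative bound \eqref{eq:usefuleq3} of Lemma~\ref{lem:usefuleqs} for the first piece, obtaining an explicit $\cO(m^{-1/2})$ rate. You instead observe that for the hockey-stick function $g'(x)x=x\,\1\{x\ge 0\}=g(x)$, which reduces \eqref{eq:aux6} verbatim to \eqref{eq:aux1} of Lemma~\ref{lem:aux1}. This is a cleaner shortcut that avoids Lemma~\ref{lem:usefuleqs} entirely and yields the same rate; the paper's route, on the other hand, keeps the argument parallel to the smooth case and makes the dependence on Assumption~\ref{assm:jointdensity} explicit.
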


\begin{proof}[Proof of Lemma~\ref{lem:aux2}]
	
	Firstly, note that because $\hatR_{m,j}$, $j=1,\ldots,m$, are identically distributed (so are $\hatR_{j}$, $j=1,\ldots,m$), we have
	\begin{align}
		&\E\left[\left|\avgmj \left[\hatR_{m,j}^2-\hatR_{j}^2\right]\right|\right]\nonumber\leq \E\left[\left|\hatR_{mn}^2-\hatR_{n}^2\right|\right]\nonumber\\
		\stackrel{\eqref{eq:CauthySchwarzIneq4}}{\leq}& \E[\left(\hatR_{mn} - \hatR_n\right)^2] + 2\left(\E\left[\hatR_n^2\right]\right)^{1/2}\left(\E[(\hatR_{mn} - \hatR_n)^2]\right)^{1/2}\nonumber\\
		\stackrel{\eqref{eq:CauthySchwarzIneq2}}{\leq}& \E\left[\left(g'(L_m)\hatH - g'(L)\hatH\right)^2\right] + 2\left(\E\left[\left(g'(L)\hatH\right)^2\right]\right)^{1/2}\left(\E\left[\left(g'(L_m)\hatH - g'(L)\hatH\right)^2\right]\right)^{1/2}\label{eq:aux7}
	\end{align}
	
	\begin{itemize}
		\item For smooth risk functions that have bounded second derivative $|g''(x)|\leq C_g<\infty$, by the Taylor's theorem we have
		\begin{align*}
			&\E[((g'(L_m)-g'(L))\hatH)^2] = \E[(g''(\Lambda_m)(L_m-L)\hatH)^2] \\
			\stackrel{\eqref{eq:CauthySchwarzIneq3}}{\leq}& C_g^2 (\E\left[(L_m-L)^4\right])^{1/2}\left(\E\left[\hatH^4\right]\right)^{1/2} \stackrel{(*)}{=} \cO(m^{-1}),
		\end{align*}
		where $\Lambda_m$ is a random variable between $L$ and $L_m$ and $(*)$ holds because $\E\left[(L_m-L)^4\right] = \cO(m^{-2})$ by Theorem~\ref{thm:Lm_moment2p} with $p=2$ and $\E[\hatH^4]<\infty$ by assumption.
		Also, because $\E\left[\left(g'(L)\right)^4\right]<\infty$ and $\E\left[\hatH^4\right]<\infty$ by assumption, we have
		\begin{align*}
			\E\left[\left(g'(L)\hatH\right)^2\right] \leq \left(\E\left[\left(g'(L)\right)^4\right]\right)^{1/2}\left(\E\left[\hatH^4\right]\right)^{1/2} < \infty.
		\end{align*}
		Therefore~\eqref{eq:aux7} is of order $\cO(m^{-1})$ so it converges to zero as $m\to\infty$ for smooth risk functions.
	
		\item For the hockey-stick risk function $g(x)=\max\{x,0\}$ with $g'(x)=\1\{x\geq 0\}\leq 1 <\infty$, so $\E\left[g'(L_m)\hatH\right]\leq \E\left[\hatH\right] <0$ as $ \E\left[\hatH^2\right] <0$.
		So, by the dominated convergence theorem,
		\begin{align*}
		\lim\limits_{m\to\infty}	\E\left[\left(g'(L_m)\hatH - g'(L)\hatH\right)^2\right] &= \lim\limits_{m\to\infty}\E[((\1\{L_m\geq 0\}-\1\{L\geq 0\})\hatH)^2]\\
			&=\E\left[\lim\limits_{m\to\infty}((\1\{L_m\geq 0\}-\1\{L\geq 0\})\hatH)^2\right] = 0,
		\end{align*}
		where the last equality holds because $L_m\stackrel{a.s}{\to}L$ as $m\to\infty$ according to Proposition~\ref{prop:Lm_as}.
		Also, $\E\left[\left(g'(L)\hatH\right)^2\right]\leq \E\left[\hatH^2\right] < \infty$ where the finiteness holds by assumption.
		Therefore~\eqref{eq:aux7} converges to zero as $m\to\infty$ for the hockey-stick risk function.
	\end{itemize}
	In summary, we have shown that $\E\left[\left|\avgmj \left[\hatR_{m,j}^2-\hatR_{j}^2\right]\right|\right]\to 0$ as $m\to\infty$, which proves the $\cL^1$-convergence in~\eqref{eq:aux4}.
	
	
	Secondly, because $\hatR_{j}$, $j=1,\ldots,m$ are identically distributed (so are $R_j$, $j=1,\ldots,m$), therefore
	\begin{align}
		&\E\left[\left|\avgmj \left[\hatR_{j}^2-R_{j}^2\right]\right|\right] \leq \E\left[\left|\hatR_{n,1}^2-R_1^2\right|\right]\nonumber\\
		\stackrel{\eqref{eq:CauthySchwarzIneq4}}{\leq}& \E\left[\left(\hatR_{n,1} - R_1\right)^2\right] + 2\left(\E\left[R_1^2\right]\right)^{1/2}\left(\E\left[\left(\hatR_{n,1} - R_1\right)^2\right]\right)^{1/2}. \label{eq:aux8}
	\end{align}
	We note that
	$\E\left[R^2\right]=\E\left[\left(\E[g'(L)\hatH|Y]\right)^2\right]\leq\E[\E[(g'(L)\hatH)^2|Y]]=\E[(g'(L)\hatH)^2]\stackrel{(*)}{<}\infty,$
	where $(*)$ holds by the respective assumptions for the smooth and hockey-stick risk functions.
	Next, define $R_{n,ij}^* = g'(\Li)\hatH_{ij} -\E[g'(L)\hatH|Y=Y_j]$ so $\hatR_{n,j} - R_j =\avgni R_{n,ij}^*$.
	Given $Y_j$, $R_{n,ij}^*$, $i=1,\ldots,n$, are conditionally independent and identically distributed with mean
	\begin{align*}
		\E\left[R_{n,ij}^*|Y_j\right] = \E\left[g'(\Li)\hatH_{ij}|Y_j\right] - \E\left[\E[g'(L)\hatH|Y=Y_j]|Y_j\right] = 0.
	\end{align*}
	In addition,
	\begin{align*}
		\E[(R_{n}^*)^2] = \Var\left[g'(L)\hatH|Y\right]=\E\left[\left(g'(L)\hatH\right)^2\right] - \left(\E\left[g'(L)\hatH|Y\right]\right)^2\leq \E\left[\left(g'(L)\hatH\right)^2\right]\stackrel{(*)}{<}\infty,
	\end{align*}
	where $(*)$ holds by the respective assumptions for the smooth and hockey-stick risk functions.
	Then, using Lemma~\ref{lem:avg_moment2p} with $p=2$ (use $R_{n}^*$ in that lemma), we have
	$\E\left[\left(\hatR_{n}-R\right)^2\right]=\frac{\E\left[(R_{n}^*)^2\right]}{n} = \cO(n^{-1}).$
	Therefore,~\eqref{eq:aux8} converges to zero at the rate $\cO(n^{-1}) + \cO(n^{-1/2})=\cO(n^{-1/2})$ as $n\to\infty$.
	This proves the $\cL^1$ convergence in~\eqref{eq:aux5}.

	Lastly, because $g'(\Lmi)\Lmi$, $i=1,\ldots,n$, are identically distributed (so are $g'(\Li)\Li$, $i=1,\ldots,n$), so
	\begin{align}
		&\E\left[\left|\avgni \left[g'(\Lmi)\Lmi - g'(\Li)\Li\right]\right|\right] \leq \E\left[\left|g'(L_m)L_m - g'(L)L\right|\right]\nonumber\\
		=&\E\left[\left|(g'(L_m)-g'(L))L_m - g'(L)(L-L_m)\right|\right]\nonumber\\
		\leq& \E[|(g'(L_m)-g'(L))L_m |] + \E[|g'(L)(L-L_m)|] \stackrel{(*)}{=} \cO(m^{-1/2})\label{eq:aux9}
	\end{align}
%
	where $(*)$ holds because of the following:
	\begin{itemize}
		\item For smooth functions $g$ with bounded second derivative, i.e., $|g''(x)|\leq C_g<\infty$,~\eqref{eq:aux9} equals
		\begin{align*}
		&\E[|g''(\Lambda_m)(L_m-L)(L_m-L+L) |] + \E[|g'(L)(L-L_m)|]\\
		\leq& C_g\left(\E\left[(L_m-L)^2\right] + \E[|(L_m-L)L|]\right) + \E[|g'(L)(L-L_m)|]\\
		\leq& C_g\left(\E\left[(L_m-L)^2\right] + (\E\left[(L_m-L)^2\right])^{1/2}(\E\left[L^2\right])^{1/2}\right) + (\E\left[(g'(L))^2\right])^{1/2}(\E\left[(L_m-L)^2\right])^{1/2}\\
		\stackrel{(*)}{=}& C_g\left(\cO(m^{-1}) + \cO(m^{-1/2})\right) + \cO(m^{-1/2}) = \cO(m^{-1/2}),
		\end{align*}
		where $(*)$ holds because $\E\left[(L_m-L)^2\right]=\cO(m^{-1})$ by Theorem~\ref{thm:Lm_moment2p} with $p=1$ and $\E\left[L^2\right]<\infty$ and $\E\left[(g'(L))^2\right]$ by assumptions.
		
		\item For the hockey-stick function, $g'(x)=\1\{x\geq 0\}$,~\eqref{eq:aux9} equals
		\begin{align*}
		&\E[|L_m\cdot(\1\{L_m\geq 0\}-\1\{L\geq 0\})|] + \E[|1\{L \geq 0\}(L_m-L)|]\\
		\leq & \E[|L_m\cdot(\1\{L_m\geq 0\}-\1\{L\geq 0\})|]+ (\E\left[(L_m-L)^2\right])^{1/2} = \cO(m^{-1}) + \cO(m^{-1/2}),
		\end{align*}
	where the last equality holds by~\eqref{eq:usefuleq3} in Lemma~\ref{lem:usefuleqs} and Theorem~\ref{thm:Lm_moment2p} with $p=2$.
	\end{itemize}
	In short, we have shown that~\eqref{eq:aux9}$\to 0$ as $\min\{m,n\}\to\infty$, which proves the $\cL^1$-convergence in~\eqref{eq:aux6}.
	The proof is complete.
\end{proof}

\begin{lemma}\label{lem:sig2}
	If the conditions for Theorem~\ref{thm:CLT} hold, then $\widehat{\sigma}_{2,mn}^2\conprob\sigma_{2}^2$ as $\min\{m,n\}\to 0$.
\end{lemma}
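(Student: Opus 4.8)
The plan is to mirror the structure of the proof of Lemma~\ref{lem:sig1}, decomposing $\widehat{\sigma}_{2,mn}^2 - \sigma_2^2$ into a ``second moment'' piece and a ``squared mean'' piece, and handling each via the $\cL^1$-convergences already established in Lemma~\ref{lem:aux2} together with the weak law of large numbers. Using the notation~\eqref{eq:notations} and the variance formula~\eqref{eq:asymvar2}, I would write
\begin{align*}
\widehat{\sigma}_{2,mn}^2 - \sigma_2^2 = \left[\avgmj \hatR_{m,j}^2 - \E[R^2]\right] - \left[\left(\avgni g'(\Lmi)\Lmi\right)^2 - \left(\E[g'(L)\hatH]\right)^2\right],
\end{align*}
and show each bracket converges to $0$ in probability; the remaining conclusion $\widehat{\sigma}_{mn}^2/\sigma_{mn}^2 \conprob 1$ then follows by the continuous mapping theorem, as noted before Lemma~\ref{lem:aux1}.

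For the first bracket, I would insert the telescoping terms $\hatR_j^2$ and $R_j^2$ to write
\begin{align*}
\avgmj \hatR_{m,j}^2 - \E[R^2] = \avgmj\left[\hatR_{m,j}^2 - \hatR_j^2\right] + \avgmj\left[\hatR_j^2 - R_j^2\right] + \left(\avgmj R_j^2 - \E[R^2]\right).
\end{align*}
The first two averages vanish in $\cL^1$, hence in probability, by~\eqref{eq:aux4} and~\eqref{eq:aux5}, respectively. The third term is a sample average of the i.i.d.\ variables $R_j^2$ whose common mean $\E[R^2] = \E[(\E[g'(L)\hatH|Y])^2]$ is finite (this finiteness is exactly what was verified inside the proof of Lemma~\ref{lem:aux2}), so it converges to $\E[R^2]$ in probability by the weak law of large numbers.

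For the second bracket, by the continuous mapping theorem it suffices to show $\avgni g'(\Lmi)\Lmi \conprob \E[g'(L)\hatH]$. A key identity is $\E[g'(L)\hatH] = \E[g'(L)L]$, which follows by the tower property and $\E[\hatH|X] = L$ from~\eqref{eq:LtauLR}. I would then split
\begin{align*}
\avgni g'(\Lmi)\Lmi - \E[g'(L)L] = \avgni\left[g'(\Lmi)\Lmi - g'(\Li)\Li\right] + \left(\avgni g'(\Li)\Li - \E[g'(L)L]\right),
\end{align*}
where the first average converges to $0$ in probability by~\eqref{eq:aux6} and the second by the weak law of large numbers for the i.i.d.\ variables $g'(\Li)\Li$. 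Combining via Slutsky's theorem gives $\avgni g'(\Lmi)\Lmi \conprob \E[g'(L)L] = \E[g'(L)\hatH]$, and squaring (continuous mapping again) yields the convergence of the second bracket.

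The argument is almost entirely assembly, so I do not anticipate a serious obstacle. The only care points are recording the identity $\E[g'(L)\hatH] = \E[g'(L)L]$ so that the target of the weak law of large numbers matches the term appearing in~\eqref{eq:sig2hat}, and confirming the finiteness of $\E[R^2]$ and $\E[|g'(L)L|]$ under the smooth and hockey-stick moment conditions---both of which are already available from the computations inside the proof of Lemma~\ref{lem:aux2}.
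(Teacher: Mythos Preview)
Your proposal is correct and follows essentially the same route as the paper: the same two-bracket decomposition via~\eqref{eq:asymvar2} and~\eqref{eq:sig2hat}, the same telescoping through $\hatR_j^2$ and $R_j^2$ handled by~\eqref{eq:aux4}--\eqref{eq:aux5} plus the WLLN, and the same use of~\eqref{eq:aux6} plus the WLLN and continuous mapping for the squared-mean bracket. Your explicit recording of the identity $\E[g'(L)\hatH]=\E[g'(L)L]$ is a nice clarification (the paper uses $\E[g'(L)L]$ directly without stating it), and your minus sign between the two brackets is in fact the correct one.
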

\begin{proof}[Proof of Lemma~\ref{lem:sig2}]
	
	By Equations~\eqref{eq:asymvar2}, \eqref{eq:sig2hat} and the notations in~\eqref{eq:notations}, we have
	\begin{equation}\label{eq:diff_sig2hat}
		\widehat{\sigma}_{2,mn}^2 - \sigma_{2}^2 = \left[\avgmj \hatR_{m,j}^2 - \E\left[R^2\right]\right] + \left[\left(\avgni g'(\Lmi)\Lmi\right)^2 - \left(\E\left[g'\left(L\right)L\right]\right)^2\right].
	\end{equation}
	We then consider each of the two differences above and show that both converge to zero in probability as $\min\{m,n\}\to \infty$.
	For the first term in~\eqref{eq:diff_sig2hat}, note that
	\begin{align}
	\avgmj \hatR_{m,j}^2 - \E\left[R^2\right] =& \avgmj \left[\hatR_{m,j}^2-\hatR_{j}^2\right] +\avgmj \left[\hatR_{j}^2-R_{j}^2\right] +\avgmj R_{j}^2- \E\left[R^2\right].\label{eq:diff6}
	\end{align}
	By~\eqref{eq:aux4} and~\eqref{eq:aux5} in Lemma~\ref{lem:aux2}, the first two terms on the RHS of~\eqref{eq:diff6} converge, in $\cL^1$ and hence in probability, to zero as $\min{m,n}\to\infty$.
	Also, because $R_{j}^2$, $j=1,\ldots,m$ are i.i.d. samples of $R^2$, so the last term converges to zero as $m\to\infty$ in probability by the weak law of large numbers.
	
	For the second term in~\eqref{eq:diff_sig2hat}, note that
	\begin{align}
	&\avgni g'(\Lmi)\Lmi - \E\left[g'\left(L\right)L\right]\nonumber\\
	=&\avgni \left[g'(\Lmi)\Lmi - g'(\Li)\Li\right] + \left[\avgni g'(\Li)\Li- \E\left[g'\left(L\right)L\right]\right].\label{eq:diff8}
	\end{align}
	The first term on the RHS of~\eqref{eq:diff8} converges in probability to zero as $m\to\infty$ by the $\cL^1$ convergence~\eqref{eq:aux6} in Lemma~\ref{lem:aux2}.
	The second term on the RHS of~\eqref{eq:diff8} converges in probability to zero as $n\to\infty$ by weak law of large numbers because $g'(\Li)\Li$, $i=1,\ldots,n$ are i.i.d. samples with the common expectation $\E\left[g'\left(L\right)L\right]$.
	Therefore $\avgni g'(\Lmi)\Lmi \conprob \E\left[g'\left(L\right)L\right]$ and so $\left(\avgni g'(\Lmi)\Lmi\right)^2 \conprob \left(\E\left[g'\left(L\right)L\right]\right)^2$ by the continuous mapping theorem.
	
	In summary, both terms in~\eqref{eq:diff_sig2hat} converge to 0 in probability, as desired.
	The proof is complete.
\end{proof}

\section{Proofs for results in Section~\ref{subsubsec:CLTindicator}}\label{app:CLTindicator}

Consider the decomposition~\eqref{eq:decomposerho_indicator}, in this appendix we will show that $\widetilde{\sigma}_{mn}^{-1}(\cU_{\epsilon_m,mn}-\rho)\condist \cN(0,1)$ and all $r_{\epsilon_m,mn}^a$, $r_{\epsilon_m,mn}^b$ and $r_{\epsilon_m,mn}^c$ converges to zero quickly in Lemmas~\ref{lem indLCT U},~\ref{lem indLCT ra},~\ref{lem indLCT rb} and~\ref{lem indLCT rc}, respectively.
We omit the lengthy discussions on the technical assumptions needed to ensure that the remainder term in $r_{\epsilon_m,mn}^d$ is negligible and focus on analyzing the other terms.
Then, applying the Slutsky's theorem to the decomposition~\eqref{eq:decomposerho_indicator}, we have that $\widetilde{\sigma}_{mn}^{-1}(\rho_{mn}-\rho)\condist \cN(0,1)$ as so the proof for Theorem~\ref{thm:CLTIndicator} is complete.

In addition, in Lemmas~\ref{lem indLCT sig1} and~\ref{lem indLCT sig2}, we show that $\widehat{\widetilde{\sigma}}_{1,mn}^2$ and $\widehat{\widetilde{\sigma}}_{2,mn}^2$ converge to $\sigma_1^2$ and $\widetilde{\sigma}_2^2$, respectively.
Then, applying the continuous mapping theorem to $\widehat{\widetilde{\sigma}}_{mn}^2 = \frac{\widehat{\widetilde{\sigma}}_{1,mn}^2}{n} + \frac{\widehat{\widetilde{\sigma}}_{2,mn}^2}{m}$, the proof for Theorem~\ref{thm:varianceestimateIndicator} is complete.

Before proceeding, we recall the function $\gepsm(x) = \int_{-\infty}^{x/\epsilon_m} \phi(u) du$ where $\phi(u) = \frac{1}{4\pi} (1-\cos(u))\cdot \1\{|u| \leq 2\pi\}$, as defined in~\eqref{eq:smoothapprox}.
Then, by construction, $\gepsm'(x) = \frac{1}{4\pi \epsilon_m} \left(1-\cos\left(x/\epsilon_m\right)\right)\cdot \1\left\lbrace\left|x\right| \leq 2\pi\epsilon_m\right\rbrace$, $\gepsm''(x)= \frac{1}{4\pi \epsilon_m^2} \sin \left( x/\epsilon_m\right)\cdot \1\left\lbrace\left|x\right| \leq 2\pi\epsilon_m\right\rbrace$,
\begin{align}
	\int_{-\infty}^{\infty} \phi(u) du = \int_{-2\pi}^{2\pi} \frac{1}{4\pi} (1-\cos(u)) =1,\,\int_{-\infty}^{\infty} u \cdot \phi(u) du =0,\mbox{ and }\label{eq:aux12}\\
	\int_{-\infty}^{\infty} |u|^{r_1} \cdot [\phi(u)]^{r_2} du < \infty, \,\, r_1=0,1,2,3, r_2=1,2.\label{eq:aux13}
\end{align}

\begin{lemma}\label{lem indLCT U}
	Suppose the conditions for Theorem~\ref{thm:CLTIndicator} hold.
	Then,
	$$\frac{\cU_{\epsilon_m,mn}-\rho}{\widetilde{\sigma}_{mn}}\condist \cN(0,1), \mbox{ as } \min\{m,n\}\to\infty,$$
	where $\widetilde{\sigma}_{mn}^2 = \frac{\widetilde{\sigma}_1^2}{n} + \frac{\widetilde{\sigma}_2^2}{m}$ and $\widetilde{\sigma}_{1}^2$ and $\widetilde{\sigma}_2^2$ are defined as~\eqref{eq:IndVar1} and~\eqref{eq:IndVar2}, respectively.
\end{lemma}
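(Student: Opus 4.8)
The plan is to treat $\cU_{\epsilon_m,mn}$ as a triangular-array analogue of a two-sample U-statistic and carry out its Hoeffding (H\'ajek) projection by hand. Because the kernel $U_{\epsilon_m}(X,Y)=g(L(X))+\gepsm'(L(X))(\hatH(X,Y)-L(X))$ depends on $m$ through $\epsilon_m$, Lemma~\ref{lem:Ustatistics} cannot be invoked directly; instead I would recompute the two one-dimensional projections. Using $\E[\hatH(X,Y)|X]=L(X)$ from~\eqref{eq:LtauLR}, the $X$-projection collapses to $\E[U_{\epsilon_m}(X,Y)|X]=g(L(X))=\1\{L(X)\ge 0\}$, which is free of $m$, while the $Y$-projection is $W_m(Y):=\E[\gepsm'(L(X))\hatH(X,Y)|Y]$ minus the constant $c_m:=\E[\gepsm'(L(X))L(X)]$. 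This yields
\begin{equation*}
\cU_{\epsilon_m,mn}-\rho = \avgni\left(\1\{\Li\ge 0\}-\rho\right) + \avgmj\left(W_m(Y_j)-c_m\right) + D_{mn},
\end{equation*}
where $D_{mn}$ is the degenerate double-sum remainder. The two leading averages are independent and, row-wise, are sums of i.i.d.\ mean-zero terms with variances $\widetilde{\sigma}_1^2/n$ and $\Var[W_m(Y)]/m$.

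The first technical step is to show $\Var[W_m(Y)]\to\widetilde{\sigma}_2^2$. Writing $W_m(y)=\int\!\int\gepsm'(\ell)\hatH(x,y)\psi(x,\ell)\,\d\ell\,\d x$ and recalling that $\gepsm'(\ell)=\epsilon_m^{-1}\phi(\ell/\epsilon_m)$ is an approximate identity with $\int\gepsm'=1$ by~\eqref{eq:aux12}, I would subtract $V(y):=\int\hatH(x,y)\psi(x,0)\,\d x$, Taylor-expand $\psi(x,\cdot)$ to first order about $\ell=0$, and bound the derivative by $\psi_1(x)$ via Assumption~\ref{assm:indicatordensity}~\ref{assum:existence}. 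Since $\gepsm'$ is supported on $|\ell|\le 2\pi\epsilon_m$ and $\int\gepsm'(\ell)|\ell|\,\d\ell=\cO(\epsilon_m)$, this gives $|W_m(y)-V(y)|\le \cO(\epsilon_m)\int|\hatH(x,y)|\psi_1(x)\,\d x$, so $\E[(W_m(Y)-V(Y))^2]=\cO(\epsilon_m^2)$ by the last moment bound in Assumption~\ref{assm:indicatordensity}~\ref{assum:boundedmoments}; a parallel symmetry/Taylor argument gives $c_m=\cO(\epsilon_m^2)$. Because $\E[V(Y)]=\int L(x)\psi(x,0)\,\d x=0$ (the slice $\ell=0$ forces $L=0$), this $\cL^2$ convergence delivers both $\Var[W_m(Y)]\to\E[V(Y)^2]=\widetilde{\sigma}_2^2$ and uniform integrability of the family $\{(W_m(Y)-c_m)^2\}$.

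With these in hand I would apply the Lindeberg--Feller CLT to the single triangular array formed by the $n$ terms $n^{-1}(\1\{\Li\ge 0\}-\rho)$ together with the $m$ terms $m^{-1}(W_m(Y_j)-c_m)$, whose total variance is $s_{mn}^2:=\widetilde{\sigma}_1^2/n+\Var[W_m(Y)]/m$. Writing $S_{mn}$ for the sum of the two leading averages, the Lindeberg condition holds because the thresholds $ns_{mn}$ and $ms_{mn}$ both diverge while the $X$-summands have a fixed finite variance and the $Y$-summands are square-uniformly-integrable; hence $S_{mn}/s_{mn}\condist\cN(0,1)$, and since $\Var[W_m(Y)]\to\widetilde{\sigma}_2^2$ forces $s_{mn}/\widetilde{\sigma}_{mn}\to 1$, Slutsky's theorem upgrades this to $S_{mn}/\widetilde{\sigma}_{mn}\condist\cN(0,1)$.

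The main obstacle --- and the reason a bespoke sequence $\epsilon_m$ must be built --- is the degenerate remainder $D_{mn}$, since $\gepsm'$ blows up like $\epsilon_m^{-1}$. By degeneracy the cross terms vanish and $\E[D_{mn}^2]=(mn)^{-1}\E[(U^{(m)}_{12}(X,Y))^2]$ for the degenerate kernel $U^{(m)}_{12}(x,y)=\gepsm'(L(x))(\hatH(x,y)-L(x))-W_m(y)+c_m$; a computation using $\int\gepsm'(\ell)^2\,\d\ell=\cO(\epsilon_m^{-1})$ shows $\E[(U^{(m)}_{12})^2]=\cO(\epsilon_m^{-1})$. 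Thus $\E[D_{mn}^2]/\widetilde{\sigma}_{mn}^2=\cO\!\big((\epsilon_m\max\{m,n\})^{-1}\big)$, which vanishes precisely when $\epsilon_m\to 0$ and $m\epsilon_m\to\infty$ hold simultaneously (e.g.\ $\epsilon_m=m^{-1/2}$). This is the delicate balance: $\epsilon_m$ must shrink fast enough to kill the smoothing bias in $W_m$ and $c_m$, yet slowly enough that the variance inflation of the degenerate part stays $o(\widetilde{\sigma}_{mn}^2)$. Given $D_{mn}/\widetilde{\sigma}_{mn}\conprob 0$, a final application of Slutsky's theorem to the displayed decomposition completes the proof.
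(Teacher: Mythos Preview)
Your approach is essentially the same as the paper's: both carry out the H\'ajek/Hoeffding projection of $\cU_{\epsilon_m,mn}$, identify the $X$-projection as $\1\{L\ge 0\}-\rho$ and the $Y$-projection as $\widetilde{Y}_{\epsilon_m}=W_m(Y)-c_m$, Taylor-expand $\psi(x,\cdot)$ about $\ell=0$ to obtain $\E[\widetilde{Y}_{\epsilon_m}^2]=\widetilde{\sigma}_2^2+\cO(\epsilon_m)$, and control the degenerate remainder via $\E[\cV_{\epsilon_m}^2]=\cO(\epsilon_m^{-1})$, which forces the requirement $m\epsilon_m\to\infty$. The only cosmetic differences are that the paper treats the two projections separately (classic CLT for the $X$-part, a characteristic-function argument for the triangular $Y$-part) and then combines them by independence, whereas you run a single Lindeberg--Feller argument on the pooled array; and the paper bounds the remainder by computing $\E[(\cU_{\epsilon_m,mn}-\rho)^2]$, $\E[\widetilde{\cU}_{\epsilon_m,mn}^2]$, and the cross term explicitly rather than invoking the degenerate-kernel variance formula directly.
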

\begin{proof}[Proof of Lemma~\ref{lem indLCT U}]
Let $\cV_{\epsilon_m,ij} = g'_{\epsilon_m}(\Li)(\hatHij-\Li)$, then $\cU_{\epsilon_m,mn}=\frac{1}{mn}\sum_{i=1}^n\sum_{j=1}^m [g(\Li)+\cV_{\epsilon_m,ij}]$.
Note that $\cV_{\epsilon_m,ij}$ are identically distributed for all $i=1,\ldots,n$ and $j=1,\ldots,m$ so we can write a generic $\cV_{\epsilon_m, ij}$ simply as $\cV_{\epsilon_m}$ for notational convenience.
For any $\epsilon_m>0$, we have that
\begin{align}\label{Lambda}
\E[\cV_{\epsilon_m}|X] = g'_{\epsilon_m}(L(X))\left(\E[\hatH(X,Y)|X]-L(X)\right)= g'_{\epsilon_m}(L(X))\left(L(X)-L(X)\right) = 0,
\end{align}
which also means that $\E[\cV_{\epsilon_m}] = \E[\E[\cV_{\epsilon_m}|X]]=0$.
Moreover, we see that $\E[\cU_{\epsilon_m,mn}]=\E[g(L(X))] +\E[\E[\cV_{\epsilon_m}|X]] =\E[g(L(X))] + 0= \rho$, i.e., $\cU_{\epsilon_m,mn}$ is an unbiased estimator of $\rho$.

Consider the following random variables (Hoeffding decomposition):
\begin{equation*}\label{eq:Hoeffding}
	\widetilde{\cU}_{\epsilon_m,mn} =\widetilde{\cU}_{\epsilon_m,n} + \widetilde{\cU}_{\epsilon_m,m}:=\sum_{i=1}^n \E[\cU_{\epsilon_m,mn}-\rho|X_i] + \sum_{j=1}^m \E[\cU_{\epsilon_m,mn}-\rho|Y_j] .
\end{equation*}
We then consider the following decomposition:
\begin{equation}\label{eq:decomposition1}
	\frac{\cU_{\epsilon_m,mn}-\rho}{\widetilde{\sigma}_{mn}} = \frac{\widetilde{\cU}_{\epsilon_m,mn}}{\widetilde{\sigma}_{mn}} + \frac{\cU_{\epsilon_m,mn}-\rho-\widetilde{\cU}_{\epsilon_m,mn}}{\widetilde{\sigma}_{mn}}.
\end{equation}
To establish $\widetilde{\sigma}_{mn}^{-1}(\cU_{\epsilon_m,mn}-\rho)\condist \cN(0,1)$ it suffices to show that $\widetilde{\sigma}_{mn}^{-1}\widetilde{\cU}_{\epsilon_m,mn}\condist \cN(0,1)$ and that $\widetilde{\sigma}_{mn}^{-1}(\cU_{\epsilon_m,mn}-\rho-\widetilde{\cU}_{\epsilon_m,mn})\stackrel{d}{\to} 0$.

To show $\widetilde{\sigma}_{mn}^{-1}\widetilde{\cU}_{\epsilon_m,mn}\condist \cN(0,1)$, we consider the convergences of $\widetilde{\cU}_{\epsilon_m,n}$ and $\widetilde{\cU}_{\epsilon_m,m}$ separately.
Firstly, for any $i=1,\ldots,n$, because $X_k$ is independent of $X_i$ for any $k\neq i$, we have
\begin{align}
	&\E[\cU_{\epsilon_m,mn}-\rho|X_i] = \E\left[\left.\frac{1}{mn}\sum_{i=1}^n\sum_{j=1}^m g(\Li)\right|X_i\right] + \E\left[\left.\frac{1}{mn}\sum_{i=1}^n\sum_{j=1}^m \cV_{\epsilon_m,ij}\right|X_i\right] - \rho\nonumber\\
	\stackrel{\eqref{Lambda}}{=}& \frac{1}{n}\left(g(\Li)+\sum_{\substack{k=1 \\ k\neq i}}^n\E\left[\left. g(L_k)\right|X_i\right]\right) + 0 - \rho= \frac{1}{n}g(\Li) + \frac{n-1}{n}\rho - \rho \nonumber\\
	=& \frac{1}{n} g(\Li) - \frac{1}{n}\rho.\label{eq:aux14}
\end{align}
Therefore $\widetilde{\cU}_{\epsilon_m,n}=\sum_{i=1}^n \E[\cU_{\epsilon_m,mn}-\rho|X_i]=\avgni g(\Li)-\rho$.
Because $g(\Li)$, $i=1,\ldots,n$ are i.i.d. random variables with common expectation $\E[g(L)]=\rho$, so by the classic CLT we have
\begin{align}\label{CLT Un}
	\sqrt{n}\widetilde{\cU}_{\epsilon_m,n}\stackrel{d}{\to} \cN(0,\widetilde{\sigma}_{1}^2) \mbox{ as } n\to\infty,
\end{align}
where $\widetilde{\sigma}_{1}^2 = \Var[g(L)] = \E[\1\{L\geq 0 \}] - (\E[\1\{L\geq 0\}])^2$.
Secondly, for any $j=1,\ldots,m$, because all $X_i$, $i=1,\ldots,n$ are independent of $Y_j$ and $Y_k$ is independent of $Y_j$ for any $k\neq j$, so
\begin{align}
	&\E[\cU_{\epsilon_m,mn}-\rho|Y_j] = \E\left[\left.\frac{1}{mn}\sum_{i=1}^n\sum_{j=1}^m g(\Li)\right|Y_j\right] + \E\left[\left.\frac{1}{mn}\sum_{i=1}^n\sum_{j=1}^m \cV_{\epsilon_m,ij}\right|Y_j\right] - \rho\nonumber\\
	=& \E\left[g(L)\right] + \frac{1}{m}\left(\E\left[\cV_{\epsilon_m, 1j}|Y_j\right] + \sum_{\substack{k=1 \\ k\neq j}}^m \E\left[\left.\cV_{\epsilon_m, 1k}\right|Y_j\right]\right) - \rho\stackrel{\eqref{Lambda}}{=} \rho + \frac{1}{m}\left(\E\left[\cV_{\epsilon_m, 1j}|Y_j\right] + 0\right) - \rho \nonumber\\
	=& \frac{1}{m}\E\left[\cV_{\epsilon_m, 1j}|Y_j\right] =: \frac{1}{m}\widetilde{Y}_{\epsilon_m,j}.\label{eq:aux15}
\end{align}
Therefore $\widetilde{\cU}_{\epsilon_m,m}=\sum_{j=1}^m \E[\cU_{\epsilon_m,mn}-\rho|Y_j]=\avgmj \E\left[\cV_{\epsilon_m, 1j}|Y_j\right] =\avgmj \widetilde{Y}_{\epsilon_m,j}$.
Assumption~\ref{assm:indicatordensity} implies that $\psi(x,\epsilon_m u)=\psi_0(x) +\epsilon_m u\cdot \frac{\partial}{\partial \ell}\psi(x,\bar{u})$
where $\psi_0(x)=\psi(x,0)$ and $\bar{u}$ is between 0 and $\epsilon_m u$.
Denote a generic $\widetilde{Y}_{\epsilon_m} = \widetilde{Y}_{\epsilon_m,j}$ for notational convenience.
Then it follows that
 \begin{align}
	\widetilde{Y}_{\epsilon_m}=&\int \int\phi(u)(\hatH(x,Y)-\epsilon_m u)\psi(x,\epsilon_m u) \d x\d u\nonumber\\
 =&\int \int\phi(u)(\hatH(x,Y)-\epsilon_m u)\left[\psi_0(x) +\epsilon_m u \frac{\partial}{\partial \ell}\psi(x,\bar{u}) \right] \d x\d u\nonumber\\
 =&\int\phi(u)\d u \int\hatH(x,Y)\psi_0(x) \d x -\epsilon_m\int u\cdot\phi(u) \d u \int\psi_0(x) \d x \label{eq:doubleintegral1}\\
 &+ \epsilon_m\int \int \phi(u)u\left(\hatH(x,Y)-\epsilon_m u\right) \frac{\partial}{\partial \ell}\psi(x,\bar{u}) \d x\d u.\label{eq:doubleintegral}
	\end{align}
By~\eqref{eq:aux12}, the first term in~\eqref{eq:doubleintegral1} equals $\int\hatH(x,Y)\psi_0(x) \d x$ and
the second term equals $0$.
Moreover, recall $|\frac{\partial}{\partial \ell} \psi(x,\ell)|\leq \psi_1(x)$ in Assumption~\ref{assm:indicatordensity}~\ref{assum:existence}, so
\begin{align*}
	\eqref{eq:doubleintegral}\leq&\epsilon_m\int \int \phi(u)|u|\left(|\hatH(x,Y)|+\epsilon_m |u|\right) \psi_1(x) \d x\d u\\
	\leq &\epsilon_m\int |u| \cdot \phi(u) \d u \int |\hatH(x,Y)| \psi_1(x) \d x + \epsilon_m^2\int u^2\cdot\phi(u) \d u \int \psi_1(x) \d x \\
	\stackrel{(*)}{=} &\cO(\epsilon_m) + \cO(\epsilon_m^2) = \cO(\epsilon_m),
\end{align*}
where $(*)$ holds because of~\eqref{eq:aux13} and Assumption~\ref{assm:indicatordensity}~\ref{assum:boundedmoments}.
Therefore,
%
\begin{align}
\E\left[\widetilde{Y}_{\epsilon_m}^2\right] =& \E\left[\left(\int\hatH(x,Y)\psi_0(x)\d x+\cO(\epsilon_m)\right)^2\right]\nonumber\\
=& \E\left[\left(\int\hatH(x,Y)\psi_0(x)\d x\right)^2\right]+\cO(\epsilon_m)=\widetilde{\sigma}_2^2+\cO(\epsilon_m).\label{ind CLT Y}
\end{align}

Since $\widetilde{Y}_{\epsilon_m,j}$, $j=1,\ldots,m$ are i.i.d. samples, the characteristic function for $\sqrt{m}\widetilde{\cU}_{\epsilon_m,m}$ is given by
\begin{align*}
\varphi_{\epsilon_m,m}(t) = \E\left[\exp\left(it \sum_{j=1}^m\frac{\widetilde{Y}_{\epsilon_m,j}}{\sqrt{m}}\right)\right] = \left(\E\left[\exp\left(it \frac{\widetilde{Y}_{\epsilon_m}}{\sqrt{m}}\right)\right]\right)^m.
\end{align*}
Using the Taylor's theorem, $\E\left[\widetilde{Y}_{\epsilon_m}\right]=0$, and $\E\left[\widetilde{Y}_{\epsilon_m}^2\right]=\widetilde{\sigma}_2^2+\cO(\epsilon_m)$ we have
\begin{align*}
\E\left[\exp\left(it \frac{\widetilde{Y}_{\epsilon_m}}{\sqrt{m}}\right)\right]=&1-\frac{t^2}{2m}\E\left[\widetilde{Y}_{\epsilon_m}^2\right]+o\left(\frac{t^2}{m}\right)\\
=&1-\frac{t^2}{2m}\widetilde{\sigma}_2^2+\frac{t^2}{2m}\cO(\epsilon_m)+o\left(\frac{t^2}{m}\right),\mbox{ as } \frac{t}{\sqrt{m}}\to 0.
\end{align*}
So the characteristic function $\varphi_{\epsilon_m,m}(t) \to \exp\left(-\frac{t^2}{2}\widetilde{\sigma}_2^2\right)$ as $m\to\infty$ and $\epsilon_m\to 0$.
By the L\'{e}vy's continuity theorem, this means that
\begin{align}\label{CLT Um}
\sqrt{m}\widetilde{\cU}_{\epsilon_m,m} \stackrel{d}{\to} \cN(0,\widetilde{\sigma}_2^2).
\end{align}

Because $\widetilde{\cU}_{\epsilon_m,mn} = \widetilde{\cU}_{\epsilon_m,n} + \widetilde{\cU}_{\epsilon_m,m}$, where $\widetilde{\cU}_{\epsilon_m,n}$ and $\widetilde{\cU}_{\epsilon_m,m}$ are independent.
Then it follows from~\eqref{CLT Un} and~\eqref{CLT Um} that,
\begin{equation}\label{eq:result1}
\frac{\widetilde{\cU}_{\epsilon_m,mn}}{\widetilde{\sigma}_{mn}} \stackrel{d}{\to} \cN(0,1),\quad \mbox{ as }\min\{m,n\}\to\infty,
\end{equation}
where $\widetilde{\sigma}_{mn}^2 = \frac{\widetilde{\sigma}_{1}^2}{n} + \frac{\widetilde{\sigma}_2^2}{m}$, $\widetilde{\sigma}_{1}^2 = \Var[g(L)]$, and $\widetilde{\sigma}_2^2 = \E\left[\left(\int\hatH(x,Y)\psi_0(x) \d x\right)^2\right]$.

Next, we show $
\E\left[\widetilde{\sigma}_{mn}^{-2}(\cU_{\epsilon_m,mn}-\rho-\widetilde{\cU}_{\epsilon_m,mn})^2\right]\to 0,\ \text{as}\ \min\{m,n\}\to\infty$,
which implies $\widetilde{\sigma}_{mn}^{-1}(\cU_{\epsilon_m,mn}-\rho-\widetilde{\cU}_{\epsilon_m,mn})\stackrel{d}{\to} 0,\ \text{as}\ \min\{m,n\}\to\infty$.

Note that
\begin{align}\label{CLT indi dif}
\E\left[(\cU_{\epsilon_m,mn}-\rho-\widetilde{\cU}_{\epsilon_m,mn})^2\right]=\E\left[(\cU_{\epsilon_m,mn}-\rho)^2\right]+\E\left[\widetilde{\cU}_{\epsilon_m,mn}^2\right] -2\E\left[\left(\cU_{\epsilon_m,mn}-\rho\right)\widetilde{\cU}_{\epsilon_m,mn}\right].
\end{align}
We investigate the three terms on the RHS of~\eqref{CLT indi dif} as follows:
\begin{enumerate}[label=(\Roman*)]
	\item For $ \E\left[(\cU_{\epsilon_m,mn}-\rho)^2\right]$, it follows from the definition of $\cU_{\epsilon_m,mn}$ that
	\begin{align*}
		&\E\left[(\cU_{\epsilon_m,mn}-\rho)^2\right]=\E\left[\left(\frac{1}{mn}\sum_{i=1}^n \sum_{j=1}^m \left[g(\Li) + \cV_{\epsilon_m,ij}\right]-\rho\right)^2\right]\nonumber\\
		=&\E\left[\left(\frac{1}{mn}\sum_{i=1}^n \sum_{j=1}^m \cV_{\epsilon_m,ij}\right)^2\right]+\E\left[\left(\frac{1}{n}\sum_{i=1}^n g(\Li)-\rho\right)^2\right]+2\E\left[\frac{1}{mn}\sum_{i=1}^n \sum_{j=1}^m \cV_{\epsilon_m,ij}\left(\frac{1}{n}\sum_{i=1}^n g(\Li)-\rho\right)\right].
	\end{align*}
	We define $\cG=\sigma\left(X_1,...,X_n\right)$ and analyze the three terms on the RHS one by one.
	
	Firstly, it follows that
	\begin{align*}
		&\E\left[\left(\frac{1}{mn}\sum_{i=1}^n \sum_{j=1}^m \cV_{\epsilon_m,ij}\right)^2\right]=\E\left[\E\left[\left.\left(\frac{1}{mn}\sum_{i=1}^n \sum_{j=1}^m \cV_{\epsilon_m,ij}\right)^2\right|\cG \right]\right]\\
		\stackrel{(*)}{=}&\E\left[\frac{1}{m}\E\left[\left.\left(\frac{1}{n}\sum_{i=1}^n \cV_{\epsilon_m,i1}\right)^2\right|\cG \right]\right]=\frac{1}{m}\E\left[\left(\frac{1}{n}\sum_{i=1}^n \cV_{\epsilon_m,i1}\right)^2\right]\\
		=&\frac{1}{mn^2}\left( \sum_{i=1}^n\E\left[\cV_{\epsilon_m,i1}^2\right]+\sum_{i=1}^n\sum_{\substack{k=1 \\ k\neq i}} \E\left[\cV_{\epsilon_m,i1}\cdot\cV_{\epsilon_m,k1}\right]\right)=\frac{1}{mn} \E\left[\cV_{\epsilon_m,11}^2\right]+\frac{n-1}{mn} \E\left[\cV_{\epsilon_m,11}\cdot\cV_{\epsilon_m,21}\right]\\
		\stackrel{(**)}{=}&\frac{1}{mn} \E\left[\cV_{\epsilon_m,11}^2\right]+\frac{n-1}{mn} \E\left[\E\left[\left.\cV_{\epsilon_m,11}\right|Y_1\right]\E\left[\left.\cV_{\epsilon_m,21}\right|Y_1\right]\right]=\frac{1}{mn} \E\left[\cV_{\epsilon_m}^2\right]+\frac{n-1}{mn} \E\left[\left(\E\left[\left.\cV_{\epsilon_m}\right|Y\right]\right)^2\right],
	\end{align*}
	where $(*)$ holds by Lemma~\ref{lem:rv_moment2p} because given $\cG$, $\frac{1}{n}\sum_{i=1}^n \cV_{\epsilon_m,ij}$ ($j=1,...,m$) are i.i.d samples with mean 0 and $(**)$ holds because $\cV_{\epsilon_m,11}$ and $\cV_{\epsilon_m,21}$ are conditionally independent given $Y_1$.
	
	Secondly, by Lemma~\ref{lem:rv_moment2p}, $\E\left[\left(\frac{1}{n}\sum_{i=1}^n g(\Li)-\rho\right)^2\right]=\frac{1}{n}\E\left[\left(g(L)-\rho\right)^2\right]$.

	Thirdly, note that
	\begin{align*}
		&\E\left[\frac{1}{mn}\sum_{i=1}^n \sum_{j=1}^m \cV_{\epsilon_m,ij}\left(\frac{1}{n}\sum_{i=1}^n g(\Li)-\rho\right)\right]=\E\left[\E\left[\left.\frac{1}{mn}\sum_{i=1}^n \sum_{j=1}^m \cV_{\epsilon_m,ij}\left(\frac{1}{n}\sum_{i=1}^n g(\Li)-\rho\right)\right|\cG \right]\right]\\
		=&\E\left[\E\left[\left.\frac{1}{mn}\sum_{i=1}^n \sum_{j=1}^m \cV_{\epsilon_m,ij}\right|\cG \right]\left(\frac{1}{n}\sum_{i=1}^n g(\Li)-\rho\right)\right]=\E\left[\E\left[\left. \cV_{\epsilon_m}\right|X \right]\left(\frac{1}{n}\sum_{i=1}^n g(\Li)-\rho\right)\right]\stackrel{\eqref{Lambda}}{=}0.
	\end{align*}
	
	Summarizing the above discussions, we get
	\begin{align}\label{CLT indi dif1}
		\E\left[\left(\widetilde{\cU}_{\epsilon_m,mn}-\rho\right)^2\right]=\frac{1}{mn} \E\left[\cV_{\epsilon_m}^2\right]+\frac{n-1}{mn} \E\left[\left(\E\left[\left.\cV_{\epsilon_m}\right|Y\right]\right)^2\right]+\frac{1}{n}\E\left[\left(g(L)-\rho\right)^2\right].
	\end{align}

	\item Recall that $\widetilde{\cU}_{\epsilon_m,mn} = \widetilde{\cU}_{\epsilon_m,n} + \widetilde{\cU}_{\epsilon_m,m}$ where $\widetilde{\cU}_{\epsilon_m,n}$ and $\widetilde{\cU}_{\epsilon_m,m}$ are independent. Moreover $\E\left[\widetilde{\cU}_{\epsilon_m,n}\right]=\E\left[\widetilde{\cU}_{\epsilon_m,m}\right]=0$, therefore
	\begin{align}\label{CLT indi dif2}
		\E \left[\widetilde{\cU}_{\epsilon_m,mn}^2\right] =& \E\left[\left(\widetilde{\cU}_{\epsilon_m,n} + \widetilde{\cU}_{\epsilon_m,m}\right)^2\right]=\E\left[\widetilde{\cU}_{\epsilon_m,n}^2\right] + \E\left[\widetilde{\cU}_{\epsilon_m,m}^2\right] + 2\E\left[\widetilde{\cU}_{\epsilon_m,n}\cdot\widetilde{\cU}_{\epsilon_m,m}\right]\nonumber\\
		=& \E\left[\widetilde{\cU}_{\epsilon_m,n}^2\right] + \E\left[\widetilde{\cU}_{\epsilon_m,m}^2\right] \stackrel{(*)}{=} \frac{1}{n}\E\left[\left(g(L)-\rho\right)^2\right]+\frac{1}{m}\E\left[\left(\E\left[\left.\cV_{\epsilon_m}\right|Y\right]\right)^2\right],
	\end{align}
	where $(*)$ holds by~\eqref{eq:aux14} and~\eqref{eq:aux15}.
	
	\item Consider the two terms on the RHS in the following
	\begin{align*}
		\E\left[\left(\cU_{\epsilon_m,mn}-\rho\right)\widetilde{\cU}_{\epsilon_m,mn}\right]=\E\left[\left(\cU_{\epsilon_m,mn}-\rho\right)\widetilde{\cU}_{\epsilon_m,n}\right]+\E\left[\left(\cU_{\epsilon_m,mn}-\rho\right)\widetilde{\cU}_{\epsilon_m,m}\right].
	\end{align*}

	Firstly,
	\begin{align*}
		&\E\left[\left(\cU_{\epsilon_m,mn}-\rho\right)\widetilde{\cU}_{\epsilon_m,n}\right]\stackrel{\eqref{eq:aux14}}{=}\frac{1}{n}\sum_{i=1}^n\E\left[\left(\cU_{\epsilon_m,mn}-\rho\right)\left(g(\Li)-\rho\right)\right]\\
		=&\frac{1}{n}\sum_{i=1}^n\E\left[\E\left[\left. \left(\cU_{\epsilon_m,mn}-\rho\right)\left(g(\Li)-\rho\right)\right|X_i\right]\right]=\frac{1}{n}\sum_{i=1}^n\E\left[\left(g(\Li)-\rho\right)\E\left[ \cU_{\epsilon_m,mn}-\rho \left|X_i\right.\right]\right]\\
		\stackrel{\eqref{eq:aux14}}{=}&\frac{1}{n}\sum_{i=1}^n\E\left\{\left(g(\Li)-\rho\right)\cdot \frac{1}{n}\left(g(\Li)-\rho\right)\right\}=\frac{1}{n}\E\left[\left(g(L)-\rho\right)^2\right].
	\end{align*}
	Secondly,
	\begin{align*}
		&\E\left[\left(\cU_{\epsilon_m,mn}-\rho\right)\widetilde{\cU}_{\epsilon_m,m}\right]\stackrel{\eqref{eq:aux15}}{=}\frac{1}{m}\sum_{j=1}^m \E\left[\left(\cU_{\epsilon_m,mn}-\rho\right)\E\left[\cV_{\epsilon_m,1j}\left|Y_j\right. \right]\right]\nonumber\\
		=&\frac{1}{m}\sum_{j=1}^m \E\left[\E\left[\left. \left(\cU_{\epsilon_m,mn}-\rho\right)\E\left[\cV_{\epsilon_m,1j}\left|Y_j\right. \right]\right|Y_j\right]\right]
		=\frac{1}{m}\sum_{j=1}^m \E\left[\E\left[\cV_{\epsilon_m,1j}\left|Y_j\right. \right] \E\left[ \cU_{\epsilon_m,mn}-\rho \left|Y_j\right. \right]\right]\nonumber\\
		\stackrel{\eqref{eq:aux15}}{=}&\frac{1}{m}\sum_{j=1}^m \E\left\{\E\left[\cV_{\epsilon_m,1j}\left|Y_j\right. \right]\cdot \frac{1}{m}\E\left[\cV_{\epsilon_m,1j}\left|Y_j\right. \right]\right\}
		=\frac{1}{m}\E\left[\left(\E\left[\left.\cV_{\epsilon_m}\right|Y\right]\right)^2\right].
	\end{align*}
	Therefore,
	\begin{align}\label{CLT indi dif3}
		\E\left[\left(\cU_{\epsilon_m,mn}-\rho\right)\widetilde{\cU}_{\epsilon_m,mn}\right]=\frac{1}{n}\E\left(g(L)-\rho\right)^2+\frac{1}{m}\E\left[\left(\E\left[\left.\cV_{\epsilon_m}\right|Y\right]\right)^2\right].
	\end{align}
\end{enumerate}

Plugging~\eqref{CLT indi dif1},~\eqref{CLT indi dif2}, and~\eqref{CLT indi dif3} into~\eqref{CLT indi dif}, we get
\begin{align*}
\E\left[(\cU_{\epsilon_m,mn}-\rho-\widetilde{\cU}_{\epsilon_m,mn})^2\right]=\frac{1}{mn} \E\left[\cV_{\epsilon_m}^2\right] - \frac{1}{mn} \E\left[\left(\E\left[\left.\cV_{\epsilon_m}\right|Y\right]\right)^2\right]=\frac{1}{mn} \E\left[\cV_{\epsilon_m}^2\right] - \frac{1}{mn} \E\left[\widetilde{Y}_{\epsilon_m}^2\right].
\end{align*}
Note that
\begin{align}
\E[\cV_{\epsilon_m}^2] &= \E\left[\E\left[\left. \left(g'_{\epsilon_m}(L)(\hatH(X,Y)-L)\right)^2\right|Y\right]\right]= \E\left[\int \int\left(\frac{1}{\epsilon_m}\phi\left(\frac{\ell}{\epsilon_m}\right)(\hatH(x,Y)-\ell) \right)^2\psi(x,\ell)\d x\d\ell\right]\nonumber\\
&= \epsilon_m^{-1}\E\left[\int \int\left(\phi(u)(\hatH(x,Y)-\epsilon_m u) \right)^2\psi(x,\epsilon_m u)\d x\d u\right]\nonumber\\
&\leq \epsilon_m^{-1}\E\left[\int \int\left(\phi(u)(\hatH(x,Y)-\epsilon_m u) \right)^2 \left[\psi_0(x) +\epsilon_m |u| \psi_1(x) \right] \d x\d u\right]\nonumber\\
&\stackrel{(*)}{=}\cO(\epsilon_m^{-1}),\label{ind CLT Lambda}
\end{align}
where (*) holds because the double integrals inside the expectation is bounded (e.g., multiply out all the terms then use~\eqref{eq:aux12},~\eqref{eq:aux13}, and Assumption~\ref{assm:indicatordensity}~\ref{assum:boundedmoments}).

By~\eqref{ind CLT Y} and~\eqref{ind CLT Lambda} and provided that $m\epsilon_m \to \infty$, we have
\begin{align*}
	\E\left[\widetilde{\sigma}_{mn}^{-2}(\cU_{\epsilon_m,mn}-\rho-\widetilde{\cU}_{\epsilon_m,mn})^2\right]=\sigma_{mn}^{-2}\left[\cO\left(\frac{1}{mn\epsilon_m}\right) - \frac{\sigma_2^2 +\cO(\epsilon_m)}{mn}\right]\to 0, \ \text{as} \ m\to\infty.
\end{align*}
This shows that $\widetilde{\sigma}_{mn}^{-1}(\cU_{\epsilon_m,mn}-\rho-\widetilde{\cU}_{\epsilon_m,mn})\stackrel{\cL^2}{\to} 0$ and hence $\widetilde{\sigma}_{mn}^{-1}(\cU_{\epsilon_m,mn}-\rho-\widetilde{\cU}_{\epsilon_m,mn})\condist 0$.
Combining this with~\eqref{eq:result1} then apply the Slutsky's theorem to~\eqref{eq:decomposition1}, we have
	\begin{equation*}
	\frac{\cU_{\epsilon_m,mn}-\rho}{\sigma_{mn}} \stackrel{d}{\to} \cN(0,1).
	\end{equation*}
The proof is complete.
\end{proof}

\begin{lemma}\label{lem indLCT ra}
	Suppose the conditions for Theorem~\ref{thm:CLTIndicator} hold.
	Then,
	\begin{align*}
		\widetilde{\sigma}_{mn}^{-1} r_{\epsilon_m,mn}^a \conlone 0,\ \mbox{ as }\ \min\{m,n\}\to\infty.
	\end{align*}
\end{lemma}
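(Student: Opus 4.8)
The plan is to bound $\E[|r_{\epsilon_m,mn}^a|]$ directly and then divide by $\widetilde{\sigma}_{mn}$, which is of order $m^{-1/2}$. First I would use that the summands $\gepsm''(\Li)(\Lmi-\Li)^2$ are identically distributed in $i$, so that by the triangle inequality $\E[|r_{\epsilon_m,mn}^a|] \le \avgni \E[|\gepsm''(\Li)|(\Lmi-\Li)^2] = \E[|\gepsm''(L)|(L_m-L)^2]$, exactly the reduction used throughout the appendix. From the definition~\eqref{eq:smoothapprox} the second derivative satisfies the crude bound $|\gepsm''(x)| \le \frac{1}{4\pi\epsilon_m^2}\1\{|x|\le 2\pi\epsilon_m\}$, so it remains to control the thin-slab expectation
$$\E[|\gepsm''(L)|(L_m-L)^2] \le \frac{1}{4\pi\epsilon_m^2}\,\E\big[\1\{|L|\le 2\pi\epsilon_m\}(L_m-L)^2\big].$$

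Next I would condition on the outer scenario $X$. Since $\1\{|L(X)|\le 2\pi\epsilon_m\}$ is $\sigma(X)$-measurable and, given $X$, $L_m - L = \avgmj(\hatH(X,Y_j)-L(X))$ is an average of i.i.d.\ conditionally mean-zero terms, equation~\eqref{eq:Rmoment2piid} of Lemma~\ref{lem:avg_moment2p} (with $p=1$) yields $\E[(L_m-L)^2\mid X] = \frac{1}{m}\Var[\hatH(X,Y)\mid X] \le \frac{1}{m}\E[\hatH(X,Y)^2\mid X]$. Writing $h(X):=\E[\hatH(X,Y)^2\mid X]$, the slab expectation reduces to $\frac{1}{m}\E[\1\{|L|\le 2\pi\epsilon_m\}h(X)]$, so the whole task collapses to showing that the thin-slab integral $\E[\1\{|L|\le 2\pi\epsilon_m\}h(X)]$ is $\cO(\epsilon_m)$. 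Note that the indicator localizes everything near $L=0$, so only the localized moment bounds of Assumption~\ref{assm:indicatordensity}~\ref{assum:boundedmoments} are needed, not a global $\E[\hatH^2]<\infty$.

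For this last step I would pass to the joint density $\psi(x,\ell)$ of $(X,L(X))$ and Taylor-expand it about $\ell=0$: by Assumption~\ref{assm:indicatordensity}~\ref{assum:existence}, $\psi(x,\ell)=\psi_0(x)+\ell\,\tfrac{\partial}{\partial\ell}\psi(x,\bar\ell)$ with $|\tfrac{\partial}{\partial\ell}\psi(x,\cdot)|\le\psi_1(x)$ locally. Integrating $\ell$ over $(-2\pi\epsilon_m,2\pi\epsilon_m)$ produces a leading term $4\pi\epsilon_m\int h(x)\psi_0(x)\,\d x$ and a remainder bounded by $(2\pi\epsilon_m)^2\int h(x)\psi_1(x)\,\d x$; Tonelli's theorem together with the finiteness of $\E[\int\hatH(x,Y)^2\psi_i(x)\,\d x]$ for $i=0,1$ shows both constants are finite, so the slab integral is $\cO(\epsilon_m)+\cO(\epsilon_m^2)=\cO(\epsilon_m)$. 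Hence $\E[|r_{\epsilon_m,mn}^a|]=\cO(1/(m\epsilon_m))$, and using $\widetilde{\sigma}_{mn}^{-1}\le\sqrt m/\widetilde{\sigma}_2$ (valid since $\widetilde{\sigma}_{mn}^2\ge\widetilde{\sigma}_2^2/m$ and $\widetilde{\sigma}_2>0$) I obtain $\widetilde{\sigma}_{mn}^{-1}\E[|r_{\epsilon_m,mn}^a|]=\cO(1/(\sqrt m\,\epsilon_m))$, which tends to $0$ precisely when the constructed sequence satisfies $m\epsilon_m^2\to\infty$. I expect the main obstacle to be the tension in this last step: the $\epsilon_m^{-2}$ blow-up from $\gepsm''$ must be beaten by the $\cO(\epsilon_m)$ shrinkage of the slab integral, and since $\widetilde{\sigma}_{mn}$ returns only a factor $m^{1/2}$, the net rate $1/(\sqrt m\,\epsilon_m)$ is delicate and imposes the lower bound $m\epsilon_m^2\to\infty$ on how slowly $\epsilon_m$ may decay (compatible with, and slightly stronger than, the $m\epsilon_m\to\infty$ already used in Lemma~\ref{lem indLCT U}); making the Taylor/Tonelli manipulations of the degenerate-looking density $\psi$ rigorous under only those localized moment bounds is the accompanying technical subtlety.
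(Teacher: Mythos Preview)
Your argument is correct and arrives at the same bound $\E[|r_{\epsilon_m,mn}^a|]=\cO((m\epsilon_m)^{-1})$ and the same rate condition $m\epsilon_m^2\to\infty$ as the paper, but via a genuinely different route. The paper works directly with the joint density $p_m(\ell,z)$ of $(L,Z_m)$ from Assumption~\ref{assm:jointdensity}: writing $(L_m-L)^2=Z_m^2/m$, it computes
\[
\E\big[|g''_{\epsilon_m}(L)|(L_m-L)^2\big]=\frac{1}{m}\int\!\!\int_{-2\pi\epsilon_m}^{2\pi\epsilon_m}\Big|\tfrac{1}{4\pi\epsilon_m^2}\sin(\ell/\epsilon_m)\Big|z^2\,p_m(\ell,z)\,\d\ell\,\d z,
\]
substitutes $\ell=\epsilon_m u$, and bounds $p_m$ by $\bar p_{0,m}$ to obtain $\cO((m\epsilon_m)^{-1})$ in a couple of lines. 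You instead condition on $X$ to extract the $1/m$ factor via the conditional variance identity~\eqref{eq:Rmoment2piid}, and then handle the resulting thin-slab integral $\E[\1\{|L|\le 2\pi\epsilon_m\}h(X)]$ using the $(X,L)$ joint density $\psi$ and the localized moment bounds of Assumption~\ref{assm:indicatordensity}. Both sets of assumptions are available under Theorem~\ref{thm:CLTIndicator}, so both proofs are valid; the paper's is shorter and keeps everything at the level of the $(L,Z_m)$ pair, while yours is more explicit about which localized moments are actually needed (avoiding any global $\E[\hatH^2]<\infty$) and relies only on the structural density $\psi$ rather than the $m$-dependent family $p_m$. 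The technical caveat you flag about the degenerate-looking density $\psi(x,\ell)$ is shared by the paper itself, which simply assumes its existence and regularity.
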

\begin{proof}[Proof of Lemma~\ref{lem indLCT ra}]
Recall that $r_{\epsilon_m,mn}^a=\avgni\left[ g_{\epsilon_m}''(\Li)(\Lmi-\Li)^2\right]$. By Assumption~\ref{assm:jointdensity},
\begin{align*}
&\E[|r_{\epsilon_m,mn}^a|] \leq \E\left[\left|g''_{\epsilon_m}(L)(L-L_m)^2\right|\right]= \int \int_{-2\pi\epsilon_m}^{2\pi\epsilon_m} \left|\frac{1}{4\pi\epsilon_m^2}\sin\left(\frac{\ell}{\epsilon_m}\right)\left(\frac{z}{\sqrt{m}}\right)^2\right|p_m(\ell,z)\d\ell\d z\\
=& \frac{1}{4\pi m\epsilon_m}\int \int_{-2\pi}^{2\pi} \left|\sin u\right|z^2p_m(\epsilon_m u,z)\d u\d z\leq \frac{1}{4\pi m \epsilon_m}\int_{-2\pi}^{2\pi} \left|\sin u\right|\d u\int z^2\bar{p}_{0,m}(z)\d z=\cO\left((m \epsilon_m)^{-1}\right).
\end{align*}
Then as $m\to\infty$, provided that $ m\epsilon_m^2 \to \infty$ we get
\begin{equation*}
\E\left[\left|\widetilde{\sigma}_{mn}^{-1} r_{\epsilon_m,mn}^a\right|\right] =\cO\left(\left[\frac{m^2\epsilon_m^2}{n}\widetilde{\sigma}_{1}^2+m\epsilon_m^2\widetilde{\sigma}_{2}^2\right]^{-1/2}\right) \to 0.
\end{equation*}
The proof is complete.
\end{proof}

\begin{lemma}\label{lem indLCT rb}
	Suppose the conditions for Theorem~\ref{thm:CLTIndicator} hold.
	Then,
	\begin{align*}
		\widetilde{\sigma}_{mn}^{-1} r_{\epsilon_m,mn}^b \conlone0,\ \mbox{ as }\ \min\{m,n\}\to\infty.
	\end{align*}
\end{lemma}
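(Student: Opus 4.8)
The plan is to bound $\E[|r_{\epsilon_m,mn}^b|]$ sharply enough that, after dividing by $\widetilde{\sigma}_{mn}$, it still vanishes. The crude estimate $\E[|r_{\epsilon_m,mn}^b|]\le \E[|\gepsm(L)-g(L)|]\le \P(|L|<2\pi\epsilon_m)=\cO(\epsilon_m)$ is \emph{not} good enough: since $\widetilde{\sigma}_{mn}$ is of order $\min\{m,n\}^{-1/2}$, this would force $\min\{m,n\}\epsilon_m^2\to 0$, which clashes with the requirement $m\epsilon_m^2\to\infty$ coming from Lemma~\ref{lem indLCT ra}. So first I would exploit that $r_{\epsilon_m,mn}^b=\avgni[\gepsm(\Li)-g(\Li)]$ is an average of $n$ i.i.d.\ terms and split $\E[|r_{\epsilon_m,mn}^b|]\le |\E[r_{\epsilon_m,mn}^b]| + \E[|r_{\epsilon_m,mn}^b-\E[r_{\epsilon_m,mn}^b]|]$ into a bias part and a fluctuation part.

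For the bias part, I would introduce the marginal density $\psi_L(\ell)=\int\psi(x,\ell)\d x$ of $L$, which exists and is differentiable near $0$ by Assumption~\ref{assm:indicatordensity}, with $\psi_L(0)=\int\psi_0(x)\d x<\infty$ and $|\psi_L'|\le \int\psi_1(x)\d x<\infty$. Writing $\Phi(\ell):=\gepsm(\ell)-\1\{\ell\ge 0\}$, which is supported on $|\ell|<2\pi\epsilon_m$ and satisfies $|\Phi|\le 1$, gives $\E[r_{\epsilon_m,mn}^b]=\int\Phi(\ell)\psi_L(\ell)\d\ell$. The key observation is that $\Phi$ is an \emph{odd} function: the even kernel $\phi$ yields $\gepsm(\ell)+\gepsm(-\ell)=1$, whence $\Phi(-\ell)=-\Phi(\ell)$ and $\int\Phi(\ell)\d\ell=0$. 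Hence the leading term $\psi_L(0)\int\Phi\,\d\ell$ vanishes, and a first-order Taylor expansion $\psi_L(\ell)=\psi_L(0)+\ell\,\psi_L'(\bar\ell)$ leaves only $\int\Phi(\ell)\,\ell\,\psi_L'(\bar\ell)\d\ell$, bounded by $1\cdot 2\pi\epsilon_m\cdot \sup|\psi_L'|$ over an interval of length $4\pi\epsilon_m$. This yields $|\E[r_{\epsilon_m,mn}^b]|=\cO(\epsilon_m^2)$.

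For the fluctuation part, I would use Jensen's inequality and the i.i.d.\ structure: $\E[|r_{\epsilon_m,mn}^b-\E[r_{\epsilon_m,mn}^b]|]\le (\Var[r_{\epsilon_m,mn}^b])^{1/2}=(n^{-1}\Var[\gepsm(L)-g(L)])^{1/2}\le (n^{-1}\E[|\gepsm(L)-g(L)|])^{1/2}=\cO(\sqrt{\epsilon_m/n})$, where the last step reuses $\E[|\gepsm(L)-g(L)|]=\cO(\epsilon_m)$ together with $(\gepsm-g)^2\le|\gepsm-g|$. Combining the two parts, $\E[|r_{\epsilon_m,mn}^b|]=\cO(\epsilon_m^2)+\cO(\sqrt{\epsilon_m/n})$. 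Dividing by $\widetilde{\sigma}_{mn}\ge \widetilde{\sigma}_1/\sqrt{n}$ then gives
\begin{equation*}
\E\left[\left|\widetilde{\sigma}_{mn}^{-1}r_{\epsilon_m,mn}^b\right|\right]=\cO\left(\sqrt{n}\,\epsilon_m^2+\sqrt{\epsilon_m}\right)\to 0,
\end{equation*}
provided $\epsilon_m\to 0$ and $n\epsilon_m^4\to 0$, both guaranteed by the sequence $\epsilon_m$ constructed in the proof of Theorem~\ref{thm:CLTIndicator}. This establishes the claimed $\cL^1$ convergence.

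The main obstacle is precisely that the naive modulus-of-continuity bound is too lossy and incompatible with the lower-bound constraint $m\epsilon_m^2\to\infty$ needed to kill $r_{\epsilon_m,mn}^a$. Resolving it requires the two refinements above working in tandem: the symmetry of the smoothing kernel upgrades the bias from $\cO(\epsilon_m)$ to $\cO(\epsilon_m^2)$, while the $n^{-1/2}$ gain from averaging tames the fluctuation; only with both does a single admissible window sequence $\epsilon_m$ (e.g.\ $\epsilon_m\asymp m^{-3/8}$ when $m=n$) simultaneously satisfy $m\epsilon_m^2\to\infty$ and $n\epsilon_m^4\to 0$.
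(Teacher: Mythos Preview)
Your argument is essentially the same as the paper's: both split $r_{\epsilon_m,mn}^b$ into a mean part and a fluctuation part, show the mean is $\cO(\epsilon_m^2)$ by exploiting the odd symmetry of $\gepsm-g$ together with a first-order Taylor expansion of the marginal density of $L$, and bound the variance of the i.i.d.\ average by $\cO(\epsilon_m/n)$. Your use of the single observation that $\Phi=\gepsm-g$ is odd is cleaner than the paper's explicit decomposition $\gepsm(\ell)-g(\ell)=-\tfrac12\bar C+\tfrac{1}{4\pi}\widetilde C$, but the content is identical.

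There is one small slip in the last step. You bound $\widetilde{\sigma}_{mn}^{-1}\le \sqrt{n}/\widetilde{\sigma}_1$ for \emph{both} terms, which produces the constraint $n\epsilon_m^4\to 0$. Since $\epsilon_m$ is indexed only by $m$ and Theorem~\ref{thm:CLTIndicator} must hold for arbitrary joint rates with $\min\{m,n\}\to\infty$, this introduces an unwanted coupling between $n$ and $m$. The fix is trivial and is exactly what the paper does: use the other lower bound $\widetilde{\sigma}_{mn}\ge \widetilde{\sigma}_2/\sqrt{m}$ for the bias term, giving $\widetilde{\sigma}_{mn}^{-1}\cO(\epsilon_m^2)=\cO(\sqrt{m}\,\epsilon_m^2)\to 0$ under $m\epsilon_m^4\to 0$, while keeping your bound $\widetilde{\sigma}_{mn}\ge \widetilde{\sigma}_1/\sqrt{n}$ for the fluctuation term to get $\cO(\sqrt{\epsilon_m})$. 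With this patch your proof is complete and equivalent to the paper's.
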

\begin{proof}[Proof of Lemma~\ref{lem indLCT rb}]
Note that
\begin{align*}
g_{\epsilon_m}(x)-g(x)=-\frac{1}{2}\left[\1{\{0\leq x\leq2\pi\epsilon_m\}}-\1{\{-2\pi\epsilon_m\leq x<0\}}\right] + \dfrac{1}{4\pi}\left[\dfrac{x}{\epsilon_m}-\sin\left(\dfrac{x}{\epsilon_m}\right)\right]\1{\{|x|\leq2\pi\epsilon_m\}}.
\end{align*}
Then it follows that
\begin{align*}
r_{\epsilon_m,mn}^b=\avgni \left[g_{\epsilon_m}(\Li)-g(\Li)\right]=&-\frac{1}{2}\frac{1}{n}\sum_{i=1}^n \left[\1{\{0\leq \Li\leq2\pi\epsilon_m\}}-\1{\{-2\pi\epsilon_m\leq \Li<0\}}\right]\\
&+\dfrac{1}{4\pi}\frac{1}{n}\sum_{i=1}^n \left[\dfrac{\Li}{\epsilon_m}-\sin\left(\dfrac{\Li}{\epsilon_m}\right)\right]\1{\{|\Li|\leq2\pi\epsilon_m\}}\nonumber\\
=&-\frac{1}{2}\frac{1}{n}\sum_{i=1}^n \bar{C}_{i}+\dfrac{1}{4\pi}\frac{1}{n}\sum_{i=1}^n \widetilde{C}_{i},
\end{align*}
where $
\bar{C}_{i}=\1{\{0\leq \Li\leq2\pi\epsilon_m\}}-\1{\{-2\pi\epsilon_m\leq \Li<0\}},\ {\rm and}\ \widetilde{C}_{i}=\left[\dfrac{\Li}{\epsilon_m}-\sin\left(\dfrac{\Li}{\epsilon_m}\right)\right]\1{\{|\Li|\leq2\pi\epsilon_m\}}.
$
By Chebyshev's inequality, it suffices to prove that, as $\min\{m,n\}\to\infty$, $\sigma^{-1}_{mn}\cdot\E [\bar{C}_1]\to 0$, $\sigma^{-1}_{mn}\cdot\E [\widetilde{C}_1]\to 0$, $\sigma^{-2}_{mn}n^{-1}\cdot\Var(\bar{C}_1)\to 0$, and $\sigma^{-2}_{mn}n^{-1}\cdot\Var(\widetilde{C}_1)\to 0$.
Also note that
\begin{align}
	\sigma^{-1}_{mn}\epsilon_m^2 =& \left(\frac{\sigma_1^2}{n}+\frac{\sigma_2^2}{m}\right)^{-1/2} \cdot \epsilon_m^2 = \left(\frac{\sigma_1^2}{n\epsilon_m^4}+\frac{\sigma_2^2}{m\epsilon_m^4}\right)^{-1/2} \leq \left(\frac{m\epsilon_m^4}{\sigma_2^2}\right)^{1/2}\to 0, \mbox{ as }\ m\to\infty, \label{indCLT ConV1}\\
	\sigma^{-2}_{mn}n^{-1}=&\left(\frac{\sigma_1^2}{n}+\frac{\sigma_2^2}{m}\right)^{-1}\cdot n^{-1} = \left(\sigma_1^2+\frac{n\sigma_2^2}{m}\right)^{-1}<\sigma_1^{-2},\label{indCLT ConV2}
\end{align}
where~\eqref{indCLT ConV1} holds provided that $m\epsilon_m^4 \to 0$.
Then, we just have to prove that (I) $\E [\bar{C}_1]=\cO(\epsilon_m^2)$, (II) $\E [\widetilde{C}_1]=\cO(\epsilon_m^2)$, (III) $\Var(\bar{C}_1)\to 0$, and (IV) $\Var(\widetilde{C}_1)\to 0$ as $\min\{m,n\}\to\infty$.

\begin{enumerate}[label=(\Roman*)]
	\item Proving $\E [\bar{C}_1]=\cO(\epsilon_m^2)$. Note that, by Assumption~\ref{assm:jointdensity} and the mean value theorem,
\begin{align*}
\E [\bar{C}_1]=&\E \left[\1{\{0\leq L\leq2\pi\epsilon_m\}}-\1{\{-2\pi\epsilon_m\leq L<0\}}\right]=\P \left(0\leq L\leq2\pi\epsilon_m\right)-\P \left(-2\pi\epsilon_m\leq L<0\right)\\
=&\int_0^{2\pi\epsilon_m}\int p_m(\ell,z)\d z\d\ell -\int_{-2\pi\epsilon_m}^0\int p_m(\ell,z)\d z\d\ell\\
=& 2\pi\epsilon_m \int p_m(\xi_1,z)\d z-2\pi\epsilon_m \int p_m(\xi_2,z)\d z= 2\pi\epsilon_m \int \left[p_m(\xi_1,z)-p_m(\xi_2,z)\right]\d z\\
=& 2\pi\epsilon_m \int (\xi_1-\xi_2) \frac{\partial}{\partial \ell}p_m(\xi,z)\d z,
\end{align*}
where $\xi_1\in(0,2\pi\epsilon_m)$, $\xi_2\in(-2\pi\epsilon_m,0)$, $\xi\in(\xi_2,\xi_1)$ and so $|\xi_1-\xi_2|<4\pi\epsilon_m$. Therefore,
\begin{align*}
\left|\E [\bar{C}_1]\right|\leq 2\pi\epsilon_m \int \left|(\xi_1-\xi_2) \frac{\partial}{\partial \ell} p_m(\xi,z)\right|\d z\leq8\pi\epsilon_m^2 \int \bar{p}_{1,m}(z)\d z=\cO(\epsilon_m^2).
\end{align*}

	\item Proving $\E [\widetilde{C}_1]=\cO(\epsilon_m^2)$. Note that, by integration by parts and the mean value theorem,
	\begin{align*}
		&\E [\widetilde{C}_1]=\E \left\{\left[\dfrac{L}{\epsilon_m}-\sin\left(\dfrac{L}{\epsilon_m}\right)\right]\1{\{|L|\leq2\pi\epsilon_m\}}\right\}=\int \int_{-2\pi\epsilon_m}^{2\pi\epsilon_m} \left[\dfrac{\ell}{\epsilon_m}-\sin\left(\dfrac{\ell}{\epsilon_m}\right)\right] p_m(\ell,z)\d\ell\d z\\
		=&\epsilon_m\int \int_{-2\pi}^{2\pi} (t-\sin t) p_m(\epsilon_m t,z)\d t\d z=\epsilon_m\int \int_{-2\pi}^{2\pi} (t-\sin t) \left[p_m(0,z)+\epsilon_m t\frac{\partial}{\partial \ell} p_m(\xi,z)\right]\d t\d z\\
		=&\epsilon_m\int p_m(0,z)\d z\int_{-2\pi}^{2\pi} (t-\sin t)\d t+\epsilon_m^2\int \int_{-2\pi}^{2\pi} (t-\sin t) t\frac{\partial}{\partial \ell} p_m(\xi,z)\d t\d z\\
		=&0+\epsilon_m^2\int \int_{-2\pi}^{2\pi} (t-\sin t) t\frac{\partial}{\partial \ell} p_m(\xi,z)\d t\d z,
	\end{align*}
	where $\xi$ is between 0 and $\epsilon_m t$. Hence,
	\begin{align*}
		\left|\E [\widetilde{C}_1]\right|\leq\epsilon_m^2\int \int_{-2\pi}^{2\pi} \left|(t-\sin t) t \frac{\partial}{\partial \ell} p_m(\xi,z)\right|\d t\d z\leq\epsilon_m^2\int_{-2\pi}^{2\pi} \left|(t-\sin t) t\right|\d t\int \bar{p}_{1,m}(z)\d z=\cO(\epsilon_m^2).
	\end{align*}

	\item Proving $\Var(\bar{C}_1)\to 0$. Note that, by integration by parts, the mean value theorem, and Assumption~\ref{assm:jointdensity},
	\begin{align*}
		&\Var(\bar{C}_1)\leq\E\left[\bar{C}_1^2\right]=\E\left[\left(\1{\{0\leq L\leq2\pi\epsilon_m\}}-\1{\{-2\pi\epsilon_m\leq L<0\}}\right)^2\right]\\
		=& \E\left[\left(\1{\left\{0\leq L\leq2\pi\epsilon_m\right\}}\right)^2 + \left(\1{\left\{-2\pi\epsilon_m\leq L<0\right\}}\right)^2\right]
		=\P\left(0\leq L\leq2\pi\epsilon_m\right)+ \P\left(-2\pi\epsilon_m\leq L<0 \right)\\
		=&\int \int_0^{2\pi\epsilon_m} p_m(\ell,z)\d\ell\d z + \int \int_{-2\pi\epsilon_m}^0 p_m(\ell,z)\d\ell\d z= 2\pi\epsilon_m \int p_m(\xi_1,z)\d z + 2\pi\epsilon_m \int p_m(\xi_2,z)\d z\\
		=& 2\pi\epsilon_m \int \left[p_m(\xi_1,z)+p_m(\xi_2,z)\right]\d z
		\leq 4\pi\epsilon_m \int \bar{p}_{0,m}(z)\d z=\cO(\epsilon_m)\to 0, \mbox{ as } m\to \infty,
	\end{align*}
	where $\xi_1\in(0,2\pi\epsilon_m)$ and $\xi_2\in(-2\pi\epsilon_m,0)$.
	
	\item Proving $\Var(\widetilde{C}_1)\to 0$. Note that, due to Assumption~\ref{assm:jointdensity},
	\begin{align*}
		\Var(\widetilde{C}_1)\leq&\E\left[\widetilde{C}_1^2\right]=\E\left|\left[\dfrac{L}{\epsilon_m}-\sin\left(\dfrac{L}{\epsilon_m}\right)\right]\1{\{|L|\leq2\pi\epsilon_m\}}\right|^2\\
		=& \int \int_{-2\pi\epsilon_m}^{2\pi\epsilon_m} \left[\dfrac{\ell}{\epsilon_m}-\sin\left(\dfrac{\ell}{\epsilon_m}\right)\right]^2 p_m(\ell,z)\d\ell\d z \\
		=& \epsilon_m\int \int_{-2\pi}^{2\pi} \left(t-\sin t\right)^2 p_m(\epsilon_m t,z)\d t\d z \\
		\leq& \epsilon_m \int_{-2\pi}^{2\pi} \left(t-\sin t\right)^2\d t\int \bar{p}_{0,m}(z)\d z=\cO(\epsilon_m)\to 0.
	\end{align*}
\end{enumerate}

The proof is complete.
\end{proof}

\begin{lemma}\label{lem indLCT rc}
	Suppose the conditions for Theorem~\ref{thm:CLTIndicator} hold.
	Then,
	\begin{align*}
		\widetilde{\sigma}_{mn}^{-1} r_{\epsilon_m,mn}^c \conlone0,\ \mbox{ as }\ \min\{m,n\}\to\infty.
	\end{align*}
\end{lemma}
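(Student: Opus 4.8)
The plan is to mirror the structure of the proof of Lemma~\ref{lem indLCT rb}, but to confront the new difficulty that the summands are now evaluated at the recycled estimators $\Lmi$, which are mutually dependent through the shared inner samples $Y_1,\dots,Y_m$. First I would reuse the explicit form of $g-\gepsm$ recorded in the proof of Lemma~\ref{lem indLCT rb} to write
\begin{equation*}
r_{\epsilon_m,mn}^c=\avgni\left[g(\Lmi)-\gepsm(\Lmi)\right]=\frac{1}{2}\avgni\bar{C}_{m,i}-\frac{1}{4\pi}\avgni\widetilde{C}_{m,i},
\end{equation*}
where $\bar{C}_{m,i}=\1\{0\le\Lmi\le2\pi\epsilon_m\}-\1\{-2\pi\epsilon_m\le\Lmi<0\}$ and $\widetilde{C}_{m,i}=\left[\Lmi/\epsilon_m-\sin(\Lmi/\epsilon_m)\right]\1\{|\Lmi|\le2\pi\epsilon_m\}$. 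Since $\cL^2$ convergence implies $\cL^1$ convergence, and $(a-b)^2\le2a^2+2b^2$, it suffices to show $\widetilde{\sigma}_{mn}^{-2}\E[(\avgni\bar{C}_{m,i})^2]\to0$ together with the analogous statement for $\widetilde{C}_{m,i}$, as $\min\{m,n\}\to\infty$.

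Because the $\Lmi$ are identically distributed and the pairs $(\Lmi,L_{m,k})$ with $i\ne k$ are exchangeable, I would expand $\E[(\avgni\bar{C}_{m,i})^2]=\frac{1}{n}\E[\bar{C}_{m,1}^2]+\frac{n-1}{n}\E[\bar{C}_{m,1}\bar{C}_{m,2}]$, and likewise for $\widetilde{C}$. The diagonal terms are handled exactly as cases (III)--(IV) of Lemma~\ref{lem indLCT rb}: writing $\Lmi=\Li+Z_{m,i}/\sqrt{m}$ and changing variables so that the relevant slab becomes $[-2\pi\epsilon_m,2\pi\epsilon_m]$, the bound $p_m\le\bar{p}_{0,m}$ from Assumption~\ref{assm:jointdensity} gives $\E[\bar{C}_{m,1}^2]=\P(|L_m|\le2\pi\epsilon_m)=\cO(\epsilon_m)$ and $\E[\widetilde{C}_{m,1}^2]=\cO(\epsilon_m)$.

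The crux is the cross term, which is where dependence enters and which has no analogue in Lemma~\ref{lem indLCT rb}; here I would use the joint density $q_m(\ell_1,\ell_2,z_1,z_2)$ of $(\Li,L_k,Z_m(X_i),Z_m(X_k))$ supplied by Assumption~\ref{assm:indicatordensity2}. Writing $\bar{C}_{m,i}=\mathrm{sign}(\Lmi)\1\{|\Lmi|\le2\pi\epsilon_m\}$ and substituting $a_i=\ell_i+z_i/\sqrt{m}$ for fixed $(z_1,z_2)$, the inner integral becomes an integral of $\mathrm{sign}(a_1)\mathrm{sign}(a_2)\,q_m(a_1-z_1/\sqrt{m},a_2-z_2/\sqrt{m},z_1,z_2)$ over $[-2\pi\epsilon_m,2\pi\epsilon_m]^2$. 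The key observation is that $\mathrm{sign}(\cdot)$ (and, for $\widetilde{C}$, the map $x\mapsto x/\epsilon_m-\sin(x/\epsilon_m)$) is odd, so antisymmetrizing the $a_1$-integral and applying the mean value theorem in $\ell_1$ together with $|\partial_{\ell_1}q_m|\le\bar{q}_{1,m}$ from Assumption~\ref{assm:indicatordensity2}~\ref{assm:jointdensityB2} shows that, for each fixed $a_2$, the $a_1$-integral is $\cO(\epsilon_m^2)\,\bar{q}_{1,m}(z_1,z_2)$; integrating over $a_2$ and then over $(z_1,z_2)$ via Assumption~\ref{assm:indicatordensity2}~\ref{assm:jointdensityB3} yields $\E[\bar{C}_{m,1}\bar{C}_{m,2}]=\cO(\epsilon_m^3)$, and the same bound for $\widetilde{C}$. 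This is the main obstacle: the extra factor $\epsilon_m$ (improving the crude slab estimate $\cO(\epsilon_m^2)$ to $\cO(\epsilon_m^3)$) rests entirely on the oddness-induced cancellation, in the same spirit as the antisymmetrization underlying the $\cO(m^{-1})$ rate in the proof of Proposition~\ref{prop:var}.

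Combining the pieces, $\E[(\avgni\bar{C}_{m,i})^2]=\frac1n\cO(\epsilon_m)+\cO(\epsilon_m^3)$, so using $\widetilde{\sigma}_{mn}^{-2}n^{-1}<\sigma_1^{-2}$ from~\eqref{indCLT ConV2} and $\widetilde{\sigma}_{mn}^{-2}\le m/\sigma_2^2$ gives $\widetilde{\sigma}_{mn}^{-2}\E[(\avgni\bar{C}_{m,i})^2]=\cO(\epsilon_m)+\cO(m\epsilon_m^3)$. This vanishes provided the constructed sequence satisfies $m\epsilon_m^3\to0$, which is compatible with the requirement $m\epsilon_m^2\to\infty$ of Lemma~\ref{lem indLCT ra} (e.g.\ $\epsilon_m=m^{-2/5}$). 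The identical argument disposes of the $\widetilde{C}$ sum, so $\widetilde{\sigma}_{mn}^{-1}r_{\epsilon_m,mn}^c\to0$ in $\cL^2$ and hence in $\cL^1$, completing the proof.
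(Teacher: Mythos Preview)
Your proposal is correct and follows essentially the same route as the paper: the identical decomposition into the odd pieces $\bar{C}_{m,i}$ and $\widetilde{C}_{m,i}$ (the paper's $\bar{D}_i,\widetilde{D}_i$), the diagonal bound $\cO(\epsilon_m)$ via Assumption~\ref{assm:jointdensity}, and the cross term handled through the joint density $q_m$ of Assumption~\ref{assm:indicatordensity2} together with a cancellation coming from oddness, all under the constraint $m\epsilon_m^3\to0$. The differences are organizational rather than substantive: you bound the full second moment directly (yielding $\cL^2$ hence $\cL^1$), whereas the paper separates mean and variance and invokes Chebyshev; and your antisymmetrization in $a_1$ gives the cross term as $\cO(\epsilon_m^3)$ in one step, while the paper Taylor-expands $q_m$ about $(0,0)$ and thereby carries an extra $\cO(\epsilon_m^2/\sqrt{m})$ term that is harmless under the same hypotheses.
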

\begin{proof}[Proof of Lemma~\ref{lem indLCT rc}]
Recall that $r_{\epsilon_m,mn}^c:=\avgni \left[g(\Lmi)-g_{\epsilon_m}(\Lmi)\right]$. Note that
\begin{align*}
r_{\epsilon_m,mn}^c =& \frac{1}{2n}\sum_{i=1}^n \left[\1{\left\{0\leq \Lmi\leq2\pi\epsilon_m\right\}}-\1{\left\{-2\pi\epsilon_m\leq \Lmi<0\right\}}\right]\\
&-\frac{1}{4\pi n}\sum_{i=1}^n \left[\dfrac{\Lmi}{\epsilon_m}-\sin\left(\dfrac{\Lmi}{\epsilon_m}\right)\right]\1{\left\{|\Lmi|\leq2\pi\epsilon_m\right\}}\nonumber\\
=&\frac{1}{2n}\sum_{i=1}^n \bar{D}_i - \frac{1}{4\pi n}\sum_{i=1}^n \widetilde{D}_i,
\end{align*}
where
$
\bar{D}_i= \1{\left\{0\leq \Lmi\leq2\pi\epsilon_m\right\}}-\1{\left\{-2\pi\epsilon_m\leq \Lmi<0\right\}} \mbox{ and }
\widetilde{D}_i= \left[\dfrac{\Lmi}{\epsilon_m}-\sin\left(\dfrac{\Lmi}{\epsilon_m}\right)\right]\1{\left\{|\Lmi|\leq2\pi\epsilon_m\right\}}.
$
By Chebyshev's inequality and~\eqref{indCLT ConV1}, it suffices to prove that, as $\min\{m,n\}\to\infty$, (I) $\E [\bar{D}_1]=\cO(\epsilon_m^2)$, (II) $\E [\widetilde{D}_1]=\cO(\epsilon_m^2)$, (III) $\Var(\sigma^{-1}_{mn}\cdot\frac{1}{n}\sum_{i=1}^n \bar{D}_i)\to 0$, and (IV) $\Var(\sigma^{-1}_{mn}\cdot\frac{1}{n}\sum_{i=1}^n \widetilde{D}_i)\to 0$.
Note that (I) and (II) are similar to (I) and (II) in the proof for Lemma~\ref{lem indLCT rb}, but (III) and (IV) are different because of the correlation among $\Lmi$.

\begin{enumerate}[label=(\Roman*)]
	\item Proving $\E [\bar{D}_1]=\cO(\epsilon_m^2)$. Note that, by the mean value theorem and Assumption~\ref{assm:jointdensity},
	\begin{align*}
		\E [\bar{D}_1]=&\E \left[\1{\left\{0\leq L_m\leq2\pi\epsilon_m\right\}}-\1{\left\{-2\pi\epsilon_m\leq L_m<0\right\}}\right]\\
		=& \P \left(0\leq L+Z_m/\sqrt{m}\leq2\pi\epsilon_m\right)-\P \left(-2\pi\epsilon_m\leq L+Z_m/\sqrt{m}<0\right)\\
		=& \int \int_{-z/\sqrt{m}}^{2\pi\epsilon_m-z/\sqrt{m}} p_m(\ell,z)\d\ell\d z - \int \int_{-2\pi\epsilon_m-z/\sqrt{m}}^{-z/\sqrt{m}} p_m(\ell,z)\d\ell\d z\\
		=& \int 2\pi\epsilon_m p_m(\xi_1,z)\d z - \int 2\pi\epsilon_m p_m(\xi_2,z)\d z\\
		=& 2\pi\epsilon_m \int [p_m(\xi_1,z)- p_m(\xi_2,z)]\d z=2\pi\epsilon_m \int (\xi_1-\xi_2)\frac{\partial}{\partial \ell} p_m(\xi,z)\d z,
	\end{align*}
	where $\xi_1\in(-z/\sqrt{m},2\pi\epsilon_m-z/\sqrt{m})$, $\xi_2\in(-2\pi\epsilon_m-z/\sqrt{m},-z/\sqrt{m})$ and $\xi\in(\xi_2,\xi_1)$. Then $|\xi_1-\xi_2|<4\pi\epsilon_m$. Therefore, by Assumption~\ref{assm:jointdensity},
	\begin{align*}
		\left|\E [\bar{D}_1]\right|\leq 2\pi\epsilon_m \int \left|(\xi_1-\xi_2)\frac{\partial}{\partial \ell} p_m(\xi,z)\right|\d z\leq 8\pi\epsilon_m^2 \int \bar{p}_{1,m}(z)\d z=\cO\left(\epsilon_m^2\right).
	\end{align*}

	\item Proving $\E [\widetilde{D}_1]=\cO(\epsilon_m^2)$. Note that
	\begin{align*}
		\E [\widetilde{D}_1]=&\E \left\{\left[\dfrac{L_m}{\epsilon_m}-\sin\left(\dfrac{L_m}{\epsilon_m}\right)\right]\1{\left\{|L_m|\leq2\pi\epsilon_m\right\}}\right\}\\
		=&\E \left\{\left[\dfrac{L+Z_m/\sqrt{m}}{\epsilon_m}-\sin\left(\dfrac{L+Z_m/\sqrt{m}}{\epsilon_m}\right)\right]\1{\left\{|L+Z_m/\sqrt{m}|\leq2\pi\epsilon_m\right\}}\right\}\\
		=&\int \int_{-2\pi\epsilon_m-z/\sqrt{m}}^{2\pi\epsilon_m-z/\sqrt{m}} \left[\dfrac{\ell+z/\sqrt{m}}{\epsilon_m}-\sin\left(\dfrac{\ell+z/\sqrt{m}}{\epsilon_m}\right)\right] p_m(\ell,z)\d\ell\d z\\
		=&\epsilon_m \int \int_{-2\pi}^{2\pi} (t-\sin t) p_m(\epsilon_m t-z/\sqrt{m},z)\d t\d z\\
		=&\epsilon_m \int \int_{-2\pi}^{2\pi} (t-\sin t) \left[p_m(-z/\sqrt{m},z)+\epsilon_m t\frac{\partial}{\partial \ell} p_m(\xi,z)\right]\d t\d z\\
		=&\epsilon_m \int p_m(-z/\sqrt{m},z)\d z \int_{-2\pi}^{2\pi} (t-\sin t)\d t + \epsilon_m^2 \int \int_{-2\pi}^{2\pi} (t-\sin t) t\frac{\partial}{\partial \ell} p_m(\xi,z)\d t\d z\\
		=&0 + \epsilon_m^2 \int \int_{-2\pi}^{2\pi} (t-\sin t) t\frac{\partial}{\partial \ell} p_m(\xi,z)\d t\d z,
	\end{align*}
where $\xi\in(-z/\sqrt{m},2\pi\epsilon_m t-z/\sqrt{m})$. Hence
	\begin{align*}
		\left|\E [\widetilde{D}_1]\right|\leq& \epsilon_m^2 \int \int_{-2\pi}^{2\pi} \left|(t-\sin t) t\frac{\partial}{\partial \ell} p_m(\xi,z)\right|\d t\d z 		\leq \epsilon_m^2 \int_{-2\pi}^{2\pi} \left|(t-\sin t) t\right|\d t\int \bar{p}_{1,m}(z)\d z
		= \cO\left(\epsilon_m^2\right).
	\end{align*}

	\item Proving $\Var(\sigma^{-1}_{mn}\cdot\frac{1}{n}\sum_{i=1}^n \bar{D}_i)\to 0$. By~\eqref{indCLT ConV2} and
	\begin{align*}
		\Var\left(\sigma^{-1}_{mn}\cdot\frac{1}{n}\sum_{i=1}^n \bar{D}_i\right)=\sigma^{-2}_{mn}\cdot\frac{1}{n}\Var(\bar{D}_1)+ \sigma^{-2}_{mn}\cdot\frac{n-1}{n}{\rm Cov}(\bar{D}_1,\bar{D}_2).
	\end{align*}
	we just have to prove that $\Var(\bar{D}_1)\to 0$ and $\sigma^{-2}_{mn}\cdot{\rm Cov}(\bar{D}_1,\bar{D}_2)\to 0$.
	
	Firstly, note that by integration by parts, the mean value theorem, and Assumption~\ref{assm:jointdensity},
	\begin{align*}
		\Var(\bar{D}_1)\leq&\E\left[\bar{D}_1^2\right]=\E \left[\left(\1{\left\{0\leq L_m\leq2\pi\epsilon_m\right\}}-\1{\left\{-2\pi\epsilon_m\leq L_m<0\right\}}\right)^2\right]\\
		=& \E \left[\left(\1{\left\{0\leq L_m\leq2\pi\epsilon_m\right\}}\right)^2 + \left(\1{\left\{-2\pi\epsilon_m\leq L_m<0\right\}}\right)^2\right]\\
		=& \P \left(0\leq L+Z_m/\sqrt{m}\leq2\pi\epsilon_m\right) + \P \left(-2\pi\epsilon_m\leq L+ Z_m/\sqrt{m}<0\right)\\
		=& \int \int_{-z/\sqrt{m}}^{2\pi\epsilon_m-z/\sqrt{m}} p_m(\ell,z)\d\ell\d z + \int \int_{-2\pi\epsilon_m-z/\sqrt{m}}^{-z/\sqrt{m}} p_m(\ell,z)\d\ell\d z\\
		=& \int 2\pi\epsilon_m p_m(\xi_1,z)\d z + \int 2\pi\epsilon_m p_m(\xi_2,z)\d z\\
		\leq&\int 2\pi\epsilon_m \bar{p}_{0,m}(z)\d z + \int 2\pi\epsilon_m \bar{p}_{0,m}(z)\d z\\
		=&4\pi\epsilon_m \int \bar{p}_{0,m}(z)\d z =\cO(\epsilon_m)\to 0, \mbox{ as } m \to \infty
	\end{align*}
	where $\xi_1\in(-z/\sqrt{m},2\pi\epsilon_m-z/\sqrt{m})$ and $\xi_2\in(-2\pi\epsilon_m-z/\sqrt{m},-z/\sqrt{m})$.
	
	Secondly, because $\sigma^{-2}_{mn}{\rm Cov}(\bar{D}_1,\bar{D}_2)=\sigma^{-2}_{mn}\E [\bar{D}_1\bar{D}_2] - (\sigma^{-1}_{mn}\E [\bar{D}_1])^2$ and
	we have proved that $\E [\bar{D}_1]=\cO(\epsilon_m^2)$ so $\sigma^{-1}_{mn}\E [\bar{D}_1]\to 0$ as $m\to\infty$ in (I), then to show $\sigma^{-2}_{mn}\cdot{\rm Cov}(\bar{D}_1,\bar{D}_2)\to 0$ it suffices to prove that $\sigma^{-2}_{mn}\cdot\E [\bar{D}_1\bar{D}_2]\to 0$.
	
	Note that
	\begin{align*}
		\E [\bar{D}_1\bar{D}_2]=& \E \left[\bar{D}_1\left(\1{\left\{0\leq L_{m,2}\leq2\pi\epsilon_m\right\}}-\1{\left\{-2\pi\epsilon_m\leq L_{m,2}<0\right\}}\right)\right]\\
		=& \E \left[\bar{D}_1 \1{\left\{0\leq L_{m,2}\leq2\pi\epsilon_m\right\}}\right] - \E \left[\bar{D}_1\1{\left\{-2\pi\epsilon_m\leq L_{m,2}<0\right\}}\right].
	\end{align*}
	Evidently, the two terms in the above difference have the same convergence rate, so we only analyze the first term. Note that
	\begin{align}
		&\E \left[\bar{D}_1 \1{\left\{0\leq L_{m,2}\leq2\pi\epsilon_m\right\}}\right]\nonumber\\
		=& \E \left[\left(\1{\left\{0\leq L_{m,1}\leq2\pi\epsilon_m\right\}}-\1{\left\{-2\pi\epsilon_m<L_{m,1}<0\right\}}\right) \1{\left\{0\leq L_{m,2}\leq2\pi\epsilon_m\right\}}\right]\nonumber\\
		=& \E \left[\1{\left\{0\leq L_{m,1}\leq2\pi\epsilon_m\right\}} \1{\left\{0\leq L_{m,2}\leq2\pi\epsilon_m\right\}}\right] - \E \left[\1{\left\{-2\pi\epsilon_m<L_{m,1}<0\right\}} \1{\left\{0\leq L_{m,2}\leq2\pi\epsilon_m\right\}}\right].\label{indLCT.31}
	\end{align}
	We examine the two terms on the RHS of~\eqref{indLCT.31} separately. By Taylor expansion~\eqref{indVar Taylor q}, the first term equals
	\begin{align*}
		& \int \int \int_{-\frac{z_1}{\sqrt{m}}}^{2\pi\epsilon_m-\frac{z_1}{\sqrt{m}}}\int_{-\frac{z_2}{\sqrt{m}}}^{2\pi\epsilon_m-\frac{z_2}{\sqrt{m}}} q_m(\ell_1,\ell_2,z_1,z_2)\d\ell_1 \d\ell_2 \d z_1 \d z_2\\
		\stackrel{\eqref{indVar Taylor q_ineq}}{\leq} & (2\pi\epsilon_m)^2\int \int q_m(0,0,z_1,z_2)\d z_1 \d z_2 +  \\
		&\int \int \int_{-\frac{z_1}{\sqrt{m}}}^{2\pi\epsilon_m-\frac{z_1}{\sqrt{m}}}\int_{-\frac{z_2}{\sqrt{m}}}^{2\pi\epsilon_m-\frac{z_2}{\sqrt{m}}}(|\ell_1| + |\ell_2|)\bar{q}_{1,m}(z_1,z_2)\d\ell_1 \d\ell_2 \d z_1 \d z_2\\
		=& (2\pi\epsilon_m)^2\int \int q_m(0,0,z_1,z_2)\d z_1 \d z_2+\int \int \left[(2\pi\epsilon_m)^3-\frac{z_1+ z_2}{\sqrt{m}}(2\pi\epsilon_m)^2\right]\bar{q}_{1,m}(z_1,z_2) \d z_1 \d z_2\\
		=& (2\pi\epsilon_m)^2\int \int q_m(0,0,z_1,z_2)\d z_1 \d z_2+\cO\left(\epsilon_m^3\right) + \cO\left(\frac{\epsilon_m^2}{\sqrt{m}}\right).
	\end{align*}
	It can be shown similarly that the second term on the RHS of~\eqref{indLCT.31} also equals $$(2\pi\epsilon_m)^2\int \int q_m(0,0,z_1,z_2)\d z_1 \d z_2+\cO\left(\epsilon_m^3\right) + \cO\left(\frac{\epsilon_m^2}{\sqrt{m}}\right).$$
	Therefore, taking the difference, we have~\eqref{indLCT.31}$=\cO\left(\epsilon_m^3\right) + \cO\left(\frac{\epsilon_m^2}{\sqrt{m}}\right)$.
	Provided that $m\epsilon_m^3\to 0$, which implies that $\sqrt{m}\epsilon_m^2\to 0$, we have
	\begin{align*}
		\sigma^{-2}_{mn}\cdot\E [\bar{D}_1\bar{D}_2] &= \left(\frac{\sigma_1^2}{n}+\frac{\sigma_2^2}{m}\right)^{-1}\cdot\left(\cO\left(\epsilon_m^3\right) + \cO\left(\frac{\epsilon_m^2}{\sqrt{m}}\right)\right)\\
		&= \cO\left(\left(\frac{\sigma_1^2}{n\epsilon_m^3}+\frac{\sigma_2^2}{m\epsilon_m^3}\right)^{-1}\right)+\cO\left(\left(\frac{\sigma_1^2 \sqrt{m}}{n\epsilon_m^2}+\frac{\sigma_2^2}{\sqrt{m}\epsilon_m^2}\right)^{-1}\right) \to 0, \mbox{ as } m \to \infty.
	\end{align*}
	
	\item Proving $\Var(\sigma^{-1}_{mn}\cdot\frac{1}{n}\sum_{i=1}^n \widetilde{D}_i)\to 0$. By~\eqref{indCLT ConV2} and
	\begin{align*}
		\Var\left(\sigma^{-1}_{mn}\cdot\frac{1}{n}\sum_{i=1}^n \widetilde{D}_i\right)=\sigma^{-2}_{mn}\cdot\frac{1}{n}\Var( \widetilde{D}_1)+ \sigma^{-2}_{mn}\cdot\frac{n-1}{n}{\rm Cov}(\widetilde{D}_1,\widetilde{D}_2),
	\end{align*}
	we just have to prove that $\Var( \widetilde{D}_1)\to 0$ and $\sigma^{-2}_{mn}\cdot{\rm Cov}(\widetilde{D}_1,\widetilde{D}_2)\to 0$.
	Firstly, note that
	\begin{align*}
		&\Var( \widetilde{D}_1)\leq\E\left[\widetilde{D}_1^2\right]=\E \left\{\left[\dfrac{L_m}{\epsilon_m}-\sin\left(\dfrac{L_m}{\epsilon_m}\right)\right]^2\1{\left\{|L_m|\leq2\pi\epsilon_m\right\}}\right\}\\
		=&\E \left\{\left[\dfrac{L+Z_m/\sqrt{m}}{\epsilon_m}-\sin\left(\dfrac{L+Z_m/\sqrt{m}}{\epsilon_m}\right)\right]^2 \1{\left\{|L+Z_m/\sqrt{m}|\leq2\pi\epsilon_m\right\}}\right\}\\
		=&\int \int_{-2\pi\epsilon_m-z/\sqrt{m}}^{2\pi\epsilon_m-z/\sqrt{m}} \left[\dfrac{\ell+z/\sqrt{m}}{\epsilon_m}-\sin\left(\dfrac{\ell+z/\sqrt{m}}{\epsilon_m}\right)\right]^2 p_m(\ell,z)\d\ell\d z\\
		=&\epsilon_m \int \int_{-2\pi}^{2\pi} (t-\sin t)^2 p_m(\epsilon_m t-z/\sqrt{m},z)\d t\d z \stackrel{(*)}{\leq} \epsilon_m \int \bar{p}_{0,m}(z)\d z\int_{-2\pi}^{2\pi} (t-\sin t)^2\d t\\
		\stackrel{(**)}{=}&\cO(\epsilon_m),
	\end{align*}
where $(*)$ and $(**)$ hold by Assumption~\ref{assm:jointdensity}~\ref{assm:jointdensity2} and~\ref{assm:jointdensity}~\ref{assm:jointdensity3}, respectively.

	Second, because $\sigma^{-2}_{mn}{\rm Cov}(\widetilde{D}_1,\widetilde{D}_2)=\sigma^{-2}_{mn}\E [\widetilde{D}_1\widetilde{D}_2] - (\sigma^{-1}_{mn}\E [\widetilde{D}_1])^2$ and
	we have proved that $\E [\widetilde{D}_1]=\cO(\epsilon_m^2)$ so $\sigma^{-1}_{mn}\E [\widetilde{D}_1]\to 0$ as $m\to\infty$ in (II), then to show $\sigma^{-2}_{mn}\cdot{\rm Cov}(\widetilde{D}_1,\widetilde{D}_2)\to 0$ it suffices to prove that $\sigma^{-2}_{mn}\cdot\E [\widetilde{D}_1\widetilde{D}_2]\to 0$.
	Note that $\int_{-2\pi}^{2\pi} (x-\sin(x))dx = 0$, $\int_{-2\pi}^{2\pi}|x| \d x = 4\pi^2$, and $\int_{-2\pi}^{2\pi} \d x = 4\pi$, using the Taylor's theorem for $q_m(\ell_1,\ell_2,z_1,z_2)$ we have
	\begin{align*}
		&\E \left[\widetilde{D}_1\widetilde{D}_2\right]\\
		=& \E \left[\left[\dfrac{L_{m,1}}{\epsilon_m}-\sin\left(\dfrac{L_{m,1}}{\epsilon_m}\right)\right]\1{\left\{|L_{m,1}|\leq2\pi\epsilon_m\right\}}\right.
		\left. \cdot\left[\dfrac{L_{m,2}}{\epsilon_m}-\sin\left(\dfrac{L_{m,2}}{\epsilon_m}\right)\right]\1{\left\{|L_{m,2}|\leq2\pi\epsilon_m\right\}}\right]\\
		=& \epsilon_m^2 \int \int \int_{-2\pi}^{2\pi}\int_{-2\pi}^{2\pi} (u -\sin u)(v-\sin v) q_m\left(\epsilon_m u-\frac{z_1}{\sqrt{m}},\epsilon_m v-\frac{z_2}{\sqrt{m}},z_1,z_2\right)\d u\d v\d z_1 \d z_2\\
		=& \epsilon_m^2 \int \int \int_{-2\pi}^{2\pi}\int_{-2\pi}^{2\pi} (u -\sin u)(v-\sin v) \left[q_m(0,0,z_1,z_2)\right.\\
		&\left. + \left(\epsilon_m u-\frac{z_1}{\sqrt{m}}\right)\frac{\partial}{\partial y_1}q_m(\bar{u},\bar{v},z_1,z_2)+ \left(\epsilon_m u-\frac{z_2}{\sqrt{m}}\right)\frac{\partial}{\partial y_2}q_m(\bar{u},\bar{v},z_1,z_2)\right]\d u\d v\d z_1 \d z_2\\
		\leq& \epsilon_m^2 \int \int q_m(0,0,z_1,z_2)\d z_1 \d z_2 \cdot \int_{-2\pi}^{2\pi} (u -\sin u)\d u \cdot \int_{-2\pi}^{2\pi}(v-\sin v)\d v\\
		& +  \epsilon_m^2 \int \int \int_{-2\pi}^{2\pi}\int_{-2\pi}^{2\pi} \left(\epsilon_m |u|+\frac{|z_1|}{\sqrt{m}} + \epsilon_m |v|+\frac{|z_2|}{\sqrt{m}}\right)\bar{q}_{1,m}(z_1,z_2)\d u\d v\d z_1 \d z_2\\
		=& 0 + \epsilon_m^3  \left(\int_{-2\pi}^{2\pi}|u| \d u + \int_{-2\pi}^{2\pi}|v|\d v\right) \int \int\bar{q}_{1,m}(z_1,z_2)\d z_1 \d z_2\\
		& + \epsilon_m^2 \left(\int_{-2\pi}^{2\pi}\d u \int_{-2\pi}^{2\pi} \d v\right)\int \int  \left(\frac{|z_1|}{\sqrt{m}} +\frac{|z_2|}{\sqrt{m}}\right)\bar{q}_{1,m}(z_1,z_2)\d z_1 \d z_2 \\
		=&\cO\left(\epsilon_m^3\right) + \cO\left(\frac{\epsilon_m^2}{\sqrt{m}}\right).
	\end{align*}
	Similar to the arguments in (III), this implies that $\sigma^{-2}_{mn}\cdot\E [\widetilde{D}_1\widetilde{D}_2]=\cO\left(\left(\frac{\sigma_1^2}{n\epsilon_m^3}+\frac{\sigma_2^2}{m\epsilon_m^3}\right)^{-1}\right)+\cO\left(\left(\frac{\sigma_1^2 \sqrt{m}}{n\epsilon_m^2}+\frac{\sigma_2^2}{\sqrt{m}\epsilon_m^2}\right)^{-1}\right) \to 0$ as $m\to\infty$, provided that $m\epsilon_m^3 \to 0$.
\end{enumerate}

The proof is complete.
\end{proof}

\begin{lemma}\label{lem indLCT sig1}
Suppose the conditions for Theorem~\ref{thm:varianceestimateIndicator} hold, then $\widehat{\widetilde{\sigma}}_{1,mn}^2\conprob\tilde{\sigma}_{1}^2$ as $\min\{m,n\}\to 0$.
\end{lemma}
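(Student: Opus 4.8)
The plan is to follow the template of the proof of Lemma~\ref{lem:sig1}, adapting it to the discontinuous indicator function. Since $g(x)=\1\{x\geq 0\}$, the estimator in~\eqref{eq:sig1hatIndicator} is simply the sample variance of the Bernoulli variables $\1\{\Lmi\geq 0\}$, so I would first write
\begin{equation*}
\widehat{\widetilde{\sigma}}_{1,mn}^2 - \widetilde{\sigma}_1^2 = \left[\avgni \1\{\Lmi\geq 0\} - \E[\1\{L\geq 0\}]\right] - \left[\left(\avgni \1\{\Lmi\geq 0\}\right)^2 - \left(\E[\1\{L\geq 0\}]\right)^2\right],
\end{equation*}
using~\eqref{eq:IndVar1}. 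Both bracketed terms reduce to the single convergence $\avgni \1\{\Lmi\geq 0\} \conprob \E[\1\{L\geq 0\}]$: the first bracket is exactly this statement, and the second follows by applying the continuous mapping theorem with the map $x\mapsto x^2$. Hence it suffices to establish this one convergence.

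To do so, I would split the sample mean into a recycling-perturbation term and a classical i.i.d.\ average,
\begin{equation*}
\avgni \1\{\Lmi\geq 0\} - \E[\1\{L\geq 0\}] = \avgni \left(\1\{\Lmi\geq 0\} - \1\{\Li\geq 0\}\right) + \left(\avgni \1\{\Li\geq 0\} - \E[\1\{L\geq 0\}]\right).
\end{equation*}
The second term vanishes in probability as $n\to\infty$ by the weak law of large numbers, since $\1\{\Li\geq 0\}$, $i=1,\ldots,n$, are i.i.d.\ with common mean $\E[\1\{L\geq 0\}]$. For the first term, because the pairs $(\Lmi,\Li)$ are identically distributed, I would bound its $\cL^1$ norm by $\E\big[|\1\{L_m\geq 0\} - \1\{L\geq 0\}|\big] = \P(L_m\geq 0>L) + \P(L\geq 0>L_m)$ and then show this tends to zero.

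The main work, and the only real obstacle, is bounding these two probabilities. Writing $L_m-L=Z_m/\sqrt{m}$ and using the joint density $p_m(\ell,z)$ of $(L,Z_m)$ from Assumption~\ref{assm:jointdensity}, I have
\begin{equation*}
\P(L_m\geq 0>L) = \int_0^{\infty}\int_{-z/\sqrt{m}}^{0} p_m(\ell,z)\,\d\ell\,\d z \leq \frac{1}{\sqrt{m}}\int |z|\,\bar{p}_{0,m}(z)\,\d z = \cO(m^{-1/2}),
\end{equation*}
where the inequality uses $p_m(\ell,z)\leq \bar{p}_{0,m}(z)$ from Assumption~\ref{assm:jointdensity}~\ref{assm:jointdensity2} and the final order follows from Assumption~\ref{assm:jointdensity}~\ref{assm:jointdensity3} with $r=1$; the symmetric estimate handles $\P(L\geq 0>L_m)$. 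The subtlety worth flagging is that this uses only the crude majorant $\bar{p}_{0,m}$ and yields $\cO(m^{-1/2})$, rather than the sharper $\cO(m^{-1})$ of~\eqref{eq:usefuleq1}; the cancellation that produced the faster rate there is unavailable once absolute values are taken, but $\cO(m^{-1/2})\to 0$ is all that is needed here. This gives the $\cL^1$-convergence of the perturbation term as $m\to\infty$, hence convergence in probability, and combining with the LLN term via Slutsky's theorem completes the reduction and therefore the proof.
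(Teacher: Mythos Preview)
Your proposal is correct and follows essentially the same decomposition and reduction as the paper's proof: write $\widehat{\widetilde{\sigma}}_{1,mn}^2 - \widetilde{\sigma}_1^2$ as a difference of first and second sample moments, reduce both to showing $\avgni \1\{\Lmi\geq 0\}\conprob \E[\1\{L\geq 0\}]$, and then split this into a perturbation term plus an i.i.d.\ average handled by the weak law of large numbers. The only difference is in how the perturbation $\E\big[|\1\{L_m\geq 0\}-\1\{L\geq 0\}|\big]$ is shown to vanish: the paper invokes dominated convergence together with $L_m\stackrel{a.s.}{\to}L$ (using that $L$ has a density so $\P(L=0)=0$), whereas you use the majorant $\bar p_{0,m}$ from Assumption~\ref{assm:jointdensity} to obtain an explicit $\cO(m^{-1/2})$ rate. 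Both arguments are valid; yours gives a quantitative bound at the cost of invoking the density assumption, while the paper's is slightly softer but sufficient for the convergence statement.
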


\begin{proof}
	By~\eqref{eq:IndVar1} and~\eqref{eq:sig1hatIndicator}, we have
	\begin{align}
			\widehat{\widetilde{\sigma}}_{1,mn}^2 - \widetilde{\sigma}_{1}^2 =& \left[\avgni \1\{\Lmi\geq 0\} - \E\left[\1 \{L\geq 0\}\right]\right]\label{eq:diff_sig1hatIndicator1}\\
			& + \left[\left(\avgni \1\{\Lmi\geq 0\}\right)^2 - \left(\E\left[\1 \{L\geq 0\}\right]\right)^2\right].\label{eq:diff_sig1hatIndicator2}
	\end{align}
%
	For~\eqref{eq:diff_sig1hatIndicator1}, note that
	\begin{align*}
		\lim_{m\to\infty} \E\left[\left|\avgni \1\{\Lmi\geq 0\} - \E\left[\1 \{L\geq 0\}\right]\right|\right] \stackrel{(*)}{=} \lim_{m\to\infty}\E\left[\left|\1\{L_m\geq 0\} -\1\{L \geq 0\}\right|\right]\stackrel{(**)}{=}0,
	\end{align*}
	where $(*)$ holds because $\Lmi$, $i=1,\ldots,n$ are identically distributed and $(**)$ holds by the dominated convergence theorem (as $\1\{x\geq 0\} \leq 1$ and $L_m\stackrel{a.s}{\to}L$ as $m\to\infty$ according to Proposition~\ref{prop:Lm_as}). Therefore~\eqref{eq:diff_sig1hatIndicator1}$\conlone 0$ and so~\eqref{eq:diff_sig1hatIndicator1}$\conprob 0$.
	For~\eqref{eq:diff_sig1hatIndicator2}, due to the continuous mapping theorem it suffices to show that $\avgni \1\{\Lmi\geq 0\} \conprob \E\left[\1 \{L\geq 0\}\right]$, which was shown above.
	
	The proof is complete.	
\end{proof}

\begin{lemma}\label{lem indLCT sig2}
	Suppose the conditions for Theorem~\ref{thm:varianceestimateIndicator} hold, then $\widehat{\widetilde{\sigma}}_{2,mn}^2 \conprob \widetilde{\sigma}_2^2$ as $\min\{m,n\}\to 0$.

\end{lemma}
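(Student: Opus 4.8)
The plan is to mirror the three-step strategy used for the smooth and hockey-stick case in the proof of Lemma~\ref{lem:sig2}, but with the smoothed derivative $\gepsm'$ in place of $g'$ and with one extra step that sends the smoothing parameter $\epsilon$ to zero. Writing $\hatR_{m,j} := \avgni \gepsm'(\Lmi)\hatHij$ so that $\widehat{\widetilde{\sigma}}_{2,mn}^2 = \avgmj \hatR_{m,j}^2$ by~\eqref{eq:sig2hatIndicator}, I would introduce the intermediate quantities $\hatR_{\epsilon,j} := \avgni \gepsm'(\Li)\hatHij$ and $R_{\epsilon,j} := \E[\gepsm'(L)\hatH\mid Y=Y_j]$, together with the target summand $T_j := \int \hatH(x,Y_j)\psi_0(x)\d x$, for which $\widetilde{\sigma}_2^2 = \E[T_1^2]$ by~\eqref{eq:IndVar2}. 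The proof then rests on the decomposition
\begin{equation*}
\widehat{\widetilde{\sigma}}_{2,mn}^2 - \widetilde{\sigma}_2^2
= \avgmj\left[\hatR_{m,j}^2 - \hatR_{\epsilon,j}^2\right]
+ \avgmj\left[\hatR_{\epsilon,j}^2 - R_{\epsilon,j}^2\right]
+ \avgmj\left[R_{\epsilon,j}^2 - T_j^2\right]
+ \left(\avgmj T_j^2 - \E[T_1^2]\right),
\end{equation*}
where I would argue that the first three averages vanish in $\cL^1$ (hence in probability) under the rate hypotheses, while the last vanishes in probability by the weak law of large numbers. Since convergence in probability is preserved under sums, collecting the four pieces yields $\widehat{\widetilde{\sigma}}_{2,mn}^2 \conprob \widetilde{\sigma}_2^2$.

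The last term is the easiest: $T_1,\dots,T_m$ are i.i.d.\ functions of the i.i.d.\ inner samples and $\E[T_1^2]\le \E[(\int|\hatH(x,Y)|\psi_0(x)\d x)^2]<\infty$ by Assumption~\ref{assm:indicatordensity}~\ref{assum:boundedmoments}, so the weak law of large numbers gives $\avgmj T_j^2 \conprob \widetilde{\sigma}_2^2$. The third average is controlled by the approximate-identity computation already carried out in the proof of Lemma~\ref{lem indLCT U}: since $\gepsm'(\ell)=\epsilon^{-1}\phi(\ell/\epsilon)$ is a mollifier of width $\epsilon$ and $\psi(x,\cdot)$ is differentiable with the dominating functions of Assumption~\ref{assm:indicatordensity}, the same Taylor argument that produced~\eqref{ind CLT Lambda} shows $R_{\epsilon,j}=T_j+\cO(\epsilon)$, with the remainder vanishing in $\cL^2$ as $\epsilon\to 0$; inequality~\eqref{eq:CauthySchwarzIneq4} together with $\E[T_1^2]<\infty$ then forces $\E|R_{\epsilon,1}^2-T_1^2|\to 0$. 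For the second average, conditioning on $Y_j$ makes the summands $\gepsm'(\Li)\hatHij$ i.i.d.\ across $i$ with mean $R_{\epsilon,j}$, so $\E[(\hatR_{\epsilon,1}-R_{\epsilon,1})^2]=n^{-1}\E[\Var(\gepsm'(L)\hatH\mid Y)]\le n^{-1}\E[(\gepsm'(L)\hatH)^2]=\cO((n\epsilon)^{-1})$, the last bound following the estimate in~\eqref{ind CLT Lambda}; combined with $\E[R_{\epsilon,1}^2]=\cO(1)$ and~\eqref{eq:CauthySchwarzIneq4}, this average is $\cO((n\epsilon)^{-1/2})\to 0$, using that $n\epsilon^2\to\infty$ and $\epsilon\to 0$ imply $n\epsilon\to\infty$.

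The main obstacle is the first average, which measures the error of replacing $\Lmi$ by $\Li$ inside the highly oscillatory $\gepsm'$. By the mean value theorem $\gepsm'(\Lmi)-\gepsm'(\Li)=\gepsm''(\Lambda_i)(\Lmi-\Li)$ with $\Lambda_i$ between $\Lmi$ and $\Li$ and $|\gepsm''|\le (4\pi\epsilon^2)^{-1}$, so, using~\eqref{eq:CauthySchwarzIneq2},~\eqref{eq:CauthySchwarzIneq3}, Theorem~\ref{thm:Lm_moment2p} with $p=2$, and $\E[\hatH^4]<\infty$,
\begin{equation*}
\E\left[(\hatR_{m,1}-\hatR_{\epsilon,1})^2\right]
\leq \frac{1}{16\pi^2\epsilon^4}\left(\E\left[(L_m-L)^4\right]\right)^{1/2}\left(\E\left[\hatH^4\right]\right)^{1/2}
= \cO\!\left(\frac{1}{m\epsilon^4}\right),
\end{equation*}
while $\E[\hatR_{\epsilon,1}^2]\le \E[(\gepsm'(L)\hatH)^2]=\cO(\epsilon^{-1})$ as above. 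Inequality~\eqref{eq:CauthySchwarzIneq4} then gives $\E|\hatR_{m,1}^2-\hatR_{\epsilon,1}^2|=\cO((m\epsilon^4)^{-1})+\cO((m\epsilon^5)^{-1/2})$, which vanishes precisely because of the hypothesis $m\epsilon^5\to\infty$. This is the delicate step where the blow-up of $\gepsm''=\cO(\epsilon^{-2})$ must be beaten by the $\cO(m^{-1/2})$ accuracy of $L_m$, and it is what dictates the rate condition on $\epsilon$; the remaining estimates are routine adaptations of the arguments for Lemma~\ref{lem:sig2} and Lemma~\ref{lem indLCT U}.
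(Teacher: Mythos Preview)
Your proposal is correct and follows essentially the same approach as the paper: the paper uses the identical four-term decomposition (with $R_j$ in place of your $T_j$ and $\widehat{R}_{\epsilon,m,j},\widehat{R}_{\epsilon,j}$ in place of your $\hatR_{m,j},\hatR_{\epsilon,j}$), handles each piece with the same tools (mean value theorem plus $|\gepsm''|\le(4\pi\epsilon^2)^{-1}$ and Theorem~\ref{thm:Lm_moment2p} for the first, conditional-i.i.d.\ variance for the second, the mollifier/Taylor computation for the third, WLLN for the fourth), and arrives at the same rates $\cO((m\epsilon^5)^{-1/2})$, $\cO((n\epsilon^2)^{-1/2})$, and $\cO(\epsilon)$. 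The only cosmetic difference is that for the second term the paper bounds $\E[R_{\epsilon,1}^2]$ crudely by $\E[(\gepsm'(L)\hatH)^2]=\cO(\epsilon^{-1})$ rather than your sharper $\cO(1)$, giving $\cO((n\epsilon^2)^{-1/2})$ instead of your $\cO((n\epsilon)^{-1/2})$; both vanish under the stated hypotheses.
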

\begin{proof}[Proof of Lemma~\ref{lem indLCT sig2}]
For convenience, define shorthanded notations
\begin{equation*}
R := \int \hatH(x,Y)\psi_0(x)\d x,\,\, R_{j} := \int \hatH(x,Y_j)\psi_0(x)\d x, \,\, R_{\epsilon,j} := \E[g'_{\epsilon}(L)\hatH_{1j}|Y_j], \mbox{ and }
\end{equation*}
\begin{equation*}
\widehat{R}_{\epsilon,j} = \avgni g'_{\epsilon}(\Li)\hatHij,\,\, \widehat{R}_{\epsilon, m,j} = \avgni g'_{\epsilon}(\Lmi)\hatHij.
\end{equation*}
Since $R_{j},R_{\epsilon,j},\widehat{R}_{\epsilon,j}$, and $\widehat{R}_{\epsilon, m,j}$ are identically distributed for $j=1,\ldots,m$ so for a generic index $j$ we omit the subscript for notational convenience.
By~\eqref{eq:IndVar2} and \eqref{eq:sig2hatIndicator}, we have
\begin{align*}
&\widehat{\widetilde{\sigma}}_{2,mn}^2 - \widetilde\sigma_2^2 = \avgmj \widehat{R}_{\epsilon, m,j}^2 - \E[R^2]\\
=&\avgmj \left[\widehat{R}_{\epsilon, m,j}^2 -\widehat{R}_{\epsilon,j}^2\right] + \avgmj\left[\widehat{R}_{\epsilon,j}^2 - R_{\epsilon,j}^2\right] +\avgmj\left[R_{\epsilon,j}^2-R_{j}^2\right] + \left[\avgmj R_{j}^2- \E[R^2]\right]\\
=:&A_1+A_2+A_3+A_4,
\end{align*}
where $A_1,A_2,A_3,$ and $A_4$ are defined as the respective terms in the second last line.
In the following, we show that $A_i\conprob 0$, $i=1,2,3,4$, one by one.
\begin{enumerate}[label=(\Roman*)]
	\item Proving $A_1\conprob 0$. Note that
	\begin{align*}
		\E\left[|A_1|\right] \stackrel{\eqref{eq:CauthySchwarzIneq2}}{\leq}&\E\left[\left|\widehat{R}_{\epsilon, m,1}^2 -\widehat{R}_{\epsilon,1}^2\right|\right]\\
		\stackrel{\eqref{eq:CauthySchwarzIneq4}}{\leq}& \E\left[\left(\widehat{R}_{\epsilon, m,1} -\widehat{R}_{\epsilon,1}\right)^2\right] + 2 \left(\E\left[\widehat{R}_{\epsilon,1}^2\right]\right)^{1/2}\left(\E\left[\left(\widehat{R}_{\epsilon, m,1} -\widehat{R}_{\epsilon,1}\right)^2\right]\right)^{1/2}
	\end{align*}
	Note that by construction $\geps'(x)$ is continuous and $|\geps''(x)|\leq\frac{1}{4\pi\epsilon^2}$.
	Therefore,
	\begin{align*}
		&\E\left[\left(\widehat{R}_{\epsilon, m} - \widehat{R}_{\epsilon}\right)^2\right] \stackrel{\eqref{eq:CauthySchwarzIneq2}}{\leq} \E\left[\left| \left[\geps'(L_m)- \geps'(L) \right] \hatH \right|^2\right] =\E\left[\left|\geps''(\Lambda_m)(L_m-L) \hatH \right|^2\right]\\
		 \leq& \frac{1}{16\pi\epsilon^4}\E\left[\left|\left(L_m - L\right) \hatH \right|^2\right] \stackrel{\eqref{eq:CauthySchwarzIneq3}}{\leq} \frac{1}{16\pi\epsilon^4}\left(\E\left[\left(L_m - L\right)^4\right]\right)^{1/2} \left(\E\left[\hatH^4\right]\right)^{1/2}\\
		  \stackrel{(*)}{=}& \frac{1}{16\pi\epsilon^4}\left(\cO\left(\frac{1}{m^2}\right)\right)^{1/2} \left(\E\left[\hatH^4\right]\right)^{1/2} = \cO\left(\frac{1}{m\epsilon^4}\right),
	\end{align*}
	where $(*)$ holds by Theorem~\ref{thm:Lm_moment2p} with $p=2$ and $\E[\hatH^4]<\infty$ by assumption.
	Also, we can show that $\E\left[\widehat{R}_{\epsilon,1}^2\right]=\cO(\epsilon^{-1})$ by a similar analysis as~\eqref{ind CLT Lambda}.
	As a result, $\E[|A_1|]= \cO\left((m\epsilon^5)^{-1/2}\right)\to 0$ because $m\epsilon^5\to\infty$ by assumption.
	This means that $A_1\conlone 0$ and so $A_1 \conprob 0$.
	
	\item Proving $A_2\conprob 0$. Note that
	\begin{align*}
		\E[|A_2|] \stackrel{\eqref{eq:CauthySchwarzIneq2}}{\leq}& \E\left[\left|\widehat{R}_{\epsilon,1}^2 - R_{\epsilon,1}^2\right|\right]\nonumber\\
		\stackrel{\eqref{eq:CauthySchwarzIneq4}}{\leq}& \E\left[\left(\widehat{R}_{\epsilon,1} -R_{\epsilon,1}\right)^2\right] + 2 \left(\E\left[R_{\epsilon,1}^2\right]\right)^{1/2}\left(\E\left[\left(\widehat{R}_{\epsilon,1} -R_{\epsilon,1}\right)^2\right]\right)^{1/2} .
	\end{align*}
	As $\geps(x)$ is a smooth function, we can use the same arguments for~\eqref{eq:aux8} to deduce that $\E\left[\left(\widehat{R}_{\epsilon} - R_{\epsilon}\right)^2\right]\leq \frac{1}{n}\cdot\E\left[\left(\geps'(L) \hatH \right)^2\right] \stackrel{(*)}{=} \cO\left((n\epsilon)^{-1}\right)$,
	where $(*)$ holds because we can show $\E\left[\left(\geps'(L) \hatH \right)^2\right]=\cO\left(\epsilon^{-1}\right)$ by a similar analysis as~\eqref{ind CLT Lambda}.
	So $\E\left[|A_2|\right] = \cO\left((n\epsilon^2)^{-1/2}\right) \to 0$ because $n\epsilon^2\to \infty$ by assumption.
	This means that $A_2\conlone 0$ and so $A_2 \conprob 0$.
	
	\item Proving $A_3\conprob 0$. Note that
		\begin{align*}
		&\E[|A_3|] \stackrel{\eqref{eq:CauthySchwarzIneq2}}{\leq} \E\left[\left|R_{\epsilon,1}^2 - R_{1}^2\right|\right]\stackrel{\eqref{eq:CauthySchwarzIneq4}}{\leq} \E\left[\left(R_{\epsilon,1} -R_{1}\right)^2\right] + 2 \left(\E\left[R_{1}^2\right]\right)^{1/2}\left(\E\left[\left(R_{\epsilon,1} -R_{1}\right)^2\right]\right)^{1/2} .\label{IndCI sig23}
	\end{align*}
By Assumption~\ref{assm:indicatordensity}~\ref{assum:boundedmoments}, $\E\left[R^2\right] = \E\left[\left(\int \hatH(x,Y)\psi_0(x)\d x\right)^2\right]<\infty$.
Moreover,
\begin{align*}
	R_{\epsilon,1} &= \E[g'_{\epsilon}(L)\hatH_{11}|Y_1] = \int \frac{1}{\epsilon}\phi\left(\frac{\ell}{\epsilon}\right) \hatH(x,Y_1)\psi(x,\ell) \d x \d \ell=\int \phi\left(u\right) \hatH(x,Y_1)\psi(x,\epsilon u) \d x \d u\\
	&= \int \phi\left(u\right) \hatH(x,Y_1)\left[\psi_0(x) + \epsilon u \frac{\partial}{\partial \ell} \psi(x,\bar{u})\right]\d x \d u\\
	&= R_1 + \int \phi\left(u\right) \hatH(x,Y_1)\left[\epsilon u \frac{\partial}{\partial \ell} \psi(x,\bar{u})\right]\d x \d u
\end{align*}
Also, because $|\frac{\partial}{\partial \ell} \psi(x,\ell)| \leq \psi_1(x)$ by Assumption~\ref{assm:indicatordensity}~\ref{assum:existence}, we have
\begin{align*}
	\E\left[(R_{\epsilon,1}-R_1)^2\right] &\leq \E\left[\left(\int \phi\left(u\right) \hatH(x,Y_1)\left[\epsilon |u| \frac{\partial}{\partial \ell} \psi(x,\bar{u})\right]\d x \d u\right)^2\right]\\
	&= \epsilon^2 \left(\int \phi(u)|u|\d u\right)^2 \E\left[ \left(\int \hatH(x,Y_1) \frac{\partial}{\partial \ell} \psi(x,\bar{u})\d x\right)^2\right] = \cO(\epsilon^2)
\end{align*}
Therefore $\E[|A_3|]\to 0$ as $\epsilon\to 0$. This means that $A_3\conlone 0$ and so $A_3 \conprob 0$.
	
	\item Proving $A_4\conprob 0$. Since $R_j^2$, $j=1,\ldots,m$, are i.i.d. samples with the common expectation $\E\left[R^2\right]$, so $A_4\conprob 0$ by the weak law of large numbers.
\end{enumerate}




The proof is complete.
\end{proof}

\end{document}